\tikzset{snake it/.style={decorate, decoration=snake}}
\newtheorem{theorem}{Theorem}
\newaliascnt{claim}{theorem}
\newtheorem{proposition}[theorem]{Proposition}
\newtheorem{lemma}[theorem]{Lemma}
\newtheorem{corollary}[theorem]{Corollary}
\newtheorem{claim}[claim]{Claim}
\newtheorem{foucClaim}[claim]{Claim}
\newtheorem{observation}[theorem]{Observation}
\newcommand{\fullversion}[1]{#1}
\newcommand{\confversion}[1]{}
\newcommand{\appendixproof}[2]{
#2
}
\newcommand{\appendixDefinition}[1]{
#1
}
\newcommand{\COL}{\textsc{Coloring}\xspace}
\newcommand{\PBG}{\textsc{Grundy Coloring}\xspace}
\newcommand{\wGC}{\textsc{Weak Grundy Coloring}\xspace}
\newcommand{\cGC}{\textsc{Connected Grundy Coloring}\xspace}
\newcommand{\G}{\Gamma}
\newcommand{\cG}{\Gamma_c}
\newcommand{\Sat}{\textsc{SAT}\xspace}
\newcommand{\mc}{\mathcal}
\newcommand{\np}{\mathsf{NP}}
\newcommand{\paranp}{\mathsf{para}\textsc{-}\mathsf{NP}}
\newcommand{\fpt}{\mathsf{FPT}}
\newcommand{\xp}{\mathsf{XP}}
\newcommand{\wone}{\mathsf{W[1]}}
\renewcommand\leq\leqslant
\renewcommand\geq\geqslant
\newlength{\atextwidth}
\newcommand{\problemdec}[3]{
  \vspace{1mm}
\noindent\fbox{
  \begin{minipage}{\atextwidth}
  \begin{tabular*}{\textwidth}{@{\extracolsep{\fill}}lr} #1 \\ \end{tabular*}
  {\bf{Input:}} #2  \\
  {\bf{Question:}} #3
  \end{minipage}
  }
  \vspace{1mm}
}
\begin{document}

\title{Complexity of Grundy coloring and its variants}

\author{\'Edouard Bonnet\footnote{\noindent LAMSADE - CNRS UMR 7243, PSL, Universit\'e Paris-Dauphine, France. \{edouard.bonnet,eunjung.kim,florian.sikora\}@dauphine.fr}
\and Florent Foucaud\footnote{\noindent LIMOS - CNRS UMR 6158, Universit\'e Blaise Pascal, France. florent.foucaud@gmail.com}~~\footnotemark[1]
\and Eun Jung Kim\footnotemark[1]
\and Florian Sikora\footnotemark[1]
}

\maketitle

\begin{abstract}
The Grundy number of a graph is the maximum number of colors used by the greedy coloring algorithm over all vertex orderings. In this paper, we study the computational complexity of \PBG, the problem of determining whether a given graph has Grundy number at least~$k$. We also study the variants \wGC (where the coloring is not necessarily proper) and \cGC (where at each step of the greedy coloring algorithm, the subgraph induced by the colored vertices must be connected).

We show that \PBG can be solved in time $O^*(2.443^n)$ and \wGC in time $O^*(2.716^n)$ on graphs of order~$n$. While \PBG and \wGC are known to be solvable in time $O^*(2^{O(wk)})$ for graphs of treewidth~$w$ (where $k$ is the number of colors), we prove that under the Exponential Time Hypothesis (ETH), they cannot be solved in time $O^*(2^{o(w\log w)})$. We also describe an $O^*(2^{2^{O(k)}})$ algorithm for \wGC, which is therefore $\fpt$ for the parameter $k$. Moreover, under the ETH, we prove that such a running time is essentially optimal (this lower bound also holds for \PBG). Although we do not know whether \PBG is in $\fpt$, we show that this is the case for graphs belonging to a number of standard graph classes including chordal graphs, claw-free graphs, and graphs excluding a fixed minor. We also describe a quasi-polynomial time algorithm for \PBG and \wGC on apex-minor graphs. In stark contrast with the two other problems, we show that \cGC is $\np$-complete already for $k=7$ colors. 
%
%
\end{abstract}

\section{Introduction}

A \emph{$k$-coloring} of a graph $G$ is a surjective mapping $\varphi : V(G) \to
\{1,\ldots,k\}$ (we say that vertex $v$ is colored with $\varphi(v)$). A $k$-coloring $\varphi$  is \emph{proper} if any two adjacent vertices receive different
colors in $\varphi$. 
The \emph{chromatic number} $\chi(G)$ of $G$ is the smallest $k$ such that $G$ has a proper $k$-coloring. Determining the chromatic number of a graph is one of the most fundamental problems in graph theory.
Given a graph $G$ and an ordering $\sigma=v_1,\ldots,v_n$ of $V(G)$, the \emph{first-fit coloring algorithm} colors vertex $v_i$ with the smallest color of $\{v_1,\ldots,v_{i-1}\}$ that is not present among the set of neighbors of $v_i$. The
\emph{Grundy number} $\Gamma(G)$ is the largest $k$ such that $G$ admits a vertex ordering on which the first-fit algorithm yields a proper $k$-coloring. First-fit is presumably the simplest heuristic to compute a proper coloring of a graph. In this sense, the Grundy number gives an algorithmic upper bound on the performance of any heuristic for the chromatic number. This notion was first studied
by Grundy in 1939 in the context of digraphs and
games~\fullversion{\cite{Berge1973,Grundy39}}\confversion{\cite{Grundy39}}, and formally introduced 40 years later by Christen and Selkow~\cite{Christen}.
\fullversion{It was independently defined under the name \emph{ochromatic number} by Simmons~\cite{Simmons} (the two concepts were proved to be equivalent in~\cite{Erdos1987}).}
Many works have studied the first-fit algorithm in connection with on-line coloring algorithms, see for example~\cite{FirstFitInterval}.
A natural relaxation of this concept is the \emph{weak Grundy number}, introduced by Kierstead and Saoub~\cite{Kierstead2011}, where the obtained coloring is not asked to be proper. A more restricted concept is the one of \emph{connected Grundy number}, introduced by Benevides et al.~\cite{ConnectedGrundy}, where the algorithm is given an additional ``local'' restriction on the feasible vertex orderings that can be considered: at each step of the first-fit algorithm, the subgraph induced by the colored vertices must be connected.

The goal of this paper is to advance the study of the computational complexity of determining the Grundy number, the weak Grundy number and the connected Grundy number of a graph.

Let us introduce the problems formally. 
Let $G$ be a graph and let $\sigma=v_1,\ldots,v_n$ be an ordering of $V(G)$. 
A (not necessarily proper) $k$-coloring $\varphi : V(G) \to \{1,\ldots,k\}$ of $G$ is a \emph{first-fit coloring with respect to $\sigma$} if for every vertex $v_i$ and every color $c$ with $c<\varphi(v_i)$, $v_i$ has a neighbor $v_j$ with $\varphi(v_j)=c$ for some $j<i$. 
In particular, $\varphi(v_1)=1$. 
A vertex ordering $\sigma=v_1,\ldots,v_n$ is \emph{connected} if for every $i$, $1\leq i\leq n$, the subgraph induced by $\{v_1,\ldots, v_i\}$ is connected. 
A $k$-coloring $\varphi : V(G) \to \{1,\ldots,k\}$ is called a (i) \emph{weak Grundy}, (ii) {\em Grundy}, (iii) {\em connected Grundy coloring} of $G$, respectively, if it is a first-fit coloring with respect to some vertex ordering $\sigma$ such that (i) $\varphi$ and $\sigma$ has no restriction, (ii) $\varphi$ is proper, (iii) $\varphi$ is proper and $\sigma$ is connected, respectively.

The maximum number of colors used in a (weak, connected, respectively) Grundy coloring is called the ({\em weak, connected}, respectively) Grundy number and is denoted $\G(G)$ ($\G'(G)$ and $\cG(G)$, respectively). 
In this paper, we study the complexity of computing these invariants.

%
%
%

\problemdec{\PBG}{A graph $G$, an integer $k$.}{Do we have $\G(G)\geq k$?}

\confversion{The problems \wGC and \cGC are defined analogously.}
 
\fullversion{
\problemdec{\wGC}{A graph $G$, an integer $k$.}{Do we have $\G'(G)\geq k$?}

\problemdec{\cGC}{A graph $G$, an integer $k$.}{Do we have $\cG(G)\geq k$?}
}


Note that $\chi(G) \leqslant \G(G) \leqslant \Delta(G) + 1$, where $\chi(G)$ is the chromatic number and $\Delta(G)$ is the maximum degree of $G$. However, the difference $\G(G)-\chi(G)$ can be (arbitrarily) large, even for bipartite graphs. For example, the Grundy number of the tree of Figure~\ref{fig:tk} is~$4$, whereas its chromatic number is~$2$. Note that this is not the case for $\cG$ for bipartite graphs, since $\cG(G)\leq 2$ for any bipartite graph $G$~\cite{ConnectedGrundy}. However, the difference $\cG(G)-\chi(G)$ can be (arbitrarily) large even for planar graphs~\cite{ConnectedGrundy}.

\sloppy\paragraph{Previous results.} \PBG remains $\np$-complete on bipartite graphs~\cite{HSj} and their complements~\cite{Zaker-cobip} (and hence claw-free graphs and $P_5$-free graphs), on chordal graphs~\cite{thesis}, and on line graphs~\cite{HMY12}. Certain graph classes admit polynomial-time algorithms. There is a linear-time algorithm for \PBG on trees~\cite{HHB82}. This result was extended to graphs of bounded treewidth by Telle and Proskurowski~\cite{TP97}, who proposed a dynamic programming algorithm running in time $k^{O(w)}2^{O(wk)}n=O(n^{3w^2})$ for graphs of treewidth~$w$ (in other words, their algorithm is in $\fpt$ for parameter $k+w$ and in $\xp$ for parameter $w$).\footnote{The first running time is not explicitly stated in~\cite{TP97} but follows from their meta-theorem. The second one is deduced by the authors of~\cite{TP97} from the first one by upper-bounding $k$ by $w \log_2 n + 1$.} A polynomial-time algorithm for \PBG on $P_4$-laden graphs, which contains all cographs as a subfamily, was given in~\cite{AL12}.

Note that \PBG admits a polynomial-time algorithm when the number~$k$ of colors is fixed~\cite{Zaker}, in other words, it is in $\xp$ for parameter~$k$. 

\PBG has polynomial-time constant-factor approximation algorithms for inputs that are interval graphs~\cite{GL88,FirstFitInterval}, complements of chordal graphs~\cite{GL88}, complements of bipartite graphs~\cite{GL88} and bounded tolerance graphs~\cite{Kierstead2011}. However, there is a constant $c>1$ such that approximating \PBG within $c$ in polynomial time is impossible unless $\np$ $\subseteq$ $\mathsf{RP}$~\cite{Kortsarz} (a result extended to chordal graphs under the assumption $\mathsf{P}\neq\np$ in the unpublished manuscript~\cite{GV97}). It is not known whether a polynomial-time $o(n)$-factor approximation algorithm exists.

When parameterized by the graph's order minus the number of colors, \PBG was shown to be in $\fpt$ by Havet and Sempaio~\cite{HSj}.

\wGC was not studied as much as \PBG, but many results proved for \PBG also hold for \wGC. \wGC was shown to be $\np$-hard to approximate within some constant factor $c>1$, even on chordal graphs~\cite{GV97}. Furthermore, in~\cite{TP97} an algorithm for \wGC running in time $2^{O(wk)}n=O(n^{3w^2})$ for graphs of treewidth~$w$ was given (in~\cite{TP97}, \wGC was called \textsc{Iterated Dominating Set}).

\cGC was introduced by Benevides \emph{et al.}~\cite{ConnectedGrundy}, who proved it to be $\np$-complete, even for chordal graphs and for co-bipartite graphs. 

\paragraph{Our results.} We give two exact algorithms for \PBG and \wGC running in time $O^*(2.443^n)$ and $O^*(2.716^n)$, respectively. It was previously unknown if any $O^*(c^n)$-time algorithms exist for these problems (with $c$ a constant).
Denoting by $w$ the treewidth of the input graph, it is not clear whether the $O^*(2^{O(wk)})$-time algorithms for \PBG and \wGC of~\cite{TP97} can be improved, for example to algorithms of running time $O^*(k^{O(w)})$ or $O^*(f(w))$ (the notation $O^*$ neglects polynomial factors). In fact we show that an $O^*(k^{O(w)})$-time algorithm for \PBG would also have running time $O^*(2^{O(w\log w)})$.

 As a lower bound, we show that assuming the Exponential Time Hypothesis (ETH), an $O^*(2^{o(w\log w)})$-time algorithm for \PBG or \wGC does not exist (where $w$ is the feedback vertex set number of the input graph). In particular, the exponent $n$ cannot be replaced by the feedback vertex set number (or treewidth) in our $O^*(2.443^n)$-and $O^*(2.716^n)$-time algorithms.

We prove that on apex-minor-free graphs, quasi-polynomial time algorithms, of running time $n^{O(\log^2n)}$, exists for \PBG and \wGC.

We also show that \wGC can be solved in $\fpt$ time $O^*(2^{2^{O(k)}})$ using the color coding technique. Under the ETH, we show that this is essentially optimal: no $O^*(2^{2^{o(k)}}2^{o(n+m)})$-time algorithm for graphs with $n$ vertices and $m$ edges exists. The latter lower bound also holds for \PBG.

We also study the parameterized complexity of \PBG parameterized by the number of colors, showing that it is in $\fpt$ for graphs including chordal graphs, claw-free graphs, and graphs excluding a fixed minor.

Finally, we show that \cGC is computationally much harder than \PBG and \wGC when viewed through the lens of parameterized complexity. While for the parameter ``number of colors'', \PBG is in $\xp$ and \wGC is in $\fpt$, we show that \cGC is $\np$-complete even when $k=7$, that is, it does not belong to $\xp$ unless $\mathsf{P}=\np$. Note that the known $\np$-hardness proof of~\cite{ConnectedGrundy} for \cGC was only for an unbounded number of colors.

\paragraph{Structure of the paper.} We start with some preliminary definitions, observations and lemmas in Section~\ref{prelim}. Our positive algorithmic results are presented in Section~\ref{sec:positive}, and our algorithmic lower bounds are presented in Section~\ref{sec:negative}. We conclude the paper in Section~\ref{sec:conclu}.


\section{Preliminaries}\label{prelim}


\appendixDefinition
{
\noindent\paragraph{Graphs and sets.}
For any two integers $x<y$, we set $[x,y]:=\{x,x+1,\ldots,y-1,y\}$, and for any positive integer $x$, $[x]:=[1,x]$.
$V(G)$ denotes the set of vertices of a graph $G$ and $E(G)$ its set of edges.
For any $S \subseteq V(G)$, $E(S)$ denotes the subset of edges of $E(G)$ having both endpoints in $S$, and $G[S]$ denotes the subgraph of $G$ induced by $S$; that is, graph $(S,E(S))$.
If $H \subseteq V(G)$, $G-H$ denotes the graph $G[V(G) \setminus H]$.
As a slight abuse of notation, if $H$ is an (induced) subgraph of $G$, we also denote by $G-H$ the graph $G[V(G) \setminus V(H)]$.
For any vertex $v \in V(G)$, $N(v):=\{w \in V(G) | vw \in E(G)\}$ denotes the set of neighbors of $v$ in $G$.
For any subset $S \subseteq V(G)$, $N(S)=\bigcup_{v \in S} N(v) \setminus S$.
The \emph{distance-$k$} neighborhood of $v$ is the set of vertices at distance at most $k$ from $v$.

\paragraph{Computational complexity.}
A decision problem  is said to be \textit{fixed-parameter tractable} (or in the class $\fpt$) w.r.t. parameter $k$ if it can be solved in time $f(k)\cdot|I|^c$ for an instance $I$, where $f$ is a computable function and $c$ is a constant (see for example the books~\cite{DF99,Nie06} for details).
The class $\xp$ contains those problems solvable in time $|I|^{f(k)}$, where $f$ is a computable function.

The \emph{Exponential Time Hypothesis} (ETH) is a conjecture by Impagliazzo et al. asserting that there is no $2^{o(n)}$-time algorithm for \textsc{3-SAT} on instances with $n$ variables \cite{ImpagliazzoETH}. 
The ETH, together with the sparsification lemma \cite{ImpagliazzoETH}, even implies that there is no $2^{o(n+m)}$-time algorithm solving \textsc{3-SAT}.
Many algorithmic lower bounds have been proved under the ETH, see for example~\cite{LokshtanovSODA}.


\paragraph{Minors.} A \emph{minor} of a graph $G$ is a graph that can be obtained from $G$ by (i)~deletion of vertices or edges (ii)~contraction of edges (removing an edge and merging its endpoints into one). 
Given a graph $H$, a graph $G$ is \emph{$H$-minor-free} if $H$ is not a minor of $G$.

An \emph{apex graph} is a graph obtained from a planar graph $G$ and a single vertex $v$, and by adding arbitrary edges between $v$ and $G$. 
A graph is said to be \emph{apex-minor-free} if it is $H$-minor-free for some apex graph $H$.

\paragraph{Tree-decompositions.} A \emph{tree-decomposition} of a graph $G$ is a pair $(\mc{T},\mc{X})$, where $\mc{T}$ is a tree and $\mc{X}:=\{X_t:t\in V(\mc{T})\}$ is a collection of subsets of $V(G)$ (called \emph{bags}), and they must satisfy the following conditions: (i)~$\bigcup_{X\in V(T)}=V(G)$, (ii)~for every edge $uv\in E(G)$, there is a bag of $\mc{T}$ that contains both $u$ and $v$, and (iii)~for every vertex $v\in V(G)$, the set of bags containing $v$ induces a connected subtree of $\mc{T}$.

The maximum size of a bag $X_t$ over all tree nodes $t$ of $\mc{T}$ minus one is called the \emph{width} of $\mc{T}$. 
The minimum width of a tree-decomposition of $G$ is the \emph{treewidth} of $G$. 
The notion of tree-decomposition has been used extensively in algorithm design, especially via dynamic programming on the tree-decomposition.


}
\confversion{ We defer many (classic) technical definitions to the full version~\cite{Bonnet2014}, and only give the ones related to Grundy colorings.}\fullversion{
\noindent\paragraph{Grundy coloring.}} Let $G$ be a graph, $\sigma=v_1,\ldots,v_n$ an ordering of its vertices, $\varphi: V(G) \rightarrow [k]$ the proper first-fit coloring of $G$ with respect to $\sigma$, and $j$ the smallest index such that $\varphi(v_j)=k$.
Informally, finishing the Grundy coloring of $v_{j+1},\ldots,v_n$ is irrelevant in asserting that $\G(G) \geqslant k$, for this fact is established as soon as we color vertex $v_j$.
We formalize this idea that a potentially much smaller induced subgraph of the input graph (here, $G[\{v_1, \ldots, v_j\}]$) might be a relevant certificate, via the notion of \emph{witnesses} and \emph{minimal witnesses}\fullversion{\footnote{Witnesses were called \emph{atoms} by Zaker~\cite{Zaker}.}}.   

In a graph $G$, a \emph{witness achieving color $k$}, or simply a \emph{$k$-witness}, is an induced subgraph $G'$ of $G$, such that $\G(G') \geqslant k$.
Such a $k$-witness is \emph{minimal} if no proper induced subgraph of $G'$ has Grundy number at least $k$.

\begin{observation}\label{obs:minimal}
For any graph $G$,  $\G(G)\geq k$ if and only if $G$ admits a minimal $k$-witness.
\end{observation}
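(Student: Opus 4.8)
The plan is to prove the two implications of Observation~\ref{obs:minimal} directly; the only ingredient beyond unwinding definitions is the monotonicity of the Grundy number under taking induced subgraphs, which I would isolate as an auxiliary claim.

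\emph{Forward implication.} Suppose $\G(G)\geq k$. Then $G$ is an induced subgraph of itself with $\G(G)\geq k$, so $G$ is a $k$-witness and the family of $k$-witnesses of $G$ is nonempty. Since $G$ is finite, this family contains an element $G'$ with the fewest vertices; equivalently, $G'$ is inclusion-minimal among induced subgraphs of Grundy number at least~$k$. By definition $G'$ is a minimal $k$-witness.

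\emph{Backward implication.} Suppose $G'$ is a minimal $k$-witness, i.e.\ an induced subgraph of $G$ with $\G(G')\geq k$ (minimality will play no role here; any $k$-witness suffices). I would first prove the claim that $\G(H)\leq \G(G)$ whenever $H$ is an induced subgraph of $G$. Fix an ordering $\sigma'$ of $V(H)$ on which first-fit produces $\G(H)$ colors, and let $\sigma$ be any ordering of $V(G)$ whose first $|V(H)|$ vertices are those of $\sigma'$ in the same relative order, followed by the vertices of $V(G)\setminus V(H)$ in arbitrary order. When first-fit runs on $(G,\sigma)$, the color it assigns to a vertex $v\in V(H)$ is the smallest color not used by the neighbors of $v$ in $G$ that precede $v$ in $\sigma$; because $V(H)$ forms a prefix of $\sigma$ and $H$ is \emph{induced}, these are precisely the neighbors of $v$ in $H$ that precede $v$ in $\sigma'$. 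A straightforward induction along $\sigma'$ then shows that the first-fit coloring of $(G,\sigma)$ restricted to $V(H)$ is exactly the first-fit coloring of $(H,\sigma')$, so it uses at least $\G(H)$ colors; since first-fit always outputs a proper coloring, $\G(G)\geq \G(H)$. Applying the claim with $H=G'$ yields $\G(G)\geq \G(G')\geq k$.

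The argument is essentially routine. The one place that needs a little care is the auxiliary claim: one must check that placing $V(H)$ at the front of the vertex ordering of $G$ makes the greedy process on $G$ agree, on $V(H)$, with the greedy process on $H$, and this is exactly where the hypothesis that $H$ is an induced (not merely arbitrary) subgraph is used. I do not foresee any genuine obstacle.
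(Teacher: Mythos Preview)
Your proposal is correct. The paper states this as an observation without proof, so there is no ``paper's own proof'' to compare against; your argument supplies exactly the routine details the paper left implicit, and the auxiliary monotonicity claim (extending a Grundy ordering of an induced subgraph to the whole graph by putting the subgraph first) is the natural way to fill the gap.
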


We can also notice that, in any Grundy $k$-coloring (that is, Grundy coloring achieving color $k$) of a minimal $k$-witness, exactly one vertex is colored with $k$.
Otherwise, it would contradict the minimality.

If $k$ is not specified, we assume that the witness achieves the largest possible color: a (minimal) \emph{witness} is a (minimal) witness achieving color $\G(G)$.
A \emph{colored} (minimal) ($k$-)witness is a (minimal) ($k$-)witness together with a Grundy $k$-coloring of its vertices, that can be given equivalently by the coloring function $\varphi$, or the ordering $\sigma$, or a partition $W_1 \uplus \ldots \uplus W_k$ of the vertices into color classes (namely, the vertices of $W_i$ are colored with $i$).

We will now observe that minimal $k$-witnesses have at most $2^{k-1}$ vertices.
To that end, we define a family of rooted trees sometimes called \emph{binomial trees}.
If, for each $i \in [l]$, $t_i$ is a tree rooted at $v_i$, $v[t_1,\ldots,t_l]$ denotes the tree rooted at $v$ obtained by adding $v$ to the disjoint union of the $t_i$'s and linking it to all the $v_i$'s. 
Then, the \emph{$i$-th child} of $v$ is $v_i$ and is denoted by $v(i)$.
We say that $v$ is the \emph{parent} of $v_i$.
We may also say that $v$ is the \emph{parent} of the tree $t_i$.
The set of binomial trees $(T_k)_{k \geqslant 1}$ is a family of rooted trees defined as follows (see Figure~\ref{fig:tk} for an illustration):
\begin{itemize}
\item $T_1$ consists only of one vertex (incidentally the root), and
\item $\forall k \geqslant 1$, $T_{k+1}=v[T_1,T_2,\ldots,T_k]$.
\end{itemize}

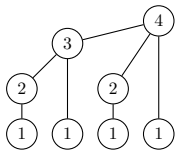
\begin{figure}[h!]
\centering
\begin{tikzpicture}[scale=0.6,transform shape]
\node[draw,circle] (r4) at (1,-0.5) {4};
\node[draw,circle] (r3) at (-1,-1) {3};
\node[draw,circle] (r32) at (-2,-2) {2};
\node[draw,circle] (r321) at (-2,-3) {1};
\node[draw,circle] (r31) at (-1,-3) {1};
\node[draw,circle] (r2) at (0,-2) {2};
\node[draw,circle] (r21) at (0,-3) {1};
\node[draw,circle] (r1) at (1,-3) {1};

\draw (r4) -- (r3);
\draw (r4) -- (r2);
\draw (r4) -- (r1);
\draw (r3) -- (r32);
\draw (r3) -- (r31);
\draw (r32) -- (r321);
\draw (r2) -- (r21);

\end{tikzpicture}
\caption{The binomial tree $T_4$, where numbers denote the color of each vertex in a first-fit proper coloring with largest number of colors.}\label{fig:tk}
\end{figure}

The binomial tree $T_k$ can be seen as the dependencies between the vertices of a minimal $k$-witness colored by a coloring $\varphi$.
More concretely, any vertex colored with color $i \leqslant k$ needs to have in its neighborhood $i-1$ vertices colored with each color from $1$ to $i-1$.
Say, we label the root of $T_k$ with the unique vertex colored~$k$. 
And then, in a top-town manner, we label, for each $j \in [\varphi(v)-1]$, the $j$-th child of a vertex labeled by $v$, by a neighbor of $v$ colored with $j$.
Each vertex of the minimal $k$-witness should appear at least once as a label of $T_k$, for the sake of minimality.
Besides, the number of vertex of $T_k$ is $2^{k-1}$.
This leads to the following observations:

\begin{observation}\label{obs:basic-props}
A minimal $k$-witness $W$ has radius at most $k$.
More precisely, $W$ is entirely included in the distance-$k$ neighborhood of the vertex colored with $k$ in a Grundy $k$-coloring of $W$.
\end{observation}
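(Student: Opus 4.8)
The plan is to make rigorous the binomial-tree argument sketched in the paragraphs preceding the statement. I would fix a minimal $k$-witness $W$ together with a Grundy $k$-coloring $\varphi$, realized by an ordering $\sigma$, and let $x$ be the unique vertex with $\varphi(x)=k$. The strategy is to build a labeling $\ell$ of the nodes of $T_k$ by vertices of $W$ that respects both the tree shape and the coloring, then to use minimality to show that every vertex of $W$ occurs as a label, and finally to read off the radius bound from the height of $T_k$.

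First I would construct $\ell$ top-down. Set $\ell(\text{root})=x$. Whenever a node $t$ of $T_k$ that is the root of a copy of $T_m$ has already received a vertex $\ell(t)=v$ with $\varphi(v)=m$, its $j$-th child (for $1\le j\le m-1$) is the root of a copy of $T_j$; since $\varphi$ is a first-fit coloring with respect to $\sigma$ and $\varphi(v)=m>j$, the vertex $v$ has a neighbor colored $j$ that precedes it in $\sigma$, and I set $\ell$ of that $j$-th child to be such a neighbor. An easy induction then shows this is well defined on all of $T_k$: a node that is the root of a copy of $T_m$ always receives a vertex of color $m$, so the shape of $T_k$ is compatible with the color constraints, along every parent--child pair the two labels are adjacent in $W$, and (looking at the children of the root) all of the colors $1,\dots,k$ occur among the labels.

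Next I would let $W'$ be the subgraph of $W$ induced by the set of vertices occurring as labels and restrict $\varphi$ and $\sigma$ to $W'$. I claim $\varphi|_{W'}$ is again a Grundy $k$-coloring of $W'$: it is proper (restriction of a proper coloring) and surjective onto $[k]$ by the previous paragraph, so only the first-fit condition needs checking, and for $v\in V(W')$ with $\varphi(v)=i$ and $c<i$ one picks any node $t$ with $\ell(t)=v$ --- it is the root of a copy of $T_i$, so its $c$-th child carries a neighbor of $v$ colored $c$ that precedes $v$ in $\sigma$, and this neighbor lies in $W'$. Hence $\Gamma(W')\ge k$, so $W'$ is a $k$-witness, and minimality of $W$ forces $W'=W$; that is, every vertex of $W$ is a label.

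Finally, an induction using the recursion $T_{k+1}=v[T_1,\dots,T_k]$ shows that $T_k$ has height $k-1$, so for every node $t$ the root-to-$t$ path has length at most $k-1$; since consecutive labels are adjacent in $W$, the vertex $x$ is within distance $k-1\le k$ of $\ell(t)$, and as every vertex of $W$ is some $\ell(t)$, this shows that $W$ lies inside the distance-$k$ neighborhood of $x$ (in fact within distance $k-1$, and incidentally $W$ is connected). The one genuinely delicate point is the minimality step: I must be careful that restricting $\varphi$ and $\sigma$ to the labeled vertices still yields a valid first-fit coloring, which is precisely why the neighbor chosen at each step of the labeling has to be taken to precede $v$ in $\sigma$, rather than being an arbitrary neighbor of the right color.
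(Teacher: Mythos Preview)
Your proposal is correct and follows exactly the approach the paper sketches in the paragraph preceding the observation (label the nodes of $T_k$ top-down by neighbors of the appropriate color, invoke minimality to conclude every vertex of $W$ appears as a label, and read off the distance bound from the height of $T_k$). The paper does not spell out a formal proof, so what you have written is precisely the rigorous version of their informal argument; your care in choosing each labeled neighbor to \emph{precede} its parent in $\sigma$, so that the restriction to $W'$ remains a first-fit coloring, is the right way to make the minimality step airtight, and your observation that the actual bound is $k-1$ rather than $k$ is a harmless sharpening.
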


\begin{observation}\label{obs:basic-props2}
A minimal $k$-witness has at most $2^{k-1}$ vertices.
\end{observation}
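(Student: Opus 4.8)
The plan is to turn the informal correspondence between a colored minimal $k$-witness and the binomial tree $T_k$ into an actual surjection. First I would record, by a one-line induction, that $|V(T_k)|=2^{k-1}$: we have $|V(T_1)|=1$, and from $T_{k+1}=v[T_1,\dots,T_k]$ we get $|V(T_{k+1})|=1+\sum_{i=1}^{k}|V(T_i)|=1+\sum_{i=1}^{k}2^{i-1}=2^{k}$. Hence it suffices to construct, for an arbitrary minimal $k$-witness $W$, a surjective map $\ell\colon V(T_k)\to V(W)$.

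Next, fix a Grundy $k$-coloring $\varphi$ of $W$ realized by a vertex ordering $\sigma$, and let $r$ be a vertex colored $k$ (in fact unique, by the remark after Observation~\ref{obs:minimal}, though only existence is needed here). Every node of $T_k$ is the root of a unique embedded copy of some binomial tree; write $c(\nu)$ for its index, so that $c(\cdot)=k$ at the root and, whenever $c(\nu)=c$, the $j$-th child of $\nu$ has $c(\cdot)=j$ for each $j\in[c-1]$. I would define $\ell$ top-down, maintaining the invariant $\varphi(\ell(\nu))=c(\nu)$: set $\ell(\text{root})=r$; and if a node $\nu$ has already been assigned a vertex $u=\ell(\nu)$ with $\varphi(u)=c(\nu)=c$, then for each $j\in[c-1]$ let $\ell$ send the $j$-th child of $\nu$ to some neighbor $w$ of $u$ with $\varphi(w)=j$ that precedes $u$ in $\sigma$. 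Such a $w$ exists precisely because $\varphi$ is a first-fit coloring with respect to $\sigma$, and the recursion is consistent because the root of $T_c$ has exactly $c-1$ children, matching the $c-1$ colors below $c$ that $u$ must see.

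It remains to show that $\ell$ is onto, which is the only real content. Let $L=\ell(V(T_k))\subseteq V(W)$ and consider $G[L]$ together with the restrictions of $\sigma$ and $\varphi$. The coloring $\varphi|_L$ is proper, being a restriction of a proper coloring, and it is first-fit with respect to $\sigma|_L$: given $u\in L$, pick a node $\nu$ with $\ell(\nu)=u$; by construction the children of $\nu$ are mapped by $\ell$ to neighbors of $u$ lying in $L$, colored $1,\dots,\varphi(u)-1$, and occurring before $u$ in $\sigma$, hence also before $u$ in $\sigma|_L$. Since $r\in L$ is colored $k$, this shows $\G(G[L])\geq k$, i.e. $G[L]$ is a $k$-witness; by minimality of $W$ we conclude $L=V(W)$, and therefore $|V(W)|=|L|\leq|V(T_k)|=2^{k-1}$. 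The point requiring care is the recursive definition of $\ell$ — in particular invoking the ``$w$ precedes $u$ in $\sigma$'' clause of the first-fit definition, and not merely the weaker statement that $u$ has a neighbor of each smaller color, since it is exactly this that makes $\varphi|_L$ first-fit (and not just proper) on $G[L]$.
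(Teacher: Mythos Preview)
Your proof is correct and follows essentially the same approach as the paper: the paper sketches the argument informally in the paragraph just before the observations, labeling the nodes of $T_k$ top-down by vertices of the witness and appealing to minimality for surjectivity. You have simply carried this out rigorously, in particular making explicit the ``precedes in $\sigma$'' clause needed to verify that $\varphi|_L$ is first-fit on $G[L]$, a point the paper leaves implicit.
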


\begin{observation}\label{obs:deg}
The color of a vertex of degree $d$ in any Grundy coloring is at most $d+1$.
\end{observation}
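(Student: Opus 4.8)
The plan is to simply unwind the definition of a first-fit coloring. Suppose $\varphi$ is a Grundy coloring of $G$ with respect to an ordering $\sigma$, and let $v$ be a vertex with $\varphi(v)=c$. By definition of first-fit coloring, for every color $c' \in \{1,\ldots,c-1\}$ there is a neighbor $v_j$ of $v$ (appearing before $v$ in $\sigma$) with $\varphi(v_j)=c'$. I would then observe that these $c-1$ neighbors are pairwise distinct, since they carry pairwise distinct colors $1,2,\ldots,c-1$. Hence $v$ has at least $c-1$ neighbors, i.e.\ $d=|N(v)| \geqslant c-1$, which rearranges to $c \leqslant d+1$.

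There is essentially no obstacle here: the statement is an immediate consequence of the definition, and the same one-line argument applies verbatim to weak Grundy colorings and connected Grundy colorings, since each of these is by definition a first-fit coloring with respect to some vertex ordering (the properness or connectivity constraints are not even needed). I would phrase the proof so that it is clear it holds for \emph{any} first-fit coloring, which is exactly why the observation is stated for ``any Grundy coloring''.
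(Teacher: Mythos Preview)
Your argument is correct and is exactly the intended one-line justification: the first-fit rule forces $c-1$ distinct earlier-colored neighbors of $v$, hence $d\geq c-1$. The paper states this as an observation without proof, so there is nothing further to compare.
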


By Observations~\ref{obs:minimal}~and~\ref{obs:basic-props2}, \PBG can be solved by checking if one of the ${n \choose 2^{k-1}}$ induced subgraphs on $2^{k-1}$ vertices, has Grundy number $k$. 
This shows that, parameterized by the number $k$ of colors, the problem is in $\xp$:

\begin{corollary}[\fullversion{Zaker }\cite{Zaker}]\label{cor:gr-xp}
\PBG can be solved in time $f(k)n^{2^{k-1}}$. 
\end{corollary}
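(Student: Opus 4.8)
The plan is to convert Observations~\ref{obs:minimal} and~\ref{obs:basic-props2} into a brute-force search over small induced subgraphs. The one ingredient that needs an (easy) argument rather than pure bookkeeping is the \emph{monotonicity} of the Grundy number under induced subgraphs: if $H$ is an induced subgraph of $G$, then $\G(H)\le \G(G)$. To see this, take a vertex ordering $\sigma_H$ of $V(H)$ on which first-fit produces a proper colouring with $\G(H)$ colours, and extend it to an ordering $\sigma$ of $V(G)$ by appending the vertices of $V(G)\setminus V(H)$ at the end in any order; each vertex of $H$ then has exactly the same set of already-coloured neighbours under $\sigma$ as under $\sigma_H$, so first-fit assigns it the same colour, and the resulting first-fit colouring of $G$ (which is automatically proper) still uses at least $\G(H)$ colours. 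Together with Observation~\ref{obs:minimal} this yields the clean characterisation: $\G(G)\ge k$ if and only if some induced subgraph of $G$ has Grundy number at least~$k$.

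The algorithm is then immediate. If $n<2^{k-1}$, test $G$ itself directly. Otherwise, enumerate all $\binom{n}{2^{k-1}}\le n^{2^{k-1}}$ subsets $S\subseteq V(G)$ with $|S|=2^{k-1}$ and, for each, decide whether $\G(G[S])\ge k$; answer ``yes'' exactly when one of these tests succeeds. Correctness follows from the characterisation above: if $\G(G)\ge k$, then by Observations~\ref{obs:minimal} and~\ref{obs:basic-props2} there is a minimal $k$-witness $W$ with $|V(W)|\le 2^{k-1}$; padding $V(W)$ with arbitrary further vertices to a set $S$ of size exactly $2^{k-1}$ and applying monotonicity gives $\G(G[S])\ge \G(W)\ge k$, so some tested subset passes. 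Conversely, if $\G(G[S])\ge k$ for some such $S$, then $\G(G)\ge k$ again by monotonicity. Each individual test is a finite computation whose size depends only on $k$: $G[S]$ has $2^{k-1}$ vertices, so one can simply try all at most $(2^{k-1})!$ vertex orderings, run first-fit on each in time polynomial in $2^{k-1}$, and check whether any uses at least $k$ colours. This costs $f(k):=(2^{k-1})!\cdot\mathrm{poly}(2^{k-1})$ per subset, for a total running time of $f(k)\,n^{2^{k-1}}$.

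The main ``obstacle'' is really just making sure the two directions of correctness are both covered, which is what forces the monotonicity observation; everything else is counting subsets and exhaustive search on an instance of bounded size. One should also not overlook the degenerate regime $n<2^{k-1}$, handled above by testing $G$ outright, so that the bound $\binom{n}{2^{k-1}}\le n^{2^{k-1}}$ is meaningful.
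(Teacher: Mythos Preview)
Your proof is correct and follows essentially the same approach as the paper, which simply invokes Observations~\ref{obs:minimal} and~\ref{obs:basic-props2} and says one checks all $\binom{n}{2^{k-1}}$ induced subgraphs on $2^{k-1}$ vertices. You have fleshed out details the paper leaves implicit: the monotonicity of $\G$ under induced subgraphs (which underlies Observation~\ref{obs:minimal} itself), the padding from $|V(W)|\le 2^{k-1}$ to exactly $2^{k-1}$, the explicit brute-force test on each $G[S]$, and the degenerate case $n<2^{k-1}$.
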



We now come back to binomial trees and show two lemmas that will be very helpful to prove the hardness results of the paper.

\begin{lemma}\label{lem:tree-coloring}
The Grundy number of $T_k$ is $k$. 
Moreover, there are exactly two Grundy colorings achieving color~$k$, and a unique Grundy coloring if we impose that the root $v$ is colored~$k$.
\end{lemma}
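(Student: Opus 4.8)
The plan is to prove separately that $\G(T_k)=k$ and the uniqueness/counting statement, both by induction on $k$, using as a recurring tool the elementary fact that, for a \emph{proper} coloring $\varphi$ of a graph $G$, $\varphi$ is a Grundy coloring (a first-fit coloring with respect to \emph{some} ordering) if and only if every vertex $u$ has, for each color $c<\varphi(u)$, a neighbor colored $c$: the order obtained by listing vertices by nondecreasing color (ties broken arbitrarily) witnesses this, since properness ensures that when a vertex of color class $t$ is reached, its already-colored neighbors realize precisely the colors $1,\dots,t-1$ and not $t$. For $\G(T_k)\ge k$, I construct $\varphi_k$ on $T_k=v[T_1,\dots,T_{k-1}]$ by coloring $v$ with $k$ and each subtree $T_i$ by a Grundy coloring $\varphi_i$ of $T_i$ with its root colored $i$ (available inductively): then $v$ sees colors $1,\dots,k-1$ on its children, and inside each $T_i$ the criterion survives because the single new edge incident to $T_i$ joins $v$, whose color exceeds all colors of $T_i$. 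For $\G(T_k)\le k$, I note (quick induction on the recursive structure) that $\Delta(T_k)=k-1$ for $k\ge 2$, the maximum degree being attained only at $v$ and at the root of the $T_{k-1}$-subtree, and invoke Observation~\ref{obs:deg}; together this gives $\G(T_k)=k$.

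For the counting statement I use strong induction on $k$, assuming the full lemma for all smaller binomial trees; $k=1$ is trivial and $k=2$ is immediate. Let $k\ge 3$, let $\varphi$ be a Grundy coloring of $T_k$ using all $k$ colors, and let $z$ be a vertex with $\varphi(z)=k$. By Observation~\ref{obs:deg}, $\deg(z)\ge k-1=\Delta(T_k)$, so $\deg(z)=k-1$; the only vertices of that degree are the root $v$ and the root $v'$ of the $T_{k-1}$-subtree, and since $v$ and $v'$ are adjacent no two vertices can both be colored $k$, so $z$ is the unique $k$-colored vertex and is either $v$ or $v'$.

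If $z=v$, then the $k-1$ children $v(1),\dots,v(k-1)$ of $v$ (the roots of $T_1,\dots,T_{k-1}$) carry the colors $1,\dots,k-1$ in some order. Since $\varphi(v(i))\le k-1<k$ and the only neighbor of $v(i)$ outside $T_i$ is $v$, which is colored $k$ and hence realizes none of the colors below $\varphi(v(i))$, the $\varphi(v(i))-1$ colors that $v(i)$ requires must occur among its $i-1$ children inside $T_i$; thus $\varphi(v(i))\le i$ for all $i$, and a bijection of $[k-1]$ with $\pi(i)\le i$ is the identity, so $\varphi(v(i))=i$. The same observation shows that $\varphi$ restricted to $T_i$ is a Grundy coloring of $T_i$, and it uses all of $1,\dots,i$ (these are forced to appear around $v(i)$), so by induction applied to $T_i$ it equals $\varphi_i$; hence $\varphi=\varphi_k$. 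If $z=v'$, let $v'(1),\dots,v'(k-2)$ be the children of $v'$ (roots of its subtrees $T_1,\dots,T_{k-2}$) and let $R$ be $T_k$ minus the $T_{k-1}$-subtree of $v'$, which is a copy of $T_{k-1}$ rooted at $v$. The $k-1$ neighbors of $v'$ are $v$ and $v'(1),\dots,v'(k-2)$ and carry colors $1,\dots,k-1$ in some order; the same degree argument at the $v'(j)$'s (whose external neighbor $v'$ has the too-large color $k$) gives $\varphi(v'(j))\le j$ for all $j$, forcing $\varphi(v'(j))=j$ and thus $\varphi(v)=k-1$. Consequently $\varphi$ restricted to each subtree $T_j$ of $v'$, and to $R$, is a Grundy coloring using all $j$, respectively $k-1$, colors with the root colored $j$, respectively $k-1$; by induction these restrictions are $\varphi_j$ and $\varphi_{k-1}$, so $\varphi$ is completely determined, equal to a single coloring $\psi_k$ whose root is colored $k-1$ and which one checks directly to be a valid Grundy $k$-coloring (its root sees $1,\dots,k-2$ on the roots of $T_1,\dots,T_{k-2}$, while $v'$ sees $1,\dots,k-1$). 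Hence $T_k$ has exactly the two Grundy colorings $\varphi_k$ and $\psi_k$ achieving color $k$, and $\varphi_k$ is the unique one with the root colored $k$, completing the induction. The part I expect to need the most care is the case $z=v'$: unlike $z=v$, restricting $\varphi$ to the subtree rooted at $v'$ need not be a Grundy coloring of $T_{k-1}$ (the color of $v$ may be below that of $v'$), so the degree bound at the $v'(j)$'s must be used to pin down $\varphi(v)=k-1$ before the induction hypothesis can be applied.
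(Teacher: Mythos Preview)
Your proof is correct and follows essentially the same inductive strategy as the paper: bound $\Gamma(T_k)$ via $\Delta(T_k)=k-1$ and Observation~\ref{obs:deg}, then use strong induction together with the degree constraint $\varphi(v(i))\le i$ to force the identity permutation on the children's colors. The one difference is in handling the case where the vertex colored $k$ is $v'=v(k-1)$: the paper observes that $T_k$ rooted at $v$ is isomorphic to $T_k$ rooted at $v'$, so by symmetry the second case reduces to the first and yields exactly one additional coloring; you instead carry out the case $z=v'$ explicitly, pinning down $\varphi(v)=k-1$ and invoking the induction hypothesis on the $T_j$-subtrees of $v'$ and on $R\cong T_{k-1}$. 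Both arguments are valid; the symmetry shortcut is shorter, while your explicit analysis has the virtue of making the second coloring $\psi_k$ concrete without appealing to an automorphism.
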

\appendixproof{Lemma~\ref{lem:tree-coloring}}
{
\begin{proof}
As hinted before, the tree $T_k$ is a minimal $k$-witness with the largest number of vertices, so $\G(T_k) \geqslant k$.
The easiest Grundy $k$-coloring of $T_k$ consists of coloring all the leaves with color $1$.
Now, if one removes all the leaves of $T_k$, one gets a binomial tree $T_{k-1}$, whose leaves can all be colored~$2$, and so forth, up to coloring $v$ with color $k$ (see~Figure~\ref{fig:tk}).
As the degree of $T_k$ is $k-1$, by Observation~\ref{obs:deg}, $\G(T_k) \leqslant k$ also holds. 

What remains to be seen is that the Grundy $k$-coloring of $T_k$ is unique up to deciding which of $v$ and $v(k-1)$ gets color $k$ and which gets color $k-1$.
There are only two vertices of degree $k-1$ in $T_k$: $v$ and $v(k-1)$.
Therefore, only $v$ or $v(k-1)$ can potentially be colored with $k$, by Observation~\ref{obs:deg}.  
As $T_k$ rooted at $v$ is isomorphic to $T_k$ rooted at $v(k-1)$, we can assume that $v$ will be the vertex colored $k$.  
We show by strong induction that there is only one Grundy coloring of $T_k$ where the root has color~$k$.
Obviously, there is a unique Grundy coloring of $T_1$.  
For any integer $k \geqslant 2$, if we impose that the root $v$ is colored $k$, the $k-1$ children of $v$ have to be colored with all the integers of $[k-1]$.  
As for each $i \in [k-1]$, $T_i$ has maximum degree $i-1$, the color of $v(i)$ is at most~$i$. 
First, color $k-1$ can only come from $v(k-1)$.
But, now that the color of this vertex is imposed, color $k-2$ can only come from $v(k-2)$. 
Finally, the only possibility is to color $v(i)$ with color $i$ for each $i \in [k-1]$. 
By the induction hypothesis, there is a unique such Grundy coloring for each subtree.
\end{proof}
}

\paragraph{Subtrees and dominant subtrees.} The subtree $t[x]$ rooted at vertex $x$ of a tree $t$ rooted at $v$, is the tree induced by all the vertices $y$ of $t$ such that the simple path from $v$ to $y$ goes through $x$.
The rooted tree $t'$ is a subtree of $t$, if there exists a vertex $x$ of $t$ such that $t'=t[x]$.
The \emph{number} of rooted subtree $t'$ of a rooted tree $t$ is the number of vertices $x$ of $t$ such that $t'=t[x]$.
In a binomial tree $T_k$, the number of $T_l$ (for $l \in [k-1]$) is $2^{k-l-1}$.
For any $l \in [k-1]$, we say that a subtree $T_l$ of $T_k$ is \emph{dominant}, if its root is the child of the root of a $T_{l+1}$.
In other words, a dominant subtree is the largest among its siblings.
The \emph{dominant subtree} of a vertex of a binomial tree is the largest subtree rooted at one of its children.
In a binomial tree $T_k$, the number of dominant $T_l$ (for $l \in [k-2]$) is by definition the number of $T_{l+1}$, that is $2^{k-l-2}$.

Although the statement of the next lemma is rather technical, its underlying idea is fairly simple.
If one removes some well-chosen subtree $T_{a_i}$ from a binomial tree $T_s$ rooted at $v$, and connects the parent $f$ of this removed subtree to the rest of a graph $G$, then in order to color $v$ with color $s$, one would have to color with $a_i$ at least one of the neighbors of $f$ outside $T_s$ (see Figure~\ref{fig:tk-partial}). Using this as a gadget, we will be able to make sure that at least one vertex of a specific vertex-subset is colored with a specific color.
We prove the more general result when multiple subtrees are removed.

\begin{lemma}\label{lem:tree-coloring-partial}
Let $a_1, \ldots, a_p < s$ be integers. Let $G$ be a graph, and let $T$ be an induced subgraph of $G$ such that the following hold.
\begin{itemize}
\item $T$ can be obtained from $T_s$ (rooted at $v$), a set $T_{a_1},\ldots, T_{a_p}$ of pairwise disjoint dominant subtrees. Let $F=\{f_1,\ldots,f_p\}$ be the set of parents of those subtrees.
\item We have $N(V(G) \setminus V(T)) \cap V(T) = F$, that is, only $F$ links $T$ to the rest of $G$.
\end{itemize}
Then, the following conditions on Grundy colorings of $G$ are equivalent.
\begin{itemize}
\item(i) There is a Grundy coloring such that $v$ is colored $s$. 
\item(ii) There is a Grundy coloring of an induced subgraph of $G-T$ such that, for each $i \in [p]$, at least one vertex of $N(f_i) \setminus V(T)$ is colored $a_i$ and no vertex of $N(f_i) \setminus V(T)$ is colored $a_i+1$.  
\end{itemize}
\end{lemma}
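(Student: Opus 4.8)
\textit{Setup.} I will use the elementary reformulation of Grundy colorings: a (not necessarily proper) coloring $\varphi$ is a first-fit coloring for some vertex ordering if and only if for every vertex $u$ and every colour $c<\varphi(u)$ some neighbour of $u$ is coloured $c$ (for the non-obvious direction, order the vertices by increasing colour class). Hence $\Gamma(G)\geq s$ with $v$ coloured $s$ holds iff some induced subgraph of $G$ carries a Grundy coloring assigning colour $s$ to $v$ (one then extends the ordering to the rest of $G$ arbitrarily), so I may freely work with Grundy colorings of induced subgraphs. Call \emph{canonical} the restriction to $T$ of the (unique, by Lemma~\ref{lem:tree-coloring}) Grundy coloring of $T_s$ with the root coloured $s$: it gives colour $m$ to the root of each $T_m$, so $v$ gets $s$ and every $f_i$ gets $a_i+1$; since the children of $f_i$ surviving in $T$ are the roots of $T_1,\dots,T_{a_i-1}$ and the parent of $f_i$ (if any) gets a colour $\geq a_i+2$, under the canonical coloring $f_i$ sees inside $T$ exactly the colours $1,\dots,a_i-1$ --- it is precisely ``missing'' colour $a_i$.

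\textit{$(ii)\Rightarrow(i)$.} Given a Grundy coloring $\chi$ of an induced subgraph $G'$ of $G-T$ with the two stated properties, colour $G[V(G')\cup V(T)]$ by $\chi$ on $G'$ and by the canonical coloring on $T$. Every edge between $G'$ and $T$ has its $T$-endpoint in $F$; since $f_i$ is coloured $a_i+1$ and no vertex of $N(f_i)\setminus V(T)$ carries colour $a_i+1$, the coloring is proper. It is first-fit by the reformulation: a vertex of $G'$ only gained neighbours; a vertex of $V(T)\setminus F$ behaves as in $T_s$; and $f_i$ now sees colours $1,\dots,a_i-1$ inside $T$ together with colour $a_i$ on some neighbour in $G'$ (second hypothesis), hence all of $1,\dots,a_i$. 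So $v$ gets colour $s$ in a Grundy coloring, which extends to $G$.

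\textit{$(i)\Rightarrow(ii)$.} Fix a Grundy coloring $\varphi$ of $G$ with $\varphi(v)=s$. The plan is to produce, for each $i$ separately, a Grundy-coloured induced subgraph of $G-T$ in which some neighbour of $f_i$ has colour exactly $a_i$ and none has colour $a_i+1$, and then to amalgamate these over all $i$ (the hypothesis $N(V(G)\setminus V(T))\cap V(T)=F$ ensures that these pieces touch $T$ only through the vertices $f_i$). For a fixed $f_i$: since $\varphi(v)=s$ and, when $v\notin F$, $v$ has degree exactly $s-1$ in $G$ (the case $f_i=v$, forced by $a_i=s-1$, is analogous via the external neighbourhood of $v$), the $s-1$ children of $v$ in $T$ receive the colours $1,\dots,s-1$ bijectively; propagating the rigidity of Lemma~\ref{lem:tree-coloring} and Observation~\ref{obs:deg} down the path of $T$ from $v$ to $f_i$ then shows that some neighbour of $f_i$ lying outside $T$ receives, under $\varphi$, a colour at least $a_i$. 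Recolouring that vertex with $a_i$ if its colour is larger, and passing to a minimal induced subgraph of $G-T$ still admitting a Grundy coloring that colours it $a_i$, gives a Grundy-coloured subgraph of $G-T$ using only the colours $1,\dots,a_i$ (by the binomial-tree structure of such a minimal subgraph), hence with no vertex of colour $a_i+1$; moreover one may delete from it every other neighbour of $f_i$ of colour $a_i+1$, since such a vertex never supports a colour-$a_i$ vertex. Amalgamating these pieces over $i$ yields an induced subgraph $G'$ of $G-T$ and a Grundy coloring $\chi$ of it witnessing $(ii)$.

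\textit{Main obstacle.} The delicate direction is $(i)\Rightarrow(ii)$, and its crux is the claim that, whatever $\varphi$ does on $T$ --- which need not be the canonical coloring, even for an inclusion-minimal witness, so one cannot simply restrict $\varphi$ to $G-T$ --- binomial-tree rigidity (Lemma~\ref{lem:tree-coloring}, Observation~\ref{obs:deg}) together with the interface hypothesis forces colour $a_i$ (or a larger colour, to be downgraded) onto a neighbour of $f_i$ outside $T$. The rest --- converting this into ``colour $a_i$ present and colour $a_i+1$ absent'' in $N(f_i)\setminus V(T)$ for all $i$ at once while keeping the per-$f_i$ pieces compatible, plus the degenerate cases $f_i=v$, $a_i=1$, and children of $v$ lying in $F$ --- is careful but routine, and $(ii)\Rightarrow(i)$ together with the amalgamation is straightforward.
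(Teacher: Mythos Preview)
Your $(ii)\Rightarrow(i)$ direction is correct and essentially the paper's argument, written out in more detail.

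For $(i)\Rightarrow(ii)$, the paper's route is considerably more direct than yours. It asserts that the induction of Lemma~\ref{lem:tree-coloring} forces $\varphi(f_i)=a_i+1$ \emph{exactly} (not merely that some outside neighbour receives colour $\geq a_i$). Two things then follow immediately for the single global coloring $\varphi$: some vertex of $N(f_i)\setminus V(T)$ has colour exactly $a_i$ (the colour $a_i$ that $f_i$ needs cannot come from inside $T$, since the surviving children of $f_i$ have degree at most $a_i-1$ and would need $f_i$ itself coloured below $a_i$, while the parent of $f_i$ has colour $\geq a_i+2$), and no vertex of $N(f_i)\setminus V(T)$ has colour $a_i+1$ (else $f_i$ could not take that colour). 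The passage to an induced subgraph of $G-T$ is then one step, using that $F$ separates $T-F$ from $G-T$.

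By weakening the rigidity conclusion to ``colour at least $a_i$'' you are forced into a detour --- recolouring down to $a_i$, extracting a minimal $a_i$-witness inside $G-T$, pruning neighbours of colour $a_i+1$, and finally amalgamating per-$i$ pieces --- and several of these steps are genuine gaps. Most seriously: (a) you assert the existence of a minimal $a_i$-witness for your chosen vertex \emph{inside $G-T$}, but its supports under $\varphi$ may well lie in $F$, and nothing you say rules this out; and (b) the ``amalgamation'' of Grundy-coloured subgraphs built independently for each $i$ may assign conflicting colours to a vertex lying in two pieces, and you give no mechanism to reconcile such conflicts into a single Grundy coloring. The paper avoids both issues by never leaving the single coloring $\varphi$: once $\varphi(f_i)=a_i+1$ is established, the required colour-$a_i$ vertex and the absence of colour $a_i+1$ are read off $\varphi$ simultaneously for all $i$, so no recolouring and no amalgamation are needed.
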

{
\begin{proof}
(ii) $\Rightarrow$ (i). 
Assume that there is a Grundy coloring of an induced subgraph of $G-T$ such that, for each $i \in [p]$, at least one vertex of $N(f_i) \setminus V(T)$ is colored $a_i$ and no vertex of $N(f_i) \setminus V(T)$ is colored $a_i+1$.  
We extend this Grundy coloring by coloring $T$ as we would optimally color $T_s$. 
By Lemma~\ref{lem:tree-coloring}, vertex $v$ will be colored with $s$.

(i) $\Rightarrow$ (ii). 
Now, suppose that there is a Grundy coloring where vertex $v$ receives color $s$.
By the same induction as in the second part of the proof of Lemma~\ref{lem:tree-coloring}, each vertex $f_i$ has to be colored $a_i+1$.
The degree of the neighbors of $f_i$ within $T$ is bounded by $a_i$, hence they cannot be colored with color $a_i$ (unless one first colors $f_i$ with a smaller color, but this would be a contradiction).
Thus, the color $a_i$ in the neighborhood of $f_i$ has to come from a vertex of $G-T$.
Moreover, no vertex in $N(f_i) \setminus V(T)$ can be colored $a_{i+1}$, otherwise $f_i$ cannot get this color.
Summing up, there is a Grundy coloring of an induced subgraph of $G-F$ such that, for each $i \in [p]$, at least one vertex of $N(f_i) \setminus V(T)$ is colored $a_i$ and no vertex of $N(f_i) \setminus V(T)$ is colored $a_i+1$.  
As $F$ separates $T-F$ from $G-T$, coloring vertices in $T-F$ is not helpful to color vertices in $N(f_i) \setminus V(T)$.
Hence, there is a Grundy coloring of an induced subgraph of {$G-\textbf{T}$} such that, for each $i \in [p]$, at least one vertex of $N(f_i) \setminus V(T)$ is colored $a_i$ and no vertex of $N(f_i) \setminus V(T)$ is colored $a_i+1$.
\end{proof}
}

\begin{figure}[h!]
\centering
\begin{tikzpicture}[scale=0.8,transform shape]
\tikzstyle{vertex}=[draw,circle,minimum size=12pt,inner sep=0pt]
\node[vertex] (r4) at (1,-0.5) {4};
\node[vertex] (r3) at (-1,-1) {3};
\node[vertex] (r31) at (-1,-3) {1};
\node[vertex] (r2) at (0,-2) {2};
\node[vertex] (r21) at (0,-3) {1};
\node[vertex] (r1) at (1,-3) {1};

\node[vertex] (e1) at (-3,-3) {};
\node[vertex] (e2) at (-3,-2) {2};
\node[vertex] (e3) at (-3,-1) {};
\node[vertex] (e4) at (-3,0) {};
\node[vertex] (e5) at (-3,1) {};

\node (f1) at (-1,-0.5) {$f_1$};
\node (v) at (1,-0.1) {$v$};
\node (nf1) at (-3,1.6) {$N(f_1) \setminus V(T)$};

\node (inv) at (-7,-1) {} ;

\node[draw, rectangle, rounded corners, fit=(e1) (e2) (e3) (e4) (e5)] (nei) {};
\node[draw, rectangle, rounded corners, fit=(inv) (nei)] (g) {$G-T$};

\foreach \i in {1,...,5}{
\draw (r3) -- (e\i) ;
}

\draw (r4) -- (r3);
\draw (r4) -- (r2);
\draw (r4) -- (r1);
\draw (r3) -- (r31);
\draw (r2) -- (r21);

\end{tikzpicture}
\caption{A simple instantiation of Lemma~\ref{lem:tree-coloring-partial} with $s=4$, $p=1$, and $a_1=2$.}\label{fig:tk-partial}
\end{figure}
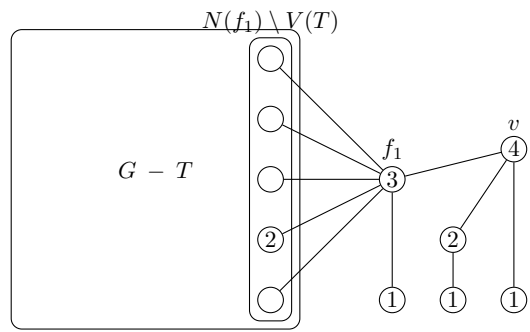


\paragraph{Weak Grundy and connected Grundy colorings.}
We can naturally extend the notion of witnesses to the \wGC problem.
It turns out that everything we observed or showed so far for \PBG, namely Observations~\ref{obs:minimal}, \ref{obs:basic-props}, \ref{obs:basic-props2}, \ref{obs:deg}, and Lemma~\ref{lem:tree-coloring} and \ref{lem:tree-coloring-partial} (where condition (ii) is replaced by the simpler condition: \emph{(ii') There is a weak Grundy coloring of an induced subgraph of $G-T$ such that, for each $i \in [p]$, at least one vertex of $N(f_i)$ is colored $a_i$}), are also valid when it comes to weak Grundy colorings.

For \cGC, again, we can similarly define a notion of witness. 
Though, as we will see, the size of minimal $k$-witnesses for the connected version cannot be bounded by a function of $k$.
Here, the only statements that remain valid are Observations~\ref{obs:minimal} and \ref{obs:deg}.
To illustrate the different behavior of this variant, the connected Grundy number of any binomial tree $T_k$ is $2$, as it is for every bipartite graph with at least one edge~\cite{ConnectedGrundy}.

\section{Positive results}\label{sec:positive}

We now present the positive algorithmic results of this paper.

\subsection{Exact algorithms for \PBG and \wGC}

A straightforward way to solve \PBG{} is to enumerate all possible orderings of the vertex set and to check whether the greedy algorithm uses at least $k$ colors. 
This is a  $\Theta(n!)$-time algorithm. 
A natural question is whether there is a faster exact algorithm. 
\fullversion{Such algorithms for \COL based on dynamic programming have been long known, see for example Lawler~\cite{Lawler1976}, but no $c^n$ algorithm for \PBG, for any constant $c$, was previously known.} 
We now give such an algorithm.


As a preparatory lemma, we remark that a colored minimal $k$-witness can be seen as a set of nested independent dominating sets, in the following sense.
\begin{lemma}\label{lem:independent-dominating}
Let $G$ be a graph and let $G'$ be a colored $k$-witness with the partition into color classes $W_1\uplus \cdots \uplus W_{k}$. 
Then, $W_i$ is an independent set which dominates the set $\bigcup_{j \in [i+1,k]} W_j$. 
In particular, $W_1$ is an independent dominating set of $V(G')$.
\end{lemma}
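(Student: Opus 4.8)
The plan is to simply unwind the definitions; no heavy machinery is needed here. A colored $k$-witness $G'$ comes equipped with a vertex ordering $\sigma=v_1,\ldots,v_n$ of $V(G')$ such that $\varphi$ is the proper first-fit coloring of $G'$ with respect to $\sigma$, and by definition $W_i=\varphi^{-1}(i)$ for each $i\in[k]$. I would start from this unpacking and then treat the two assertions (independence, domination) separately.

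First I would establish that each $W_i$ is an independent set. This is immediate from the fact that $\varphi$ is a \emph{proper} coloring: any two adjacent vertices of $G'$ receive distinct colors, so no edge of $G'$ has both of its endpoints in a single class $W_i$.

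Next comes the domination claim. Fix $i<j\leq k$ and an arbitrary vertex $v\in W_j$, i.e.\ $\varphi(v)=j$. Applying the defining property of a first-fit coloring with the color $c:=i$, which indeed satisfies $c<\varphi(v)$, we obtain that $v$ has a neighbour $v'$ (in fact one occurring before $v$ in $\sigma$, though we do not need this) with $\varphi(v')=i$, that is $v'\in W_i$. Since the colour classes are pairwise disjoint and $j\neq i$, we have $v\notin W_i$, so $v$ has a neighbour in $W_i$ and is dominated by $W_i$. As $v$ ranges over all of $\bigcup_{j\in[i+1,k]}W_j$, this shows that $W_i$ dominates $\bigcup_{j\in[i+1,k]}W_j$. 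Finally, the ``in particular'' statement follows by taking $i=1$: the classes $W_1,\ldots,W_k$ partition $V(G')$, so $\bigcup_{j\in[2,k]}W_j=V(G')\setminus W_1$, and we have just shown that $W_1$ is independent and dominates $V(G')\setminus W_1$; hence $W_1$ is an independent dominating set of $G'$.

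The only point requiring a little care — and about the only thing resembling an obstacle — is to use ``dominates'' correctly: one must note that $v\notin W_i$ before concluding that $v$ has a neighbour \emph{in} $W_i$, which is exactly what the disjointness of the colour classes provides. It is worth remarking that minimality of the witness is not used anywhere, so the statement holds verbatim for every colored $k$-witness, and the same argument applies to weak Grundy colorings once ``proper'' (hence independence) is dropped.
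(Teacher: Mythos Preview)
Your proof is correct and follows essentially the same approach as the paper: properness of the coloring yields independence of each $W_i$, and the first-fit property forces every vertex of a higher color class to have a neighbour in $W_i$. The paper phrases the domination step contrapositively (if $v\in W_h$ with $h>i$ had no neighbour in $W_i$, then $v$ would receive a color at most $i$), but the content is identical, and your additional remarks about minimality being unnecessary and the weak Grundy analogue are accurate.
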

\appendixproof{Lemma~\ref{lem:independent-dominating}}
{
\begin{proof}
As a Grundy coloring is a proper coloring, $W_i$ is an independent set.  
If a vertex $v \in W_h$ (with $h>i$) has no neighbor in $W_i$, then $v$ is colored with a color at most $i$, a contradiction. 
So, $W_i$ should dominate $W_h$.
\end{proof}
}

We rely on two observations: (a) in a colored witness, every color class $W_i$ is an independent dominating set in $G[\bigcup_{j\geqslant i} W_j]$ (Lemma~\ref{lem:independent-dominating}), and (b) any independent dominating set is a maximal independent set (and vice versa). The algorithm is obtained by dynamic programming over subsets, and uses an algorithm which enumerates all maximal independent sets. 

\begin{theorem}\label{thm:exact}
\PBG can be solved in time $O^*(2.4423^n)$.
\end{theorem}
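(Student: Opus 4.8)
The plan is to solve \PBG by dynamic programming over subsets of $V(G)$, guided by Lemma~\ref{lem:independent-dominating}. The key structural fact is that a colored $k$-witness on vertex set $S$ decomposes as $S = W_1 \uplus W_2 \uplus \cdots \uplus W_k$ where, writing $S_i := \bigcup_{j \geq i} W_j$, each $W_i$ is an independent dominating set of $G[S_i]$, equivalently a maximal independent set of $G[S_i]$. So I would define a boolean table entry, for a subset $S \subseteq V(G)$ and an integer $i$, that is true iff $G[S]$ admits a colored witness of $i$ colors whose color classes are exactly $W_i, \ldots, W_k$ in the above sense — i.e. iff $G[S]$ can be partitioned into $i$ sets $U_1, \ldots, U_i$ with $U_j$ a maximal independent set of $G[\bigcup_{\ell \geq j} U_\ell]$. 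This is true for $i=1$ exactly when $S$ is a maximal independent set of $G[S]$ (which is automatic: every $S$ induces a graph with a maximal independent set, but here we need $S$ itself to be independent), and more usefully: $\mathrm{tab}(S,i)$ is true iff there exists a maximal independent set $M$ of $G[S]$ with $\mathrm{tab}(S \setminus M, i-1)$ true, with base case $\mathrm{tab}(\emptyset,0) = \mathrm{true}$. Then $\G(G) \geq k$ iff $\mathrm{tab}(S,k)$ holds for some $S \subseteq V(G)$.

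First I would make precise the correspondence between table entries and colored witnesses, using Lemma~\ref{lem:independent-dominating} for one direction (given a colored $k$-witness, peel off $W_1$, which is a maximal independent set of $G[S]$ since it is independent and dominating, and recurse on $G[S \setminus W_1]$ with the coloring shifted down by one) and the converse direction by reassembling: if $S$ partitions as $M \uplus S'$ with $M$ a maximal independent set of $G[S]$ and $S'$ carrying a valid $(k-1)$-coloring, then coloring $M$ with $1$ and shifting the colors on $S'$ up by one yields a first-fit proper coloring of $G[S]$ achieving color $k$ — properness holds because each $W_i$ is independent, and the first-fit property holds because maximality of $M$ in $G[S]$ forces every vertex outside $M$ to have a neighbor in $M$ (color $1$), and inductively every vertex of color $c \geq 2$ sees all smaller colors. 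Next I would describe the computation: iterate $i$ from $1$ to $k$; for each $i$, and each subset $S$, enumerate the maximal independent sets $M$ of $G[S]$ and set $\mathrm{tab}(S,i)$ true if some $\mathrm{tab}(S \setminus M, i-1)$ is true. Since $k \leq \lceil \log_2 n \rceil + 1$ can be assumed (a minimal $k$-witness has at most $2^{k-1}$ vertices by Observation~\ref{obs:basic-props2}), the loop over $i$ contributes only a polynomial factor.

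For the running time, the dominant cost is, over all $S \subseteq V(G)$, enumerating the maximal independent sets of $G[S]$ and for each subtracting it off. I would invoke the Moon–Moser bound together with a polynomial-delay enumeration algorithm (Tsukiyama et al. / Johnson–Yannakakis–Papadimitriou): $G[S]$ has at most $3^{|S|/3}$ maximal independent sets, enumerable with polynomial delay. Summing over all $S$, the total work is $O^*\!\left(\sum_{s=0}^{n} \binom{n}{s} 3^{s/3}\right) = O^*\!\left((1 + 3^{1/3})^n\right)$, and since $1 + 3^{1/3} = 1 + 1.44224\ldots < 2.4423$, this gives the claimed $O^*(2.4423^n)$ bound. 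The table has $2^n \cdot k = O^*(2^n)$ entries, so storage and the rest of the bookkeeping are absorbed.

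The main obstacle I anticipate is not the running-time arithmetic (which is the standard "subset sum with a base" computation) but pinning down the witness correspondence cleanly — in particular being careful that we are enumerating over \emph{induced subgraphs} $G[S]$ rather than subgraphs, that the recursive peeling really does produce a \emph{maximal} (not merely maximum or arbitrary) independent set at each stage, and that when reassembling we genuinely recover a first-fit coloring (the maximality at every level is exactly what guarantees the "sees all smaller colors" condition). A secondary point to handle is that we want $\G(G) \geq k$, not $= k$: it suffices to check whether $\mathrm{tab}(S,k)$ is true for \emph{some} $S$, because any colored $k$-witness inside $G$ is an induced subgraph $G[S]$, and conversely any such $S$ with $\mathrm{tab}(S,k)$ true yields $\G(G[S]) \geq k$ hence $\G(G) \geq k$ by monotonicity under taking supergraphs on more vertices — which is where Observation~\ref{obs:minimal} is used.
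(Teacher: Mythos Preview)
Your proposal is correct and follows essentially the same approach as the paper: dynamic programming over subsets, peeling off a maximal independent set (equivalently, an independent dominating set) at each step via Lemma~\ref{lem:independent-dominating}, and bounding the total work by $\sum_s \binom{n}{s} 3^{s/3} = (1+3^{1/3})^n$ using the Moon--Moser bound. The paper's formulation is marginally more direct---it computes $\Gamma(G[S])$ for every $S$ via the max-recurrence $\Gamma^*(S)=\max_X\{\Gamma^*(S\setminus X)+1\}$ and reads off $\Gamma^*(V)$, thereby avoiding your extra index $i$, the bound $k=O(\log n)$, and the final scan over all $S$---but the core idea and running-time analysis are identical.
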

{
\begin{proof}
Let $G=(V,E)$ be a graph. We present a dynamic programming algorithm to compute $\G(G)$. For simplicity, given $S\subseteq V$, we denote the Grundy number of the induced subgraph $G[S]$ by $\G(S)$. We recursively fill a table $\G^*(S)$ over the subset lattice $(2^V,\subseteq)$ of $V$ in a bottom-up manner starting from $S=\emptyset$. The base case of the recursion is $\G^*(\emptyset)=0$. The recursive formula is given as 
$$\G^*(S)=\max \{\G^*(S\setminus X)+1~|~X \subseteq S \text{ is an independent dominating set of } G[S]\}.$$


Now let us show by induction on $|S|$ that $\G^*(S)=\G(S)$ for all $S\subseteq V$. The assertion trivially holds for the base case. Consider a nonempty subset $S\subseteq V$; by induction hypothesis, $\G^*(S')=\G(S')$ for all $S'\subset S$.
Let $X$ be a subset of $S$ achieving $\G^*(S)=\G^*(S\setminus X) +1$ and $X'$ be the set of the color class~1 in the ordering achieving the Grundy number $\G(S)$. 

Let us first see that $\G^*(S)\leq \G(S)$. By induction hypothesis we have $\G^*(S\setminus X)=\G(S\setminus X)$. Consider a vertex ordering $\sigma$ on $S\setminus X$ achieving $\G(S\setminus X)$. Augmenting $\sigma$ by placing all vertices of $X$ at the beginning of the sequence yields a (set of) vertex ordering(s). Since $X$ is an independent set, the first-fit algorithm gives color~1 to all vertices in $X$, and since $X$ is also a dominating set for $S\setminus X$, no vertex of $S\setminus X$ receives color~1. Therefore, the first-fit algorithm on such ordering uses $\G(S\setminus X)+1$ colors. We deduce that $\G(S)\geq \G(S\setminus X)+1=\G^*(S\setminus X)+1=\G^*(S)$.

To see that $\G^*(S)\geq \G(S)$, we first observe that $\G(S\setminus X') \geq \G(S)-1$. Indeed, the use of the optimal ordering of $S$ ignoring vertices of $X'$ on $S\setminus X'$ yields the color $\G(S)-1$. 
We deduce that $\G(S)\leq \G(S\setminus X')+1=\G^*(S\setminus X')+1\leq \G^*(S\setminus X)+1=\G^*(S)$.

As an independent dominating set is a maximal independent set, we can estimate the computation of the table by restricting $X$ to the family of maximal independent sets of $G[S]$. 
On an $n$-vertex graph, one can enumerate all maximal independent sets in time $O(1.4423^n)$~\cite{MoonMoser}. 
Thus, filling the table by increasing size of set $S$ takes:
$$\sum_{i=0}^n {n \choose i}\cdot  1.4423^i=(1+1.4423)^n.$$\end{proof}
%
}

A similar dynamic programming gives a slightly worse running time for \wGC.

\begin{theorem}\label{thm:exact-weak}
\wGC can be solved in time $O^*(2.7159^n)$.
\end{theorem}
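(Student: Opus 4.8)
The plan is to mirror the dynamic programming proof of Theorem~\ref{thm:exact}, replacing the role of independent dominating sets with the appropriate notion for weak Grundy colorings, and then to bound the running time by enumerating the relevant ``first color class'' candidates. First I would establish the analogue of Lemma~\ref{lem:independent-dominating}: if $G'$ is a colored weak Grundy $k$-witness with color classes $W_1 \uplus \cdots \uplus W_k$, then each $W_i$ dominates $\bigcup_{j \in [i+1,k]} W_j$ (the independence requirement disappears since the coloring need not be proper). In particular $W_1$ is a dominating set of $G'$, but it need \emph{not} be a minimal dominating set, since a vertex of $W_1$ may well have all its neighbors in $W_1$. So the right recursion is
$$\Gamma'^*(S) = \max\{\Gamma'^*(S \setminus X) + 1 \mid X \subseteq S,\ X \text{ dominates } S \setminus X\},$$
with base case $\Gamma'^*(\emptyset) = 0$.

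Next I would prove $\Gamma'^*(S) = \Gamma'(S)$ by induction on $|S|$, exactly as in Theorem~\ref{thm:exact}. For $\Gamma'^*(S) \leq \Gamma'(S)$: given $X$ achieving the max, take an ordering of $S \setminus X$ realizing $\Gamma'(S\setminus X) = \Gamma'^*(S \setminus X)$, prepend the vertices of $X$; since $X$ dominates $S \setminus X$, first-fit assigns color $1$ to all of $X$ and no vertex of $S \setminus X$ gets color $1$ (for the weak version we must check that first-fit legitimately gives every vertex of $X$ color $1$ — it does, since $\varphi(v_1)=1$ always and there is no properness constraint forcing anything higher, and color $1$ is always available as the smallest color). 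Hence $\Gamma'(S) \geq \Gamma'(S\setminus X) + 1 = \Gamma'^*(S)$. For the reverse direction: let $X'$ be the color-$1$ class in an optimal weak Grundy coloring of $S$; deleting $X'$ and using the restriction of the optimal ordering shows $\Gamma'(S \setminus X') \geq \Gamma'(S) - 1$, and $X'$ is a feasible choice of $X$ in the recursion (it dominates $S \setminus X'$ by the $W_1$-dominates-the-rest property), so $\Gamma'(S) \leq \Gamma'^*(S \setminus X') + 1 \leq \Gamma'^*(S)$.

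Finally, the running-time bound. Here is the crux and the only place the argument genuinely differs from Theorem~\ref{thm:exact}: we are no longer summing over \emph{maximal} independent sets but over all $X \subseteq S$ such that $X$ dominates $S \setminus X$ — equivalently, all subsets whose complement (within $S$) is dominated by $X$. It suffices, for the recursion to be correct, to range $X$ over \emph{inclusion-minimal} such sets (removing a vertex from a dominating-the-complement set only makes the complement larger and keeps the ``is dominated'' property at least as easy, so minimal witnesses suffice — and $\Gamma'^*(S\setminus X)+1$ is monotone enough; more carefully, for the $W_1$ class one can always pass to a minimal subset of $W_1$ that still dominates the rest, recoloring the discarded vertices to $1$ is unnecessary — we just need \emph{some} valid first class, and minimal ones are among the candidates). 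The number of minimal sets $X$ with $X$ dominating $S \setminus X$ in an $i$-vertex graph is known to be at most $O^*(1.7159^i)$ (this is the bound on minimal ``total/complement'' dominating sets; such sets can be enumerated in that time — this is the weak-Grundy analogue of the Moon--Moser bound and is exactly the constant appearing in the statement). Summing over all $S$ by size,
$$\sum_{i=0}^{n} \binom{n}{i} \cdot 1.7159^i = (1 + 1.7159)^n = 2.7159^n,$$
which, with the polynomial overhead of evaluating the table, gives the claimed $O^*(2.7159^n)$.

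The main obstacle I anticipate is pinning down precisely which family of sets $X$ must be enumerated and citing (or proving) the matching $O^*(1.7159^n)$ enumeration bound: unlike maximal independent sets, ``minimal sets dominating their own complement'' is a slightly less standard object, so I would need to either invoke the right reference for enumerating minimal dominating sets of this flavor or give a short branching argument establishing the $1.7159$ base. Everything else — the correctness of the recursion and the induction — is a routine adaptation of the proof of Theorem~\ref{thm:exact}, with the caveat that one must be careful that the ``weak'' first-fit rule still behaves as expected when we prepend a dominating class (it does, because color $1$ is unconstrained).
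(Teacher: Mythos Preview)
Your proposal is correct and follows essentially the same approach as the paper. The family you worry about at the end is nothing exotic: ``$X\subseteq S$ dominates $S\setminus X$'' is precisely the standard definition of $X$ being a dominating set of $G[S]$, so you are simply enumerating ordinary minimal dominating sets, and the $O^*(1.7159^n)$ enumeration bound is the one of Fomin, Grandoni, Pyatkin and Stepanov~\cite{FominGPS08} (not a ``total/complement'' variant). The one small difference from the paper is in justifying the restriction to \emph{minimal} dominating sets: the paper does it by explicitly moving the excess color-$1$ vertices into $W_2$ (which is legal because the coloring need not be proper), whereas your argument via monotonicity of $\Gamma'$ under taking induced supergraphs is an equally valid shortcut --- every dominating set $X$ contains a minimal dominating set $X'\subseteq X$, and $\Gamma'(S\setminus X')\geq\Gamma'(S\setminus X)$.
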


\begin{proof} 
Now, we fill the table: $$\G^w(S)=\max \{\G^w(S\setminus X)+1~|~X \subseteq S \text{ is a minimal dominating set of } G[S]\}.$$
In a colored witness $W_1\uplus \cdots \uplus W_{k}$ of \wGC, for any $i \in [k]$, $W_i$ (is no longer necessarily an independent set and) dominates $\bigcup_{j \in [i+1,k]} W_j$.
To establish that, for any $S \subseteq V$, $\G^w(S)=\G'(S)$, we need to transform any colored witness $W_1 \uplus \cdots \uplus W_{k}$ (with $k \leqslant 2$) into a colored witness $W'_1 \uplus \cdots \uplus W'_{k}$ on the same induced subgraph $G'$, also achieving color $k$, but with the additional property that $W'_1$ is a minimal dominating set of $G'$.
Actually, in order to obtain that property we only need to transfer some vertices of $W_1$ to $W_2$.
We can choose $W'_1 \subseteq W_1$ to be any minimal dominating set of $G'$.
Then, we set $W'_2=W_2 \cup (W_1 \setminus W'_1)$.
For any $i \in [3,k]$, we just set $W'_i=W_i$.
As $W'_1$ is a dominating set of $G'$, the partition $W'_1 \uplus \cdots \uplus W'_{k}$ is indeed a colored witness.
Enumerating all the minimal dominating sets of a graph on $i$ vertices can be done in time $O^*(1.7159^i)$ \cite{FominGPS08}, hence the running time of our algorithm.
\end{proof}

We leave it as an open question to improve the running time of those algorithms.
We note that the \emph{fast subset convolution} technique~\cite{BjorklundHKK07}, which is commonly used to design exponential-time algorithms, does not seem to be directly applicable here.

\subsection{Quasi-polynomial algorithms for \PBG and \wGC on apex-minor-free graphs}

We will now show that the $\xp$ algorithms of~\cite{TP97} for \PBG and \wGC imply the existence of quasi-polynomial-time algorithms for these problems on apex-minor-free graphs (such as planar graphs).

The following result of Chang and Hsu~\cite{Chang} will be used:


\begin{theorem}[\cite{Chang}]\label{th:sparse-logn}
Let $G$ be a graph on $n$ vertices for which every subgraph $H$ has at most $d|V(H)|$ edges. Then $\G(G)\leq\log_{d+1/d}(n)+2$.
\end{theorem}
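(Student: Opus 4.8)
The plan is to exploit the structure of a minimal witness rather than the whole graph. Let $G$ be a graph in which every subgraph $H$ has at most $d|V(H)|$ edges, and let $k = \G(G)$. By Observation~\ref{obs:minimal}, $G$ contains a minimal $k$-witness $W$, which I will analyze directly; note that $W$ itself inherits the hereditary sparsity property (every subgraph of $W$ is a subgraph of $G$). Fix a Grundy $k$-coloring of $W$ with color classes $W_1 \uplus \cdots \uplus W_k$. The key combinatorial fact, which I would extract from the binomial-tree dependency picture described before Observation~\ref{obs:basic-props} (or alternatively from Lemma~\ref{lem:independent-dominating}), is that every vertex colored $i$ has at least $i-1$ neighbors of strictly smaller colors, one in each class $W_1, \ldots, W_{i-1}$. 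In particular, for any threshold $t$, each vertex in $W_t \cup W_{t+1} \cup \cdots \cup W_k$ has at least $t-1$ neighbors inside the \emph{same} set $S_t := W_1 \cup \cdots \cup W_k$... more usefully, restrict attention to $U_i := W_i \cup W_{i+1} \cup \cdots \cup W_k$: every vertex of $U_{i}$ that lies in $W_j$ with $j \ge i$ has, among its smaller-colored neighbors, at least those in $W_i, \ldots, W_{j-1}$, all of which lie in $U_i$ only partially. The clean statement to aim for is: the subgraph $W[U_i]$ has many edges relative to its vertex count, because each vertex of $U_{i+1}$ sends an edge down into $W_i$.

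Concretely, I would count edges between $W_i$ and $U_{i+1} = W_{i+1} \cup \cdots \cup W_k$: by Lemma~\ref{lem:independent-dominating}, $W_i$ dominates $U_{i+1}$, so there are at least $|U_{i+1}|$ such edges, all contained in the subgraph induced by $U_i = W_i \cup U_{i+1}$. Applying the sparsity hypothesis to $H = W[U_i]$ gives $|U_{i+1}| \le |E(W[U_i])| \le d\,|U_i| = d\,(|W_i| + |U_{i+1}|)$, hence $|U_{i+1}| \le \frac{d}{1-d}|W_i|$ when $d<1$; combined with $|U_i| = |W_i| + |U_{i+1}|$ this yields a multiplicative contraction $|U_{i+1}| \le \frac{d}{1-d}\cdot\frac{1}{1+\frac{d}{?}}\ldots$ — the point is that $|U_i|/|U_{i+1}|$ is bounded below by a constant $>1$ depending only on $d$. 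Iterating this bound from $i=1$ up to $i = k-1$ (where $|U_k| = |W_k| \ge 1$) gives $n \ge |V(W)| = |U_1| \ge \rho^{\,k-1}$ for the appropriate ratio $\rho = \rho(d) > 1$, and solving for $k$ gives $k \le \log_\rho n + O(1)$. The precise base $\frac{d+1}{d}$... I would match by being careful about whether one counts $|E(W[U_i])| \ge |U_{i+1}|$ or a sharper bound (e.g. also edges within $W_{i+1}$ down to $W_i$, or using that $W_i$ is independent so all its incident edges go to $U_{i+1}$, giving a two-sided count); getting the constant exactly $\frac{d+1}{d}$ with the additive $+2$ is the bookkeeping that needs the most care.

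The main obstacle I anticipate is pinning down the exact recurrence so that the logarithm base comes out to $\frac{d+1}{d}$ rather than some weaker constant. The crude estimate $|E(W[U_i])| \ge |U_{i+1}|$ likely only gives base $\frac{1}{1-d}$ or similar; to get $\frac{d+1}{d}$ one presumably wants to say that $W[U_i]$ has at least $|U_{i+1}|$ edges \emph{incident to $W_i$} and these are a positive fraction of all $\le d|U_i|$ edges, forcing $|W_i|$ to be comparably large, i.e.\ a \emph{lower} bound on $|W_i|$ in terms of $|U_i|$, which is exactly what drives geometric decay. So the real content is the inequality $|W_i| \ge \frac{1}{d+1}|U_i|$ (equivalently $|U_{i+1}| \le \frac{d}{d+1}|U_i|$), which I would obtain by observing $|U_{i+1}| \le e(W_i, U_{i+1}) \le |E(W[U_i])| \le d|U_i| = d(|W_i|+|U_{i+1}|)$ and rearranging — this gives $|U_{i+1}|(1-d) \le d|W_i|$, hmm, so $|U_{i+1}| \le \frac{d}{1-d}|W_i|$ and then $|U_i| \le (1 + \frac{d}{1-d})|W_i| = \frac{1}{1-d}|W_i|$, i.e.\ $|U_{i+1}| \le \frac{d}{1-d}\cdot\frac{1}{?}$; reconciling this with the stated base is the calculation I would do carefully at the end, possibly by using a slightly sharper edge count (edges from $U_{i+1}$ down, \emph{plus} the hypothesis applied to $W[W_i \cup W_{i+1}]$ alone) to land on $\frac{d+1}{d}$ and the clean $\log_{\frac{d+1}{d}}(n)+2$.
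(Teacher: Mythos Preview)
This theorem is quoted from Chang and Hsu~\cite{Chang}; the present paper gives no proof of its own, so there is nothing to compare against. Evaluating your proposal on its own merits: the skeleton is right, but the step you flag as ``the bookkeeping that needs the most care'' is in fact a genuine gap, not just bookkeeping. Your one-level estimate $|U_{i+1}| \le e(W_i,U_{i+1}) \le |E(W[U_i])| \le d|U_i|$ only yields the ratio $1/d$ (and is vacuous for $d\ge 1$); the rearrangement you try afterwards, $|U_{i+1}|(1-d)\le d|W_i|$, still only gives base $1/(1-d)$ and again breaks for $d\ge 1$. You never actually reach $|U_{i+1}|\le d|W_i|$, which is what the base $(d{+}1)/d$ needs.

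The missing idea is to \emph{accumulate} the edge count across levels rather than use it once per level. Since $W_i$ is independent, every edge of $W[U_i]$ either crosses from $W_i$ to $U_{i+1}$ or lies inside $U_{i+1}$; hence
\[
|E(W[U_i])| \;\ge\; |U_{i+1}| \;+\; |E(W[U_{i+1}])|.
\]
Telescoping and applying sparsity to $W[U_i]$ gives $\sum_{j>i}|U_j|\le d\,|U_i|$. Setting $T_i=\sum_{j>i}|U_j|$, this reads $T_i\le d|U_i|$, and since $T_{i-1}=|U_i|+T_i\ge T_i/d+T_i$ you obtain $T_{i-1}\ge \frac{d+1}{d}\,T_i$. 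With $T_{k-1}=|U_k|\ge 1$ and $T_0=|U_1|+T_1\le (d{+}1)n$, iterating gives $(d{+}1)n\ge\bigl(\tfrac{d+1}{d}\bigr)^{k-1}$, i.e.\ $k\le \log_{(d+1)/d}(n)+2$, exactly as stated. So your plan completes once you keep the term $|E(W[U_{i+1}])|$ rather than discarding it.
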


In fact, we note that the bound of Theorem~\ref{th:sparse-logn} also holds for the weak Grundy number, indeed the proof of~\cite{Chang} is still valid for this case.

A class of graphs has \emph{bounded local treewidth} if for any of its members $G$, the treewidth of $G$ is upper-bounded by a function of the diameter of $G$. 
The following result was proved by Demaine and Hajiaghayi~\cite{DemaineSODA}:

\begin{theorem}[\cite{DemaineSODA}]\label{thm:bded-LTW}
For every apex graph $H$, the class of $H$-minor-free graphs has
bounded local treewidth. More precisely, there is a function $f$ such that any $H$-minor-free graph $G$ of diameter $D$ has treewidth at most $f(H)D$.
\end{theorem}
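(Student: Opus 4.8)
The plan is to obtain this from the Robertson--Seymour graph minor structure theorem together with a localization argument that exploits the hypothesis that $H$ is \emph{apex}. Write $H = P + a$, where $P = H - a$ is planar and $a$ is a vertex joined to an arbitrary subset of $V(P)$, and set $h = |V(H)|$. Let $G$ be $H$-minor-free of diameter $D$; the goal is to prove $\mathrm{tw}(G) \leq f(H)\cdot D$ for a suitable function $f$. (A more hands-on argument in the spirit of Eppstein already yields boundedness of the local treewidth; the structure theorem seems to be what is needed for the claimed \emph{linear} dependence on $D$.)

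First I would invoke the structure theorem: there is a constant $k = k(H)$ and a tree-decomposition of $G$ of adhesion at most $k$ such that every torso is $k$-almost-embeddable, i.e.\ it becomes embeddable in a surface of genus at most $k$ after deleting at most $k$ \emph{apex} vertices, allowing at most $k$ vortices each of width at most $k$. Since the adhesion sets become cliques in the torsos, one has $\mathrm{tw}(G) = \max_t \mathrm{tw}(G_t)$ over the torsos $G_t$, so it suffices to bound the treewidth of a single torso.

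The heart of the argument is the following apex-elimination principle, which is where ``$H$ apex'' is used: no vertex $z$ of $G$ can be adjacent to a vertex set $S$ with $\mathrm{tw}(G[S])$ larger than a threshold $t(H)$. Indeed, if it were, then $G[S]$ would contain (by the grid/wall minor theorem) a wall of order $R(h)$ as a minor, all of whose branch sets lie inside $N(z)$; since $P$ is planar it is a minor of a large enough wall, and $z$ can play the role of the apex $a$, so $G$ would contain $H$ --- a contradiction. Applying this to the apex vertices supplied by the structure theorem, and using that $G$ has diameter $D$ (so that the BFS layers $L_0,\dots,L_D$ from any fixed root exhaust $V(G)$ and control distances), one argues that each torso, with its $O(k)$ apices peeled off and its $O(k)$ vortices absorbed into $O(k)$ extra bag entries, is a bounded-genus graph carried by at most $D+1$ layers; a Baker/Robertson--Seymour style layering argument (contracting initial segments $L_0 \cup \cdots \cup L_j$ and sliding a window of $O(1)$ consecutive layers) then bounds its treewidth by $O(D + k)$. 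Reassembling the torsos along the clique adhesions does not increase the treewidth, giving $\mathrm{tw}(G) = O_H(D)$.

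I expect the main obstacle to be exactly this localization step: transporting the BFS layering and the diameter bound of $G$ onto the pieces of the structure decomposition --- distances inside a torso may exceed those in $G$ because of virtual edges, and the layering of $G$ restricted to a piece need not be a BFS layering of that piece --- and carrying out the apex-elimination uniformly enough that the Baker-type $O(D)$ bound genuinely applies to the torsos rather than only to $G$ globally. By contrast, the genus and the vortices only ever contribute additive $O(k)$ terms and are routine once the layering framework is in place.
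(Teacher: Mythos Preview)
The paper does not prove this theorem at all: it is quoted verbatim as a result of Demaine and Hajiaghayi~\cite{DemaineSODA} and used as a black box in the proof of Theorem~\ref{thm:planar}. There is therefore no ``paper's own proof'' to compare your attempt against.

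For what it is worth, your sketch is in the right spirit---the cited result of Demaine and Hajiaghayi does go through the Robertson--Seymour structure theorem, and the crucial use of ``$H$ is apex'' is indeed to neutralize the apex vertices supplied by that theorem. Two caveats on your write-up, though. First, the identity $\mathrm{tw}(G)=\max_t \mathrm{tw}(G_t)$ over torsos is not correct as stated; torsos contain virtual clique edges not present in $G$, and what one actually uses is an inequality $\mathrm{tw}(G)\le\max_t \mathrm{tw}(G_t)$ (or a statement about clique-sums). Second, your apex-elimination principle (``$N(z)$ has bounded treewidth'') is weaker than what is needed and does not by itself let you peel the structural apices off a torso: the apices sit in the torso, their torso-neighbourhood can include virtual edges, and bounding $\mathrm{tw}(G[N(z)])$ in $G$ says little about the torso. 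The actual argument shows, more strongly, that in the apex-minor-free case one may take the structure decomposition essentially \emph{without} apices (or with apices whose attachment is confined to a bounded number of BFS layers), and then runs the Baker-type layering on bounded-genus pieces with vortices. You have correctly identified this localization of the BFS layering onto the pieces as the delicate step.
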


\fullversion{In fact, it was proved by Eppstein~\cite{Eppstein} that a graph has
bounded local treewidth if and only if it is apex-minor-free.}

\begin{theorem}\label{thm:planar}
\PBG  and \wGC can be solved in time $n^{O(\log^2 n)}$ on apex-minor-free graphs of order $n$.
\end{theorem}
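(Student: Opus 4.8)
The plan is to combine the three ingredients assembled just above: the witness machinery (Observations~\ref{obs:minimal} and~\ref{obs:basic-props}), the bounded-local-treewidth theorem for apex-minor-free graphs (Theorem~\ref{thm:bded-LTW}), and the $\xp$ dynamic programming of~\cite{TP97}. The key observation is that, by Observation~\ref{obs:minimal}, it suffices to decide whether $G$ admits a \emph{minimal} $k$-witness $W$; and by Observation~\ref{obs:basic-props}, such a $W$ is contained in the distance-$k$ neighborhood of the vertex it colors with $k$. Since $G$ is $H$-minor-free for some apex graph $H$, every distance-$k$ ball $B$ of $G$ (which has diameter at most $2k$) induces an $H$-minor-free graph of diameter at most $2k$, hence by Theorem~\ref{thm:bded-LTW} has treewidth at most $f(H)\cdot 2k = O_H(k)$.

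So the algorithm is: for each vertex $v \in V(G)$, form the ball $B_v$ of radius $k$ around $v$, compute a tree-decomposition of $G[B_v]$ of width $O_H(k)$, and run the dynamic programming of~\cite{TP97} on $G[B_v]$ to test whether $\G(G[B_v]) \geq k$; answer YES iff some ball passes the test. Correctness is immediate from the two observations: if $\G(G)\geq k$ there is a minimal $k$-witness $W$, and taking $v$ to be the vertex of $W$ colored $k$, Observation~\ref{obs:basic-props} gives $W \subseteq B_v$, so $\G(G[B_v])\geq \G(W) \geq k$; conversely any $B_v$ with $\G(G[B_v])\geq k$ is itself a $k$-witness in $G$.

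It remains to bound the running time, and this is where the $n^{O(\log^2 n)}$ figure comes from — not from the treewidth bound alone, but from also bounding $k$. We only ever need to consider $k \leq \G(G)+1$; and since any subgraph of a fixed-minor-free graph has $O(1)$ edges per vertex, Theorem~\ref{th:sparse-logn} gives $\G(G) = O(\log n)$ (and likewise for the weak Grundy number, using the remark after Theorem~\ref{th:sparse-logn}). Hence it suffices to run the procedure for $k = 1, 2, \ldots, O(\log n)$. For each such $k$, the treewidth of each ball is $w = O_H(k) = O(\log n)$, and the DP of~\cite{TP97} runs in time $k^{O(w)}2^{O(wk)}\cdot n = 2^{O(w k + w\log k)}\cdot n = 2^{O(\log^2 n)}\cdot n = n^{O(\log n)}$; multiplying by the $n$ choices of center $v$ and the $O(\log n)$ choices of $k$ keeps us at $n^{O(\log n)} \subseteq n^{O(\log^2 n)}$. (The same argument applies verbatim to \wGC using the treewidth-based algorithm of~\cite{TP97} for that problem and the weak-Grundy version of Theorem~\ref{th:sparse-logn}.)

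The main obstacle — really the only subtlety — is making sure the two parameters $k$ and $w$ are \emph{simultaneously} logarithmic: one needs $k = O(\log n)$ to cap the search (via the sparsity/Chang–Hsu bound, which requires that fixed-minor-free implies $O(1)$ edge density, true since apex-minor-free graphs are in particular $H'$-minor-free for some $H'$), and then $w = O(k) = O(\log n)$ follows from bounded local treewidth applied to radius-$k$ balls. Once both are logarithmic, the $2^{O(wk)}$ term in the Telle–Proskurowski bound becomes quasi-polynomial rather than exponential, and the result follows. A minor point to state carefully is that computing a width-$O(\log n)$ tree-decomposition of each ball can be done in the allotted time (e.g. via a constant-factor approximation running in time $2^{O(w)}n$, which is again $n^{O(1)}$ here).
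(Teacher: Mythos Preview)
Your proof is correct and follows essentially the same approach as the paper: bound $k=O(\log n)$ via Theorem~\ref{th:sparse-logn} (using that minor-free graphs are uniformly sparse), restrict to distance-$k$ balls via Observations~\ref{obs:minimal} and~\ref{obs:basic-props}, bound the treewidth of each ball by $O(k)=O(\log n)$ via Theorem~\ref{thm:bded-LTW}, and run the Telle--Proskurowski dynamic program on each ball. The only difference is cosmetic: the paper invokes the derived $O(n^{3w^2})$ form of the running time and obtains exactly $n^{O(\log^2 n)}$, whereas you plug $k,w=O(\log n)$ directly into the $k^{O(w)}2^{O(wk)}n$ form and in fact get the slightly sharper $n^{O(\log n)}$, which you then relax to the stated bound.
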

\appendixproof{Proposition~\ref{thm:planar}}
{
\begin{proof}
We first consider \PBG. Any $H$-minor-free graph of order $n$ has at most $f(H)n$ edges~\cite{Mader67} for some function $f$; hence, by Theorem~\ref{th:sparse-logn}, we have $k\leqslant\G(G)\leq c\log n$ for some constant $c$ (otherwise, we have a NO-instance).  
As noted in Observation~\ref{obs:basic-props}, any minimal $k$-witness is included in some distance-$k$ neighborhood of $G$. 
Hence, we apply the $O(n^{3w^2})$-time algorithm of~\cite{TP97} for graphs of treewidth at most $w$: for every vertex $v$ of $G$, apply it to the distance-$k$ neighborhood of $v$. 
This is a subgraph of diameter at most $2k=O(\log n)$, and by Theorem~\ref{thm:bded-LTW} it has treewidth $w=O(\log n)$ as well. 
Hence $O(n^{3w^2})=n^{O(\log^2 n)}$.

The same argumentation also works for \wGC. Indeed, as pointed out before, the bound of Theorem~\ref{th:sparse-logn} also holds for the weak Grundy number. Moreover, there is also an algorithm running in time $O(n^{3w^2})$ for \wGC~\cite{TP97} (where the problem is called \textsc{Iterated Dominating Removal}).
\end{proof}
}

In the light of Theorem~\ref{thm:planar}, it is natural to ask whether \PBG can be solved in polynomial time on apex-minor-free graphs (or planar graphs)? Note that by Theorem~\ref{thm:planar}, an NP-hardness result for \PBG on apex-minor-free graphs would contradict the ETH.

\subsection{\wGC parameterized by $k$ is in $\fpt$}

We recall that \wGC is $\np$-complete~\cite{GV97}. 
In this subsection, we show that \wGC has an $O^*(2^{2^{O(k)}})$-time algorithm (Theorem~\ref{thm:weakFPT}). We will later show that this running time is essentially optimal under the ETH (Theorem~\ref{thm:weakFPT-lowerbound}).

\begin{theorem}\label{thm:weakFPT}
\wGC can be solved in time $O^*(2^{2^{O(k)}})$, where $k$ is the number of colors.
\end{theorem}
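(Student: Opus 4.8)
The plan is to use the color-coding technique of Alon--Yuster--Zwick. Recall from Observation~\ref{obs:basic-props2} that if $\G'(G)\geq k$ then $G$ contains a minimal weak $k$-witness on at most $2^{k-1}$ vertices, and by the weak analogue of Lemma~\ref{lem:tree-coloring} this witness has a ``skeleton'' of dependencies given by the binomial tree $T_k$, which also has $2^{k-1}$ vertices. The idea is to randomly color the vertices of $G$ with $2^{k-1}$ ``labels'' (not to be confused with the $k$ Grundy colors), hope that some minimal weak $k$-witness becomes \emph{colorful} (all its vertices receive distinct labels), and then detect a colorful witness by dynamic programming over subsets of the label set. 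Since there are $2^{k-1}$ labels, the number of label subsets is $2^{2^{k-1}}$, which gives the $O^*(2^{2^{O(k)}})$ running time.

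First I would set up the random labeling: pick $\ell := 2^{k-1}$ and color $V(G)$ uniformly at random with $[\ell]$. If $G$ has a weak $k$-witness on a vertex set $W$ with $|W|\leq \ell$, then with probability at least $\ell^{-\ell} = 2^{-O(k2^k)}$ the restriction of the labeling to $W$ is injective; repeating the whole procedure $2^{O(k2^k)}$ times (or derandomizing with a perfect hash family of size $2^{O(\ell)}\log n = 2^{2^{O(k)}}\log n$) makes the failure probability negligible while staying within the claimed bound. Next I would describe the dynamic program that, given a labeling, decides whether there is a colorful weak Grundy coloring achieving color $k$. The natural table is indexed by a label subset $L\subseteq[\ell]$, a target color $c\in[k]$, and records whether the union of vertices whose labels lie in $L$ admits a first-fit coloring in which some vertex gets color $c$ using only vertices with labels in $L$. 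The recursion follows the binomial-tree structure: a vertex $v$ (with label not yet used) can be colored $c$ provided that, for each $i\in[c-1]$, one can find a neighbor of $v$ colored $i$, realized inside a block of labels disjoint from the ones already spent and disjoint across the $i$'s; this is exactly a subset-convolution-style recurrence over $2^{[\ell]}$, costing $2^{O(\ell)}\cdot\mathrm{poly}(n) = 2^{2^{O(k)}}\cdot\mathrm{poly}(n)$.

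The step I expect to require the most care is making the dynamic programming recurrence both correct and self-contained: one must argue that the dependency structure of a minimal weak $k$-witness really is a subtree-decomposable object (a rooted binomial tree $T_c$ for the vertex colored $c$), so that ``colorful weak Grundy coloring achieving color $k$'' can be decomposed into independent colorful sub-instances on disjoint label blocks, and conversely that any solution to the recurrence yields a genuine vertex ordering witnessing $\G'(G)\geq k$. This is where the weak (non-proper) setting actually helps, since we do not need to track which colors already appear in a neighborhood to avoid conflicts; we only need existence of the required lower colors, so the recurrence over label subsets does not have to carry extra state beyond the block of labels used. Once correctness is established, the running time bound $O^*(2^{2^{O(k)}})$ follows by summing the $2^{\ell}$ subsets times the $\mathrm{poly}(n)$ work per subset and multiplying by the $2^{2^{O(k)}}$ repetitions (or hash-family size), and one notes $k = O(\log n)$ may be assumed (by Observation~\ref{obs:deg} or a trivial check), so all factors are genuinely of the stated form.
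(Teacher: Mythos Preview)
Your approach has a genuine gap, precisely at the step you flagged as delicate. The recurrence you propose requires the supporting sub-witnesses for the neighbors $u_1,\ldots,u_{c-1}$ to live in pairwise disjoint label blocks, hence on pairwise disjoint vertex sets; so your DP only detects witnesses in which the binomial tree $T_k$ embeds \emph{injectively} into $G$. But the claim that a minimal weak $k$-witness is ``subtree-decomposable'' in this sense is false: a single vertex may serve as the colour-$i$ support for several different vertices simultaneously. Concretely, take $G=K_3$ with vertices $a,b,c$ and the weak Grundy $3$-coloring $a\mapsto 1$, $b\mapsto 2$, $c\mapsto 3$. Here $a$ is the unique vertex colored $1$ and it supports both $b$ (to get color $2$) and $c$ (to get color $3$); any split of the remaining labels into disjoint blocks $L_1,L_2$ for the two children of $c$ would force $a$'s label into both blocks. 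Your DP therefore rejects $K_3$ for $k=3$ even though $\G'(K_3)=3$. More generally $K_k$ is a weak $k$-witness on only $k$ vertices, far fewer than the $2^{k-1}$ distinct vertices your decomposition implicitly requires.

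The paper's proof avoids this obstacle by a different and much simpler use of color coding. Instead of $2^{k-1}$ auxiliary labels, it colors $V(G)$ uniformly at random with the $k$ Grundy colors \emph{themselves}, and then repeatedly prunes every vertex whose current color $c$ is not supported by neighbors of all colors $1,\ldots,c-1$; it accepts if some vertex of color $k$ survives. Soundness is immediate, since the surviving coloring is itself a valid weak Grundy coloring (this is exactly where non-properness helps: no independence check is needed). For completeness, if $\G'(G)\geq k$, fix a minimal witness $W$ with $|W|\leq 2^{k-1}$ together with its coloring; with probability at least $k^{-|W|}\geq k^{-2^{k-1}}$ the random assignment agrees with it on $W$, and then every vertex of $W$ survives pruning. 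Repeating $O(k^{2^{k-1}})$ times gives the $O^*(2^{2^{O(k)}})$ bound, with no dynamic programming at all.
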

\confversion{The $\fpt$-algorithm is based on the idea of {\em color-coding} by Alon et al.~\cite{Alon1995}. 
The size of a minimal witness for $\G' \geq k$ is bounded by a function of $k$. 
Since those vertices of the same color do not need to induce an independent set, a random coloring will identify a {\em colorful} minimal witness with a good probability. 
}
\appendixproof{\autoref{thm:weakFPT}}
{
\begin{proof}
Let $G$ be the input graph.  
We use the randomized color-coding technique of Alon et al.~\cite{Alon1995}.  
Let us first uniformly randomly color the vertices of $G$ with integers between~$1$ and~$k$, and denote by $\text{col}$ the function giving the color of a vertex according to this random coloring.  
Then, we apply a pruning step, removing all vertices which violate the property of a weak Grundy coloring. 
That is, we remove each vertex $v$ such that $\text{col}(v)=c$ if $\exists c' < c, \neg \exists u \in N(v)$, $\text{col}(u)=c'$.
Equivalently, we keep only the vertices $v$ such that $\forall c \in [\text{col}(v)-1]$, $\exists u \in N(v)$, $\text{col}(u) = c$. 
Note that is well possible that a vertex satisfying the condition at first, no longer satisfies it at a later point, after some of its neighbors are removed.  
Therefore, we apply the pruning until all the vertices satisfy the condition.  
If there is still a vertex colored with~$k$ after this pruning step, then, by construction, there is a weak Grundy coloring achieving color~$k$ in $G$ (by coloring first the vertices $v$ such that $\text{col}(v)=1$, then the vertices $v$ such that $\text{col}(v)=2$, and so on, up to $k$).  

If there is no weak Grundy (minimal) $k$-witness, this computation always rejects. 
Otherwise, it accepts only if a witness is well-colored by the random coloring.  
By Observation~\ref{obs:basic-props2}, a weak Grundy $k$-witness (as a Grundy $k$-witness) has size at most $2^{k-1}$.  
At worst, there is a unique weak Grundy witness of size $2^{k-1}$ admitting a unique coloring.  
The probability to find this witness in one trial is $\frac{1}{k^{2^{k-1}}}$.  
Therefore, by repeating the previous step $\log(\frac{1}{\varepsilon})k^{2^{k-1}}$ times, we find a solution with probability at least $1-\varepsilon$, for any $\varepsilon>0$.  
Overall, the running time is $O(k^{2^{k-1}}(n+m)n)=O^*(2^{2^{O(k)}})$.
\end{proof}
}

We observe that the algorithm of Theorem~\ref{thm:weakFPT} can be derandomized using so-called \emph{universal coloring families}~\cite{Alon1995}.

Unfortunately, the approach used to prove Theorem~\ref{thm:weakFPT} does not work for \PBG because we have no guarantee that the color classes are independent sets.

\subsection{\PBG parameterized by $k$ is in $\fpt$ on special graph classes}

For each fixed~$k$, \PBG can be solved in polynomial time~\cite{Zaker} and thus \PBG parameterized by the number $k$ of colors is in $\xp$. However (unlike \wGC, as seen in Theorem~\ref{thm:weakFPT}), it is unknown whether \PBG is in $\fpt$ when parameterized by $k$. We will next show that it is indeed the case when restricting the instances to $H$-minor-free, chordal and claw-free graphs.\confversion{ Note that \PBG{} is $\np$-complete on chordal graphs~\cite{thesis} and on claw-free graphs~\cite{Zaker-cobip}.}
%

\begin{theorem}\label{prop:FPT-H-minor-free}
 \PBG parameterized by the number of colors is in $\fpt$ for the class of graphs excluding a fixed graph $H$ as a minor.
\end{theorem}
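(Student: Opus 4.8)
The plan is to exploit two structural facts: by Observation~\ref{obs:basic-props2} a minimal $k$-witness has at most $2^{k-1}$ vertices, and by Theorem~\ref{th:sparse-logn} together with the linear edge bound for $H$-minor-free graphs, any $H$-minor-free graph with $\G(G)\geq k$ already satisfies $k\leq c_H\log n$ for a constant $c_H$ depending only on $H$; in particular, once $n$ is large relative to $k$, we are in a regime where $k$ is ``small'' compared to $n$, but we still need a genuinely $\fpt$ (not just $\xp$) dependence. The key realization is that the relevant witnesses not only have bounded size but bounded \emph{radius}: by Observation~\ref{obs:basic-props}, a minimal $k$-witness lies inside the distance-$k$ ball around its top-colored vertex. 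So $\G(G)\geq k$ if and only if some distance-$k$ ball $B_v$ in $G$ contains a minimal $k$-witness, i.e.\ has $\G(G[B_v])\geq k$.

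First I would reduce to the case of bounded diameter: for each vertex $v$, let $B_v$ be its distance-$k$ neighborhood; then $\G(G)\geq k$ iff $\G(G[B_v])\geq k$ for some $v$, and each $G[B_v]$ is $H$-minor-free of diameter at most $2k$. Now there are two sub-cases depending on the size of $B_v$ relative to a threshold $g(k)$ to be chosen. If $|B_v|\leq g(k)$, we can decide $\G(G[B_v])\geq k$ by brute force over all $\binom{|B_v|}{2^{k-1}}$ candidate induced subgraphs (Corollary~\ref{cor:gr-xp}), which costs $f(k)g(k)^{2^{k-1}}$ — a function of $k$ only. If $|B_v|>g(k)$, I want to argue that $G[B_v]$ automatically contains a minimal $k$-witness, so we may accept immediately. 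This is where the structural theory of $H$-minor-free graphs enters: an $H$-minor-free graph of diameter $\leq 2k$ has treewidth at most $f(H)\cdot 2k$ (Theorem~\ref{thm:bded-LTW}, since $H$-minor-free with $H$ an apex graph is not needed here — but for general $H$ we instead use that bounded-treewidth follows from small diameter only in the apex case, so for general $H$ I would instead invoke that an $H$-minor-free graph on $N$ vertices has a vertex of degree bounded by... no). Let me restate: for general $H$, I would use that $H$-minor-free graphs have bounded degeneracy, hence contain large ``complete-ish'' witness structures once they are big; more precisely, use Theorem~\ref{th:sparse-logn} in the contrapositive together with the fact that a sufficiently large connected $H$-minor-free graph of small diameter has enough vertices packed near a single vertex to force a high Grundy number. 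Concretely: if $G[B_v]$ is connected with $N$ vertices, it has a spanning tree of depth $\leq 2k$, hence some vertex with $\geq N^{1/(2k)}$ children-subtree mass, and iterating one extracts a binomial-tree-like witness $T_k$ as a minor, but we need it as an \emph{induced} subgraph — so the cleaner route is: apply Theorem~\ref{th:sparse-logn} to $G[B_v]$ itself to get $\G(G[B_v])\geq \log_{c}(N) - O(1)$ is the wrong direction (that's an upper bound).

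Let me fix the argument: the honest approach is not ``large ball forces large Grundy'' but rather \textbf{bounded treewidth of the balls} in the apex-minor-free case already gives a clean $\fpt$ algorithm, and for general $H$ one uses a different bound. So the main step is: each $G[B_v]$ is $H$-minor-free with $O(\mathrm{poly}(|B_v|))$ edges, and we want $\G(G[B_v])\geq k$; run the Telle–Proskurowski dynamic program. For this we need treewidth bounded in terms of $k$ alone. In the apex case Theorem~\ref{thm:bded-LTW} gives treewidth $O(k)$ directly, finishing with running time $f(k)\cdot n^{O(k^2)}$ — but that is only $\xp$! To get genuine $\fpt$ I would instead combine: (i) the witness has $\leq 2^{k-1}$ vertices and radius $\leq k$, and (ii) within a bounded-treewidth graph, one can enumerate in $\fpt$ time all connected subgraphs of size $\leq 2^{k-1}$ containing a fixed vertex — since a graph of treewidth $w$ has at most $n\cdot 2^{O(w\cdot 2^{k})}$ such subgraphs, and each can be checked by Corollary~\ref{cor:gr-xp} in time depending only on $k$. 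For general $H$ (no apex assumption), treewidth of small-diameter balls need not be bounded, so I would instead use that $H$-minor-free graphs have \emph{bounded local treewidth after excluding an apex} fails, and the correct tool is the \emph{Robertson–Seymour structure theorem} or, more simply, the observation that $H$-minor-free graphs are $d_H$-degenerate, so a distance-$k$ ball of size $N$ contains a subgraph where some vertex has degree $\geq N/(2kd_H)$ within the ball; recursing on high-degree vertices at each distance level extracts an induced $k$-witness once $N\geq (2kd_H)^{2^{k}}$ or so. The hard part will be making this last extraction rigorous — showing that a large, low-degeneracy, small-diameter graph necessarily contains an \emph{induced} minimal $k$-witness (equivalently, has Grundy number $\geq k$) — because degeneracy controls edge density but not the branching structure needed for a greedy coloring directly; I expect to need a careful inductive construction of nested independent dominating sets (Lemma~\ref{lem:independent-dominating}) inside the ball, peeling off one maximal independent set per color level, and arguing that $k$ levels survive because each peeling removes only a degeneracy-bounded fraction near the center.
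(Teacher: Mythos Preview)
Your proposal never arrives at a complete argument for general $H$-minor-free graphs; what you wrote is an exploration with several branches, each of which either stalls or applies only to the apex case. The ``large ball forces $\G\geq k$'' route cannot be salvaged as stated: Theorem~\ref{th:sparse-logn} only gives an \emph{upper} bound on the Grundy number from sparsity, and there is no converse that makes a large low-degeneracy small-diameter graph have large Grundy number (e.g.\ the complete bipartite graph $K_{2,N}$ is $2$-degenerate, has diameter $2$, arbitrarily many vertices, and Grundy number $3$). Your peeling-independent-sets sketch does not produce a $k$-witness. Separately, your dismissal of the apex-minor-free branch as ``only $\xp$'' is a miscalculation: with treewidth $w=O(k)$ the Telle--Proskurowski bound $k^{O(w)}2^{O(wk)}n$ becomes $2^{O(k^2)}n$, which \emph{is} $\fpt$ in $k$; but this still does not address general $H$, since Theorem~\ref{thm:bded-LTW} genuinely requires $H$ to be an apex graph.

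The paper's proof bypasses all of this with a single external hammer you did not invoke: on any proper minor-closed class, \textsc{Induced Subgraph Isomorphism} is $\fpt$ when parameterized by the order of the pattern (Flum--Grohe). Since a minimal $k$-witness has at most $2^{k-1}$ vertices (Observation~\ref{obs:basic-props2}), there are only $f(k)$ many candidate witnesses up to isomorphism; enumerate them and test each one as an induced subgraph of $G$ in $\fpt$ time. Observation~\ref{obs:minimal} finishes the argument. This is the missing idea: rather than analysing balls, degeneracy, or treewidth, one reduces directly to bounded-size pattern detection, for which $H$-minor-freeness alone already guarantees an $\fpt$ algorithm.
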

{
\begin{proof}
By Observation~\ref{obs:minimal}, $G$ contains a minimal $k$-witness $H$ as an induced subgraph if and only if $\G(G)\geq k$. 
By Observation~\ref{obs:basic-props2}, a minimal $k$-witness has at most $2^{k-1}$ vertices.
So, the number of minimal $k$-witnesses (up to isomorphism) is bounded by a function of $k$.
Besides, \textsc{$H$-Induced Subgraph Isomorphism} is in $\fpt$ when parameterized by $|V(H)|$ on graphs excluding $H$ as a minor~\cite{FG01}. 
Therefore, one can check if $\G(G)\geq k$ by solving \textsc{$H$-Induced Subgraph Isomorphism} for all minimal $k$-witnesses $H$.
\end{proof}
}

\fullversion{We have the following corollary of the algorithm of Telle and Proskurowski~\cite{TP97}. Note that \PBG{} is $\np$-complete on chordal graphs~\cite{thesis}.}

\begin{theorem}\label{th:chordalFPT}
Let $\mathcal C$ be a graph class for which every member $G$ satisfies
$tw(G)\leq f(\G(G))$ for some function $f$. Then, \PBG{} parameterized by the number of colors is in $\fpt$ on $\mathcal C$. In particular, \PBG{} is in $\fpt$ on chordal graphs.
\end{theorem}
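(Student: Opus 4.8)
The plan is to reduce \PBG on the class $\mathcal C$ to \PBG on bounded-treewidth graphs by combining the hypothesis $tw(G) \leq f(\G(G))$ with the known facts about minimal witnesses and the algorithm of Telle and Proskurowski~\cite{TP97}. First I would argue that it suffices to decide, given $G \in \mathcal C$ and an integer $k$, whether $\G(G) \geq k$, in $\fpt$ time parameterized by $k$. By Observation~\ref{obs:minimal}, $\G(G) \geq k$ holds if and only if $G$ contains a minimal $k$-witness as an induced subgraph, and by Observation~\ref{obs:basic-props2} every such witness has at most $2^{k-1}$ vertices, hence treewidth at most $2^{k-1}-1$ (or more crudely, bounded by a function of $k$). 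So the obstacle is that a \emph{YES}-instance may itself have large treewidth even though the relevant certificate does not; I need to extract a low-treewidth induced subgraph that still contains all the information.

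The key step is the following dichotomy. Run the $O(n^{3w^2})$-time algorithm of~\cite{TP97} on $G$ after first computing (or approximating) $tw(G)$. If $tw(G) \leq f(k)$, then $w = tw(G)$ is bounded by a function of $k$ and we solve \PBG directly on $G$ in time $n^{O(f(k)^2)}$, which is $\fpt$ in $k$ — actually to get genuine $\fpt$ (rather than $\xp$-looking) running time I would instead combine the bound with the $\fpt$-in-$k+w$ running time $k^{O(w)}2^{O(wk)}n$ of~\cite{TP97}, which with $w \leq f(k)$ becomes $g(k)\cdot n$ for a computable $g$. If on the other hand $tw(G) > f(k)$, then by the contrapositive of the hypothesis $tw(G) \leq f(\G(G))$ we get $\G(G) > k$ (using monotonicity of $f$, which we may assume WLOG by replacing $f$ with $k \mapsto \max_{j\leq k} f(j)$), and in particular $\G(G) \geq k$, so we may immediately answer \emph{YES}. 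Thus in both branches we decide the instance in $\fpt$ time. Since treewidth can be computed exactly in $\fpt$ time parameterized by $tw$ (Bodlaender), and here we only need to test $tw(G) \leq f(k)$ which is an $\fpt$ test in the parameter $f(k)$, hence in $k$, this step is algorithmically sound.

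For the ``in particular'' clause, I would invoke the classical fact that chordal graphs satisfy $tw(G) = \omega(G) - 1 \leq \chi(G) - 1 \leq \G(G) - 1$, so the hypothesis holds with $f(x) = x - 1$ (here $\omega$ is the clique number and we use $\chi(G) = \omega(G)$ for perfect graphs, together with $\chi(G) \leq \G(G)$ noted in the introduction). Plugging this $f$ into the general statement gives that \PBG parameterized by the number of colors is in $\fpt$ on chordal graphs.

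The main obstacle I anticipate is purely bookkeeping: making sure the running time one extracts from~\cite{TP97} is genuinely of the form $h(k)\cdot n^{O(1)}$ rather than $n^{h(k)}$ in the small-treewidth branch. The excerpt quotes~\cite{TP97} as running in $k^{O(w)}2^{O(wk)}n$, which is $\fpt$ in $k+w$; substituting $w \leq f(k)$ collapses this to a function of $k$ times $n$, so the argument goes through, but one must state the WLOG-monotonicity of $f$ and the $\fpt$ treewidth test carefully. Everything else (the witness-size bound, the contrapositive step) is immediate from results already established in the excerpt.
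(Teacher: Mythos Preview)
Your proof is correct and follows the same approach as the paper: invoke the $\fpt$-in-$(k+w)$ algorithm of~\cite{TP97} and combine it with the hypothesis $tw(G)\leq f(\G(G))$ via the dichotomy ``either $tw(G)\leq f(k)$ and we run the algorithm, or $tw(G)>f(k)$ and we answer YES''; for chordal graphs both you and the paper use $tw(G)=\omega(G)-1\leq \G(G)-1$. The paper simply compresses your dichotomy into the phrase ``the first claim is immediate'' and uses $\omega(G)\leq\G(G)$ directly rather than routing through $\chi$; your opening paragraph on minimal witnesses is unnecessary here but harmless.
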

{
\begin{proof}
Since \PBG{} is in $\fpt$ for parameter combination of the number of colors and the treewidth~\cite{TP97}, the
first claim is immediate. Moreover $\omega(G)\leq\G(G)$, hence if
$tw(G)\leq f(\omega(G))$ we have $tw(G)\leq f(\G(G))$. For
any chordal graph $G$, $tw(G)=\omega(G)-1$~\cite{Bodlaender1993}.
\end{proof}
}

The following shows that, unlike the classical \textsc{Coloring} problem, which remains $\np$-hard on degree $4$ graphs, \PBG is $\fpt$ when parameterized by the maximum degree $\Delta(G)$.

\begin{proposition}\label{prop:FPT-delta}
\PBG{} can be solved in time $O\left(nk^{\Delta^{k+1}}\right)=n\Delta^{\Delta^{O(\Delta)}}$ for graphs of maximum degree $\Delta$.
\end{proposition}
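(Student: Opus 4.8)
The plan is to combine two facts established earlier in the paper: first, by Observation~\ref{obs:minimal}, $\G(G) \geq k$ iff $G$ contains a minimal $k$-witness as an induced subgraph; and second, by Observation~\ref{obs:basic-props}, any minimal $k$-witness $W$ is entirely contained in the distance-$k$ neighborhood of the (unique) vertex colored $k$ in a Grundy $k$-coloring of $W$. The point of bounding the degree is that the distance-$k$ neighborhood of any fixed vertex $v$ in a graph of maximum degree $\Delta$ has at most $1 + \Delta + \Delta(\Delta-1) + \cdots \leq \Delta^{k+1}$ vertices (a crude but sufficient bound, since by Observation~\ref{obs:deg} a vertex colored $k$ has degree at least $k-1$, so we may also assume $\Delta \geq k-1$, and anyway if $\Delta < k-1$ there is trivially no $k$-witness). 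So every minimal $k$-witness, if one exists, lives inside one of the $n$ balls $B_k(v)$ of radius $k$, each of which has bounded size $\Delta^{O(k)}$.

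The algorithm is then: for each vertex $v \in V(G)$, form the induced subgraph $G[B_k(v)]$ on the distance-$k$ neighborhood of $v$; this subgraph has at most $N := \Delta^{k+1}$ vertices. On $G[B_k(v)]$, brute-force over all vertex orderings (or, more efficiently, over all subsets as a witness together with their colorings) to decide whether $\G(G[B_k(v)]) \geq k$; answer YES iff this holds for some $v$. Correctness is immediate from the two observations above: if $\G(G) \geq k$ then a minimal $k$-witness exists, it is contained in $B_k(v)$ for $v$ its top-colored vertex, so the greedy test on $G[B_k(v)]$ succeeds; conversely any $k$-coloring found inside a ball is a genuine Grundy coloring of an induced subgraph of $G$, witnessing $\G(G) \geq k$.

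For the running time: testing whether a graph on $N$ vertices has Grundy number at least $k$ can be done by checking, for each of the $\binom{N}{2^{k-1}} \leq N^{2^{k-1}}$ candidate vertex subsets, whether it admits a Grundy $k$-coloring (Corollary~\ref{cor:gr-xp} with $n$ replaced by $N$), or simply by a direct dynamic-programming / enumeration argument; either way the cost per ball is $k^{N^{O(1)}}$-type, and substituting $N = \Delta^{k+1}$ and being slightly careful with the bookkeeping yields the stated bound $O\!\left(n k^{\Delta^{k+1}}\right)$, which is $n\,\Delta^{\Delta^{O(\Delta)}}$ after absorbing $k \leq \Delta+1$ into the exponent tower. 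I do not expect a genuine obstacle here — the only mildly delicate point is making the arithmetic of the ball size and the witness-enumeration cost line up exactly with the claimed exponent $\Delta^{k+1}$ rather than something slightly larger, and using Observation~\ref{obs:deg} to justify restricting attention to the regime $k \leq \Delta+1$ so that the final double-exponential-in-$\Delta$ form is legitimate.
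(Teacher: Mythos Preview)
Your approach is essentially the paper's: restrict to the distance-$k$ ball around each vertex (size $N \leq \Delta^{k+1}$) via Observations~\ref{obs:minimal} and~\ref{obs:basic-props}, then brute-force inside each ball, using Observation~\ref{obs:deg} to bound $k \leq \Delta+1$. The one point where you are vaguer than the paper is the enumeration: the paper simply tests all $k^{N}$ $k$-colorings of the ball and checks validity, which directly yields the stated $O\!\left(n\,k^{\Delta^{k+1}}\right)$ bound without any subset or ordering detour.
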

{
\begin{proof}
\sloppy
By Observation~\ref{obs:basic-props}, one can enumerate every distance-$k$ neighborhood of each vertex, test every $k$-coloring of this neighborhood, and check if it is a valid Grundy $k$-coloring. 
Every such neighborhood has size at most $\Delta^{k+1}\leq \Delta^{\Delta+2}$ since by Observation~\ref{obs:deg}, $k\leq \Delta+1$. 
Finally, there are at most $k^a$ $k$-colorings of a set of $a$ elements.
\end{proof}
}

\fullversion{We have the following corollary of Proposition~\ref{prop:FPT-delta}. Note that \PBG{} is $\np$-complete on claw-free graphs~\cite{Zaker-cobip}.}

\begin{corollary}\label{cor:fpt-chordal}
Let $\mathcal C$ be a graph class for which every member $G$ satisfies
$\Delta(G)\leqslant f(\G(G))$ for some function $f$. Then,
\PBG{} parameterized by the number of colors is in $\fpt$ for graphs in
$\mathcal C$. In particular, this holds for the class of claw-free
graphs.
\end{corollary}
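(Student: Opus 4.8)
The plan is to reduce to Proposition~\ref{prop:FPT-delta}, which already yields an $\fpt$ algorithm as soon as the maximum degree is bounded by a function of $k$; the only extra ingredient is the observation that a graph in $\mathcal C$ of large maximum degree is automatically a \textsc{yes}-instance, so the (potentially expensive) algorithm of Proposition~\ref{prop:FPT-delta} is invoked only when $\Delta(G)$, and hence its running time, is controlled by $k$. This mirrors the argument behind Theorem~\ref{th:chordalFPT}.

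Concretely, I would first replace $f$ by its monotone hull $\hat f(x):=\max\{f(0),f(1),\ldots,f(x)\}$, which is non-decreasing and still satisfies $\Delta(G)\leq \hat f(\G(G))$ for every $G\in\mathcal C$ (and is computable whenever $f$ is). Given an instance $(G,k)$ with $G\in\mathcal C$, compute $\Delta(G)$ in linear time and distinguish two cases. If $\Delta(G)>\hat f(k-1)$, answer \textsc{yes}: otherwise we would have $\G(G)\leq k-1$, and then monotonicity would force $\Delta(G)\leq \hat f(\G(G))\leq \hat f(k-1)$, a contradiction. If instead $\Delta(G)\leq \hat f(k-1)$, run the algorithm of Proposition~\ref{prop:FPT-delta}; its running time $O(n\,k^{\Delta(G)^{k+1}})$ is then at most $O(n\,k^{\hat f(k-1)^{k+1}})$, which is of the form $h(k)\cdot n$ for a computable $h$. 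Correctness is immediate from Proposition~\ref{prop:FPT-delta} in the second case and from the implication established above in the first, so \PBG restricted to $\mathcal C$ is in $\fpt$ for parameter $k$.

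For the ``in particular'' claim it remains to check that claw-free graphs obey the hypothesis with $f(x)=2x$. Let $v$ be a vertex of degree $\Delta(G)$ in a claw-free graph $G$. Since $G$ has no induced $K_{1,3}$, the set $N(v)$ contains no independent set of size $3$, i.e.\ $\alpha(G[N(v)])\leq 2$; hence every proper colouring of $G[N(v)]$ has all colour classes of size at most $2$ and therefore uses at least $|N(v)|/2=\Delta(G)/2$ colours. As $G[N(v)]$ is an induced subgraph of $G$, this gives $\chi(G)\geq \chi(G[N(v)])\geq \Delta(G)/2$, and combining with the standard inequality $\chi(G)\leq\G(G)$ we obtain $\Delta(G)\leq 2\,\G(G)$, so the first part applies. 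No genuine obstacle is anticipated here; the only mild subtlety is the case split, which is precisely what removes the need to know $\G(G)$ in advance.
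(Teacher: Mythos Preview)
Your proof is correct and follows essentially the same approach as the paper: reduce the first part to Proposition~\ref{prop:FPT-delta}, and for claw-free graphs bound $\Delta(G)$ via $\G(G)\geq\chi(G)\geq\chi(G[N(v)])\geq\Delta(G)/2$ using that $\alpha(G[N(v)])\leq 2$. Your write-up is in fact more careful than the paper's (which just says the first part ``directly follows''), since you make explicit the case split via the monotone hull $\hat f$ that handles the subtlety of not knowing $\G(G)$ in advance.
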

{
\begin{proof} The first part directly follows from Proposition~\ref{prop:FPT-delta}. For the second part, consider a claw-free graph $G$ and a vertex $v$ of degree $\Delta(G)$ in $G$. Since $G$ is claw-free, the subgraph induced by the neighbors of $v$ has independence number at most~$2$, and hence $\G(G)\geqslant \chi(G)\geqslant\chi(N(v))\geqslant\frac{\Delta(G)}{2}$.
\end{proof}
}

\section{Negative results}\label{sec:negative}

In this section, we present our algorithmic lower bounds.

\subsection{A lower bound for \wGC and \PBG under the ETH}

We now present two similar reductions that (under the ETH) rule out algorithms for \wGC and \PBG with a running time that is sub-double-exponential in $k$ and sub-exponential in the instance size. 
In particular, this shows that the $\fpt$ algorithm for \wGC of Theorem~\ref{thm:weakFPT} has a near-optimal running time, assuming the ETH.

The property "$k \leqslant 1 + w \log n$" (which also holds for weak Grundy colorings \cite{TP97}), entails that a running time $O^*(2^{2^{o(\frac{k}{w})}})$ is in fact subexponential-time $2^{o(n)}$.
Therefore, if a subexponential-time algorithm (in the number of vertices) is proven unlikely, we would immediately obtain the conditional lower bound of $O^*(2^{2^{o(\frac{k}{w})}})$.
Though, it is unclear whether the reductions from the literature on Grundy colorings allow to rule out a subexponential-time algorithm for \PBG (or \wGC) under ETH.
More importantly, what we prove next in Theorem~\ref{thm:weakFPT-lowerbound} is a stronger lower bound, since the treewidth disappears in the denominator of the second exponent.

\begin{theorem}\label{thm:weakFPT-lowerbound}
If \wGC or \PBG is solvable in time $O^*(2^{2^{o(k)}}2^{o(n+m)})$ on graphs with $n$ vertices and $m$ edges, then the ETH fails.
\end{theorem}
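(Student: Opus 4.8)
The plan is to reduce from \textsc{3-SAT} (or rather a constraint-satisfaction variant suited to the double-exponential blow-up) in such a way that a \textsc{3-SAT} instance with $N$ variables produces a \wGC (resp.\ \PBG) instance with $k = O(\log N)$ colors and $n + m = O(\mathrm{poly}(N))$ vertices and edges. If such an instance could be solved in time $O^*(2^{2^{o(k)}} 2^{o(n+m)})$, then substituting $k = O(\log N)$ gives $2^{2^{o(\log N)}} = 2^{N^{o(1)}}$ from the first factor and $2^{o(\mathrm{poly}(N))}$ from the second; the delicate point is that we must arrange the polynomial in $n+m$ to be \emph{linear} (or near-linear) in $N$ so that $2^{o(n+m)} = 2^{o(N)}$, and we must make the first factor genuinely $2^{o(N)}$ as well. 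Concretely I would aim for $n + m = O(N \cdot \mathrm{polylog}\, N)$ and $k = \log N + O(\log\log N)$, so both factors collapse to $2^{o(N)}$ (using the sparsification lemma to start from a linear-size \textsc{3-SAT} instance, so that ``$o(N)$'' in the number of variables is the same as ``$o(n+m)$'' in the SAT instance size). This would contradict the ETH.

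The heart of the construction is the gadget from Lemma~\ref{lem:tree-coloring-partial} (and its weak analogue with condition (ii$'$)): attaching a suitably pruned binomial tree $T_s$ to a set $F = \{f_1, \dots, f_p\}$ of ``port'' vertices forces, in any Grundy coloring reaching color $s$ at the root, at least one vertex of $N(f_i) \setminus V(T)$ to take color $a_i$ (and, in the proper case, none to take $a_i + 1$). I would use this to build a ``selection'' gadget whose ports see a block of $O(\log N)$ vertices, so that the choice of which of them receives the forced color encodes, in binary, one of $N$ possible values — this is exactly where the exponential compression $2^{\Theta(k)} \approx N$ comes from. Then I would chain several layers of such gadgets: one layer to select a variable assignment (a point in $\{0,1\}^N$ encoded across $O(\log N)$ ``address'' vertices together with $O(N)$ ``value'' vertices), and a checking layer, again built from pruned binomial trees, that is colorable to the top color $k$ if and only if every clause of the \textsc{3-SAT} instance is satisfied by the selected assignment. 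The target color $k$ is the height of the topmost binomial tree, which is $O(\log(\text{total gadget size})) = O(\log N)$ provided every intermediate gadget has $\mathrm{poly}(N)$ size.

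The main obstacle — and where most of the work will go — is the clause-checking layer: one must verify $\Theta(N)$ clauses (after sparsification) against an assignment that is itself only available in the ``distributed'' form of which ports got which forced colors, and do so while keeping the total number of vertices near-linear in $N$ and the number of colors logarithmic. Binomial trees are expensive (a $T_s$ has $2^{s-1}$ vertices), so we can only afford trees of height $O(\log N)$, i.e.\ $\mathrm{poly}(N)$ size each, and only polynomially many of them; the challenge is to route the $\Theta(N)$ independent clause checks through a shared logarithmic-depth structure without blowing up either parameter, essentially realizing a $\log N$-depth ``circuit'' for the CNF-evaluation function using the forced-color mechanism as the only available gate. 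I would handle this by grouping clauses and variables into $\mathrm{polylog}$-many buckets and using one moderately large binomial-tree gadget per bucket, with the address vertices shared across buckets; a secondary obstacle is then ruling out ``cheating'' colorings that reach color $k$ without respecting the intended semantics, which is precisely what the ``no vertex of $N(f_i) \setminus V(T)$ is colored $a_i+1$'' clause of Lemma~\ref{lem:tree-coloring-partial} is designed to prevent, and which must be checked layer by layer by the same strong-induction argument used in the proof of Lemma~\ref{lem:tree-coloring}. Finally, since the weak-Grundy analogue of Lemma~\ref{lem:tree-coloring-partial} lacks the ``$a_i+1$ forbidden'' half, the \wGC reduction needs a slightly different rigidity argument (or extra gadgetry forcing the same effect), so I would present the \PBG reduction first and then indicate the modifications for \wGC.
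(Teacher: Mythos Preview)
Your high-level target is right: you want a reduction from (some variant of) \textsc{3-SAT} that produces an instance with $k=\Theta(\log m)$ colors and $n+m$ linear in the SAT instance size, so that a $2^{2^{o(k)}}2^{o(n+m)}$ algorithm would give a $2^{o(n+m)}$ algorithm for SAT and contradict the ETH. But you have badly misjudged where the difficulty lies, and as a result your plan is an order of magnitude more complicated than necessary.

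The ``main obstacle'' you identify---routing $\Theta(N)$ clause checks through a shared logarithmic-depth structure, binary address/value encodings, bucket grouping, multi-layer circuits---is a non-problem. A single binomial tree $T_k$ with $k=\lceil\log m\rceil+O(1)$ already has $2^{k-O(1)}\geq m$ dominant subtrees of any fixed constant height, and its total size is $2^{k-1}=O(m)$. So the paper simply takes one such $T_k$, removes $m$ dominant copies of $T_3$ (one per clause), and links each parent $f_j$ to a fresh clause vertex $v(C_j)$. By Lemma~\ref{lem:tree-coloring-partial}, the root gets color $k$ iff every $v(C_j)$ can be colored~$3$; and $v(C_j)$ is adjacent to the (at most three) literal vertices of $C_j$, so it gets color~$3$ iff it sees both a~$1$ and a~$2$ among them. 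The variable gadget is then trivial: for \wGC a star with center $c$ and leaves $v(x_i)$ (color $c$ with $1$, then each $v(x_i)$ freely with $1$ or $2$---properness is not required); for \PBG a matching $v(\neg x_i)v(x_i)$. The need for both truth values in each clause is why the source problem is \textsc{Monotone 3-NAE-SAT} rather than plain \textsc{3-SAT}. That is the entire construction; it has $O(n+m)$ vertices and edges.

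Two smaller points. First, your concern about the ``$a_i+1$ forbidden'' side of Lemma~\ref{lem:tree-coloring-partial} is overblown: once the target color at the ports is a small constant (here $3$), one observes in one line that coloring a neighbor with a higher color cannot help achieve a lower one, so the condition is vacuous. Second, your ordering of the two reductions is backwards: the \wGC case is the \emph{easier} one (no properness to enforce), and the \PBG case is obtained by a one-line change to the variable gadget, not the other way around.
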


{
\begin{proof}
We first give the reduction for \wGC.

In \textsc{Monotone 3-NAE-SAT}, being given a $3$-CNF formula without negation, one is asked to find a truth assignment such that every clause contains a true literal and a false literal.
The \textsc{Monotone 3-NAE-SAT} problem (also called \textsc{Positive 3-NAE-SAT}) with $n$ variables and $m$ clauses, is not solvable in time $2^{o(n+m)}$, unless the ETH fails \cite{JansenLL13}.
More precisely, in the technical report version of the aforementioned paper of Jansen et al.~\cite{Jansen2013}, the authors present a reduction from \textsc{3-SAT} to \textsc{Monotone 3-NAE-SAT}, producing instances with $O(n+m)$ variables and clauses.
We now build from an instance of  \textsc{Monotone 3-NAE-SAT} $\mathcal C=\{C_1, \ldots,C_m\}$ over the variables $X=\{x_1, \ldots, x_n\}$, an equivalent instance of \wGC with $O(n+m)$ vertices and clauses, and $k:=\lceil \log m \rceil + 5$.

We remove, from a binomial tree $T_k$, rooted at $r$, $m$ dominant subtrees $T_3$.
This is possible since the number of such subtrees is $2^{k-3-2}=2^{\lceil \log m \rceil + 5 - 3 - 2}=2^{\lceil \log m \rceil} \geqslant m$.
We call $T$ the tree that we obtain by this process.
We denote by $f_1, \ldots, f_m$ the parents of those removed subtrees, and we link, for each $j \in [m]$, $f_i$ to a new vertex $v(C_j)$ representing the clause $C_j$.
For each $i \in [n]$, we add a star $K_{1,n}$, whose center is denoted by $c$ and whose leaves are denoted by $v(x_i)$, and that represents the variables.
We link each vertex $v(x_i)$ to vertex $v(C_j)$ if variable $x_i$ appears in clause $C_j$. This ends the construction of the graph $G$.

Let us first show that $\G'(G)=k$ if and only if $r$ can be colored $k$.
By Observation~\ref{obs:deg}, the only vertices that can (potentially) be colored with color $k$ are $r$, $r(k-1)$, $c$, and the $v(x_i)$'s.
We already remarked that if $r(k-1)$ can be colored $k$, then, so does $r$ (Lemma~\ref{lem:tree-coloring}).
What remains to prove is that neither $c$ nor any of the $v(x_i)$'s cannot be colored $k$.
The neighbors of a vertex $v(x_i)$ are $c$ and some vertices $v(C_j)$, whose degree is bounded by $4$ (recall that the clauses contain at most three variables).
Thus, $v(x_i)$ can have in its neighborhood at most six distinct colors, and its color can be at most $7$.
Similarly, the neighbors of $c$ are the $v(x_i)$'s, so the color of vertex $c$ can be at most $8$.
We can assume that $\lceil \log m \rceil > 3$ (and, $k>8$) since otherwise the instance is of constant size.
Therefore, $\G'(G)=k$ if and only if $r$ can be colored $k$, which means that, by applying Lemma~\ref{lem:tree-coloring-partial} with induced subtree $T$, we have $\G'(G)=k$ if and only if $v(C_j)$ can be colored $3$, for each $j \in [m]$, without first coloring any of the $f_j$'s.

Now, suppose that $\mathcal C$ is satisfiable.
Let $\psi$ be a satisfying truth assignment of $\mathcal C$.
Then, we can color each vertex $v(C_j)$ with color $3$ in the following way.
We first color $c$ with color $1$.
Then, for each $i \in [n]$, we color $v(x_i)$ with $1$ if $x_i$ is set to false by $\psi$, and with $2$ if it is set to true.
Recall that the weak Grundy coloring does not need to be proper.
As each clause $C_j$ has at least one variable $x_{i_1}$ set to true and at least one variable $x_{i_2}$ set to false, $v(C_j)$ has in its neighborhood a vertex $v(x_{i_1})$ colored $2$ and a vertex $v(x_{i_2})$ colored $1$.
Hence, $v(C_j)$ can be colored $3$; moreover, we have not colored any vertex $f_j$, and we are done.

Conversely, suppose that $v(C_j)$ can be colored $3$, for each $j \in [m]$, without coloring first any of the $f_j$'s.
Then, in the neighborhood of each $v(C_j)$ deprived of the $f_j$'s, there should be one vertex $v(x_{i_1})$ colored $2$ and one vertex $v(x_{i_2})$ colored $1$.
Therefore, the truth assignment $\psi$ setting $x_i$ to true if $v(x_i)$ has been colored $2$ and to false if $v(x_i)$ has been colored $1$ or has not been colored, satisfies $\mathcal C$.

In conclusion, we showed that $\G'(G)=k$ if and only if $\mathcal C$ is satisfiable.
The number $N$ of vertices of the graph $G$ is bounded by $n+1+2^{\lceil \log m \rceil+4} \leqslant n+32m+1=O(n+m)$.
The number of edges $M$ is bounded by $n+3m+2^{\lceil \log m \rceil+4} \leqslant n+35m=O(n+m)$
Thus, solving \wGC in time $O^*(2^{2^{o(k)}}2^{o(N+M)})=O^*(2^{o(m)}2^{o(n+m)})=O^*(2^{o(n+m)})$ would solve \textsc{Monotone 3-NAE-SAT} in subexponential-time, disproving the ETH.

For \PBG, we use a similar reduction by replacing the star $K_{1,n}$ encoding the variables by a matching of $n$ edges $v(\neg x_i)v(x_i)$ where $v(\neg x_i)$ is a new vertex having only one neighbor: $v(x_i)$.
Then, the proof carries over: in a Grundy coloring, one could color $v(x_i)$ with color $1$ or $2$, by first coloring $v(\neg x_i)$ with color $1$.
\end{proof}
}

The behavior shown by \wGC is rare, and up to our knowledge, the only other known example for which an $O^*(2^{2^{O(k)}})$ is optimal under the ETH (with $k$ the natural parameter)
is the \textsc{Edge Clique Cover} problem~\cite{CyganPP13}. For the \textsc{Edge Clique Cover} problem, where one wants to cover all the edges of a graph by a minimum number $k$ of cliques, only an algorithm running in time $O^*(2^{2^{o(k)}}2^{o(n)})$ would disprove ETH.
The number of edges in the produced instance of \textsc{Edge Clique Cover} has to be superlinear.
Indeed, otherwise the maximum clique would be of constant size, and the parameter $k$ would be at least linear in the number of vertices $n$, when it should in fact be logarithmic in $n$.
Therefore, \wGC seems to be the first problem for which an $O^*(2^{2^{o(k)}}2^{o(n+m)})$-algorithm is shown to be unlikely, while an $O^*(2^{2^{O(k)}})$-algorithm exists.

\subsection{Lower bound on the treewidth dependency for \PBG and \wGC}\label{subsec:ETH}
Let us recall that the algorithm for \PBG and \wGC running in time $n^{O(w^2)}$ of Telle and Proskurowski comes from a $2^{O(w k)}n$-algorithm and the fact that $k \leqslant w \log n+1$~\cite{TP97}.

An interesting observation is that an algorithm for \PBG or \wGC running in time $O^*(k^{O(w)})=O^*(2^{O(w\log k)})$, where $w$ is the treewidth of the input graph, would imply an $\fpt$ algorithm for the parameter treewidth alone.

\begin{observation}\label{obs:tw-fpt}
If \PBG or \wGC can be solved in time $O^*(k^{O(w)})$ on instances of treewidth $w$, then it can be solved in time $O^*(2^{O(w \log w)})$.
\end{observation}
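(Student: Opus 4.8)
The plan is to leverage the known bound $k \le 1 + w\log_2 n$ (which holds for both the Grundy number and the weak Grundy number, see~\cite{TP97}): whenever $k$ is \emph{not} bounded by a fixed polynomial in $w$, this bound forces $n$ to be so large that the factor $k^{O(w)}$ is negligible next to $\mathrm{poly}(n)$, and hence gets absorbed by the $O^*(\cdot)$ notation. So suppose we are handed an algorithm $A$ solving \PBG{} (respectively \wGC) in time $O^*(k^{cw})$ for some constant $c$ on graphs of treewidth $w$. Given an instance $(G,k)$, let $w := tw(G)$ (we may assume $w\ge 1$, the edgeless case being trivial; if $w$ is not supplied, a constant-factor approximation suffices and does not affect the asymptotics). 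First discard the easy case: if $k > 1 + w\log_2 n$, answer NO by the cited bound. Hence we may assume $k \le 1 + w\log_2 n$.

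Next I would split on the size of $k$ relative to $w$. If $k \le w^{3}$, then $k^{cw} \le w^{3cw} = 2^{3c\,w\log_2 w} = 2^{O(w\log w)}$, so $A$ already runs within the claimed bound. If instead $k > w^{3}$, combine this with $k \le 1 + w\log_2 n \le w(1+\log_2 n)$ (using $w\ge 1$) to get $w^{2} < 1 + \log_2 n$, i.e.\ $w = O(\sqrt{\log n})$. Feeding this back into the bound on $k$ gives $k \le 1 + w\log_2 n = O\big((\log n)^{3/2}\big)$, hence $\log_2 k = O(\log\log n)$. Therefore $k^{cw} = 2^{c\,w\log_2 k} = 2^{O(\sqrt{\log n}\,\log\log n)}$, and since $\sqrt{\log n}\,\log\log n = o(\log n)$ this equals $2^{o(\log n)} = n^{o(1)}$, which is absorbed into the $O^*$. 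In both cases $A$ runs in time $O^*(2^{O(w\log w)})$, proving the observation.

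The one point that must be gotten right is the choice of threshold in the case analysis. Splitting at $k \le w^{2}$ versus $k > w^{2}$ is \emph{not} sufficient: in the large-$k$ branch one would only deduce $w = O(\log n)$, giving $k^{O(w)} = n^{O(\log\log n)}$, which is merely quasi-polynomial and is not swallowed by $O^*$. Raising the threshold to $w^{3}$ (any fixed exponent strictly greater than $2$ works) is exactly what forces $w = o(\sqrt{\log n})$, which in turn makes $w\log_2 k$ sublogarithmic in $n$. Beyond that, the argument is routine bookkeeping with the inequality $k\le 1+w\log_2 n$; I expect no further obstacle.
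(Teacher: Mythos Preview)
Your proof is correct. The approach, however, differs from the paper's. The paper dispenses with any case analysis by invoking the inequality $(\log x)^y \le y^{2y}\,x$ (valid for $x,y$ large enough), which immediately gives $(\log n)^{O(w)} = O^*\big(w^{O(w)}\big)$; combined with $k \le 1 + w\log n$ one obtains $k^{O(w)} \le (w\log n)^{O(w)} = w^{O(w)}(\log n)^{O(w)} = O^*\big(w^{O(w)}\big) = O^*\big(2^{O(w\log w)}\big)$ in a single line. Your route---splitting at the threshold $k=w^3$ and arguing that in the large-$k$ regime the quantity $k^{cw}$ is $n^{o(1)}$---reaches the same conclusion by a more hands-on computation. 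The paper's argument is shorter and avoids the threshold calibration you correctly flag as delicate; your argument, on the other hand, is entirely self-contained and makes explicit the mechanism (polylogarithmic $w$ when $k$ is large relative to $w$) that the inequality $(\log x)^y \le y^{2y}x$ hides. Either is fine here.
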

\begin{proof}
Since, as mentioned before, $k \leqslant w \log n+1$~\cite{TP97} and using the fact that $\forall x,y > 0, (\log x)^y \leq y^{2y}x$, we have $O^*(k^{O(w)})=O^*(w^{O(w)}(\log n)^{O(w)})=O^*(w^{O(w)})=O^*(2^{O(w \log w)})$.
\end{proof}


Note that there are $k^w$ possible $k$-colorings of a bag of size $w$, hence an algorithm for \PBG or \wGC running in time $O^*(k^{O(w)})$ could be based on dynamic programming over a tree decomposition (and would greatly improve over the running time of the algorithm of~\cite{TP97}). Although we do not know whether such an algorithm exists, we now show that (assuming the ETH), one cannot hope for a significantly better running time (even when replacing the treewidth by the larger parameter ``feedback vertex set number''). The reduction has some similarities with the reduction from Theorem~\ref{thm:weakFPT-lowerbound}, but it is more involved since we need to additionally lower the value of the treewidth.

\begin{theorem}\label{thm:tw-LB-ETH}
If \PBG or \wGC is solvable in time $O^*(2^{o(w \log w)})$ on graphs with feedback vertex set at most~$w$, then the ETH fails.
\end{theorem}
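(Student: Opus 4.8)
The plan is to build on the strategy of Theorem~\ref{thm:weakFPT-lowerbound} but to reduce from a more structured problem whose ETH lower bound is itself of the form $2^{o(n\log n)}$, so that the ``$\log w$'' in the exponent survives. The natural candidate is a ``$k\times k$''-style problem: something like \MClique{} on instances with $k$ groups of $n$ vertices, or more directly a CSP / \textsc{$k\times k$-Clique}-type problem, for which the ETH rules out $2^{o(k\log k)}$-time algorithms (this is the now-standard framework used to prove $w\log w$ lower bounds, e.g.\ for \textsc{List Coloring} parameterized by treewidth). First I would fix such a source problem $\Pi$ with parameter $t$ (number of groups / width) and $N$ total elements, known under the ETH not to admit a $2^{o(t\log t)}$-time (equivalently $N^{o(t/\log t)}$-type) algorithm, and aim to produce a \PBG{} (resp.\ \wGC{}) instance whose feedback vertex set number $w$ is $O(t)$ and whose number of colors $k$ is $O(\log N + \text{poly}(t))$ or similar, so that a hypothetical $O^*(2^{o(w\log w)})$ algorithm would contradict the ETH for $\Pi$.

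The key steps, in order, would be: (1) Encode each of the $t$ groups by a small ``selector gadget'' — a bounded number of vertices, arranged so that a valid Grundy coloring forces a choice among the $n$ candidates of that group; crucially the union of these selector gadgets should form a feedback vertex set (or be made so after deleting $O(t)$ vertices), so the rest of the graph is a forest. (2) Reuse the binomial-tree machinery of Lemma~\ref{lem:tree-coloring-partial}: attach pruned binomial trees $T_k$ (minus dominant subtrees $T_{a_i}$) whose leaves connect to the gadgets, so that reaching color $k$ at the root $r$ forces specific colors at the attachment points $f_i$; this is exactly the ``force a vertex of a prescribed set to receive a prescribed color'' primitive. (3) Implement the consistency checks of $\Pi$ (e.g.\ ``the chosen elements of groups $i$ and $j$ are compatible'') by tree-like verification gadgets hung off the selector gadgets, again using Lemma~\ref{lem:tree-coloring-partial} so that a check passes iff the appropriate small color appears in a prescribed neighborhood. (4) Verify that $\G(G)=k$ (resp.\ $\G'(G)=k$) iff the $\Pi$-instance is a YES-instance, using Observation~\ref{obs:deg} to rule out spurious vertices being colored $k$, exactly as in the proof of Theorem~\ref{thm:weakFPT-lowerbound}. (5) Bound the feedback vertex set: all gadgets except the $O(t)$ selector vertices (and the $O(1)$-degree roots) should induce a forest, giving $w=O(t)$; and bound $k=O(\log N)$ or $k=O(t + \log N)$ so that $2^{o(w\log w)} = 2^{o(t\log t)}$ translates to a sub-$2^{o(t\log t)}$ algorithm for $\Pi$.

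The main obstacle I expect is \textbf{simultaneously} keeping the feedback vertex set small (linear in $t$, not in $n$ or $m$) \textbf{and} routing enough information to encode all the pairwise consistency constraints of $\Pi$. The binomial-tree gadgets are trees, which is good for treewidth/FVS, but the ``wires'' that carry a group's choice to all the clauses/checks involving it naturally create many cycles through the selector vertices; one must argue those cycles all pass through the $O(t)$ selector vertices. A delicate point is that the selector gadget for a group of size $n$ must distinguish $n$ options using only $O(\log n)$ ``bits'', each bit realized by forcing a color in $\{1,\dots,O(\log n)\}$ at some vertex — so the number of colors $k$ grows like $\log n$, which is fine, but one must ensure these $O(\log n)$ bit-vertices per group do not blow up the FVS (they should hang as a forest off a single selector vertex per group). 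Finally, as in Theorem~\ref{thm:weakFPT-lowerbound}, the \PBG{} version requires replacing the ``weak'' shortcuts (coloring a vertex $2$ freely) by attaching a pendant vertex so that the color $1$/$2$ choice can be made properly; I would handle \wGC{} first and then note the pendant-vertex modification carries the argument over to \PBG, checking that the pendant vertices add only leaves and hence do not affect the feedback vertex set bound.
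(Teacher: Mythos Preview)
Your high-level strategy is sound and shares the binomial-tree scaffolding with the paper, but the route and the key encoding mechanism differ, and your selector gadget as described has a gap.

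The paper does \emph{not} go through an intermediate $k\times k$-style problem with a built-in $2^{o(t\log t)}$ lower bound. It reduces directly from \textsc{SAT} using the grouping technique of Lokshtanov--Marx--Saurabh: the $n$ variables are split into a constant number $q$ of groups of size $\lceil n/q\rceil$, and the $2^{\lceil n/q\rceil}$ assignments of one group are injected into the permutations $\mathfrak{S}_t$ for $t=\Theta(n/(q\log(n/q)))$. Each group $X_h$ is then represented in $G$ by a clique $S_h$ of size $t$; the crucial observation is that a first-fit coloring of a $t$-clique is exactly a permutation of $[t]$, so one clique on $t$ vertices encodes $t!\approx 2^{t\log t}$ choices. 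The feedback vertex set is $\bigcup_h S_h$, of size $qt=O(t)$, and a hypothetical $2^{o(w\log w)}$ algorithm becomes $2^{o(t\log t)}=2^{o(n)}$ for \textsc{SAT}. The binomial tree $T_s$, the removed dominant subtrees, and repeated applications of Lemma~\ref{lem:tree-coloring-partial} then force, for each clause $C_j$, that some vertex $v(j,\tau)$ reaches color $t+2$, which in turn forces the clique $S_h$ to be colored according to the permutation $\zeta(\tau)$.

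The gap in your plan is the selector gadget. You propose to distinguish $n$ options per group via ``$O(\log n)$ bit-vertices'' that ``hang as a forest off a single selector vertex''. In a Grundy coloring, vertices in a forest do not carry independent bits: the color of each vertex is determined by its already-colored neighbors, and a tree on $b$ vertices supports at most $O(\log b)$ colors with essentially one coloring per root (Lemma~\ref{lem:tree-coloring}). You will not get $2^b$ distinguishable states out of a forest of size $b$; you need a structure whose set of Grundy colorings is superexponential in its size, and a clique is exactly that. This clique-permutation trick is the missing idea. If instead you intended to reduce from $k\times k$ \textsc{Permutation Clique} (so that each group already has only $k$ options), you would still need a gadget where one vertex can receive any prescribed color in $[k]$---which again essentially requires adjacency to a $(k{-}1)$-clique and brings you back to the paper's mechanism. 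Once you adopt the clique encoding, the rest of your plan (forcing colors via pruned binomial trees, Observation~\ref{obs:deg} to localize the top color at the root, pendant vertices for the proper-coloring variant) matches the paper.
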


{
\begin{proof}
We describe the proof for \PBG, but the same proof also works for \wGC.

We build from an instance of \Sat an equivalent instance of \PBG with subexponentially many vertices and sublinear feedback vertex set number.
We rely on the grouping technique (similarly to \cite{LokshtanovSODA}) that uses the fact that the number of permutations over a slightly sublinear number of elements is still exponential. 
We also make multiple applications of Lemma~\ref{lem:tree-coloring-partial}. 

Let $\mathcal C = \{C_1,\ldots,C_m\}$ be the $m$ clauses of an instance of \Sat over the set of variables $X=\{x_1,\ldots,x_n\}$.
Let $q$ be a positive integer that we will fix later.
We partition arbitrarily $X$ into $q$ sets $X_1, \ldots, X_q$ called \emph{groups}, each of size at most $\lceil \frac{n}{q} \rceil$.
A \emph{group assignment} is a truth assignment of the variables of $X_h$ for some $h \in [q]$.
A group assignment \emph{satisfies} a clause if it sets to true at least one of its literals (even if some variables of the clause are not instantiated).
By potentially adding dummy variables, we may assume that $|X_h|=\lceil \frac{n}{q} \rceil$, for each $h \in [q]$.
We also fix an arbitrary order of the variables within each group $X_h$, so that an assignment of $X_h$ can be seen as a word of $\{0,1\}^{\lceil \frac{n}{q} \rceil}$. 
Let $t=\lceil 3n/(q \log \frac{n}{q}) \rceil$ and recall that $\mathfrak{S}_t$ is the symmetric group.
We fix an arbitrary one-to-one function $\zeta: \{0,1\}^{\lceil \frac{n}{q} \rceil} \rightarrow \mathfrak{S}_t$ mapping a group assignment to a permutation over $t$ elements.
Such a function exists, since $|\mathfrak{S}_t|=t!>(\frac{t}{3})^t \geqslant 2^{3n(\log \frac{n}{q} - \log \log \frac{n}{q})/(q \log \frac{n}{q})}>2^{\lceil n/q \rceil}$.
Finally, we set $s=\lceil \log m \rceil + 2t + 4$.

We now describe the instance graph $G$ of \PBG.
We remove, from a binomial tree $T_s$ rooted at $r$, $m$ (arbitrary) dominant subtrees $T_{t+2}$.
This is possible since, in $T_s$, there are $2^{s-t-4}=2^{\lceil \log m \rceil + t} \geqslant m$ dominant trees $T_{t+2}$.
We denote by $f_1,\ldots,f_m$ the $m$ parents of those $m$ removed subtrees.
We call $T$ the tree that we have obtained so far.
For each clause $C_j$ ($j \in [m]$) and for each group assignment $\tau$ (of some group $X_h$) satisfying $C_j$, we add a vertex $v(j,\tau)$ that we link to $f_j$.
We denote by $I_j$ the set of vertices $v(j,\cdot)$.
Vertex $v(j,\tau)$ also becomes the \emph{root} of a binomial tree $T_{t+2}$ from which we remove the dominant subtree of each of its children (except for the child $v(j,\tau)(1)$ which is a leaf and therefore has no dominant subtree).
We call that tree $T(j,\tau)$.
Now, for each group $X_h$ ($h \in [q]$), we add a clique $S_h=\{s_h^1,\ldots,s_h^t\}$ on $t$ vertices.
For each vertex $v(j,\tau)$, if $\tau$ is an assignment of the group $X_h$ (for some $h \in [q]$) and $\sigma=\zeta(\tau)$, we link $v(j,\tau)(p+1)$ to $s_h^{\sigma(p)}$, for each $p \in [t]$.

This ends the construction of graph $G$ (see Figure~\ref{fig:hardness}\confversion{ in the appendix} for an illustration).
The number $N$ of vertices of $G$ is upper-bounded by $mq2^{\lceil \frac{n}{q} \rceil}2^{t+1}+2^{s-1}+qt=O(mq2^{2t+\frac{n}{q}})$. The set $\bigcup_{h \in [q]}S_h$ is a feedback vertex set of $G$ of size $qt$.

{
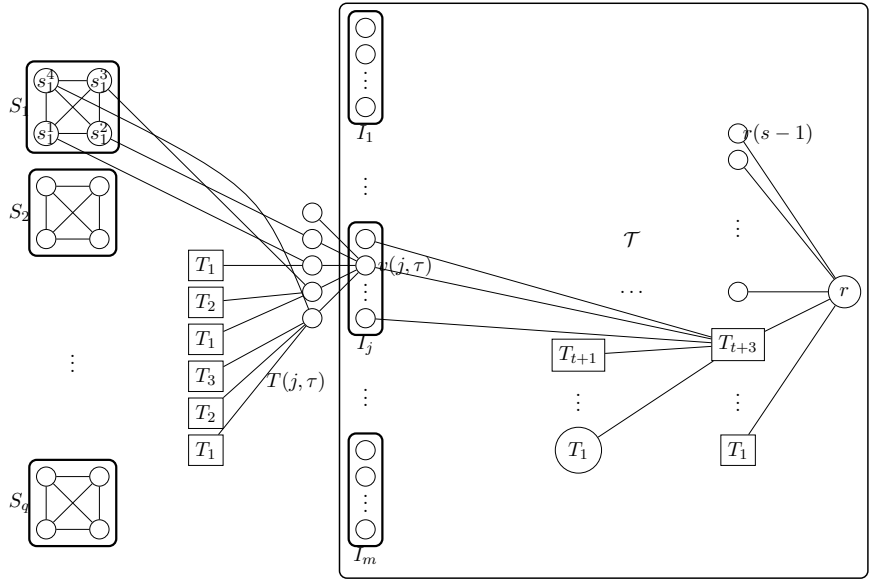
\begin{figure}
\centering
\scalebox{0.7}{\begin{tikzpicture}


\begin{scope}
\node (ta) at (-0.5,0.5) {$S_1$};
\node[draw,circle,inner sep=-0.03cm] (a1) at (0,0) {$s_1^1$};
\node[draw,circle,inner sep=-0.03cm] (a2) at (1,0) {$s_1^2$};
\node[draw,circle,inner sep=-0.03cm] (a3) at (0,1) {$s_1^4$};
\node[draw,circle,inner sep=-0.03cm] (a4) at (1,1) {$s_1^3$};
\node[fit=(a1) (a3) (a2) (a4), draw,  very thick, rectangle, rounded corners] {};
\draw (a1) -- (a2) -- (a4) -- (a3) -- (a1) -- (a4) ;
\draw (a2) -- (a3) ;
\end{scope}

\begin{scope}[yshift=-2cm]
\node (tb) at (-0.5,0.5) {$S_2$};
\node[draw,circle] (b1) at (0,0) {};
\node[draw,circle] (b2) at (1,0) {};
\node[draw,circle] (b3) at (0,1) {};
\node[draw,circle] (b4) at (1,1) {};
\node[fit=(b1) (b3) (b2) (b4), draw,  very thick, rectangle, rounded corners] {};
\draw (b1) -- (b2) -- (b4) -- (b3) -- (b1) -- (b4) ;
\draw (b2) -- (b3) ;
\end{scope}

\node () at (0.5,-4.25) {$\vdots$};

\begin{scope}[yshift=-7.5cm]
\node (tc) at (-0.5,0.5) {$S_q$};
\node[draw,circle] (c1) at (0,0) {};
\node[draw,circle] (c2) at (1,0) {};
\node[draw,circle] (c3) at (0,1) {};
\node[draw,circle] (c4) at (1,1) {};
\node[fit=(c1) (c3) (c2) (c4), draw,  very thick, rectangle, rounded corners] {};
\draw (c1) -- (c2) -- (c4) -- (c3) -- (c1) -- (c4) ;
\draw (c2) -- (c3) ;
\end{scope}


\begin{scope}[xshift=6cm, yshift=1cm]
\node[draw,circle] (i11) at (0,1) {};
\node[draw,circle] (i12) at (0,0.5) {};
\node () at (0,0.1) {$\vdots$};
\node[draw,circle] (i1l) at (0,-0.5) {};
\node[fit=(i11) (i12) (i1l), draw,  very thick, rectangle, rounded corners] (i1) {};
\node () at (0,-1) {$I_1$};
\end{scope}

\node () at (6,-0.9) {$\vdots$};

\begin{scope}[xshift=6cm,yshift=-3cm]
\node[draw,circle] (i21) at (0,1) {};
\node[draw,circle] (i22) at (0,0.5) {};
\node (ti22) at (0.75,0.5) {$v(j,\tau)$};
\node () at (0,0.1) {$\vdots$};
\node[draw,circle] (i2l) at (0,-0.5) {};
\node[fit=(i21) (i22) (i2l), draw, very thick, rectangle, rounded corners] (ij) {};
\node () at (0,-1) {$I_j$};
\end{scope}

\node () at (6,-4.9) {$\vdots$};

\begin{scope}[xshift=6cm,yshift=-7cm]
\node[draw,circle] (im1) at (0,1) {};
\node[draw,circle] (im2) at (0,0.5) {};
\node () at (0,0.1) {$\vdots$};
\node[draw,circle] (iml) at (0,-0.5) {};
\node[fit=(im1) (im2) (iml), draw, very thick, rectangle, rounded corners] (im) {};
\node (tim) at (0,-1) {$I_m$};
\end{scope}


\begin{scope}[xshift=2cm,yshift=-0.5cm]
\node[draw,circle] (s1) at (3,-1) {};
\node[draw,circle] (s2) at (3,-1.5) {};
\node[draw,circle] (s3) at (3,-2) {};
\node[draw,circle] (s4) at (3,-2.5) {};
\node[draw,circle] (s5) at (3,-3) {};
\end{scope}

\begin{scope}[yshift=-0.7cm]
\node[draw,rectangle] (s31) at (3,-1.8) {$T_1$};

\node[draw,rectangle] (s41) at (3,-2.5) {$T_2$};
\node[draw,rectangle] (s42) at (3,-3.2) {$T_1$};

\node[draw,rectangle] (s51) at (3,-3.9) {$T_3$};
\node[draw,rectangle] (s52) at (3,-4.6) {$T_2$};
\node[draw,rectangle] (s53) at (3,-5.3) {$T_1$};

\node (tjtau) at (4.7,-4) {$T(j,\tau)$};
\end{scope}

\draw (i22) -- (s1) ;
\draw (i22) -- (s2) ;
\draw (i22) -- (s3) ;
\draw (i22) -- (s4) ;  
\draw (i22) -- (s5) ; 

\draw (s2) -- (a2);
\draw (s3) -- (a1);
\draw (s4) -- (a4);
\draw[-] (s5) .. controls (4,-1) .. (a3);

\draw (s3) -- (s31) ;

\draw (s4) -- (s41) ;
\draw (s4) -- (s42) ;

\draw (s5) -- (s51) ;
\draw (s5) -- (s52) ;
\draw (s5) -- (s53) ; 

\begin{scope}[xshift=14cm,yshift=-4cm]
\node[draw,circle] (t) at (1,1) {$r$};

\node at (-3,1) {$\ldots$};

\node[draw,circle] (t1) at (-1,4) {};
\node (rs) at (-0.25,4) {$r(s-1)$};
\node[draw,circle] (t2) at (-1,3.5) {};
\node () at (-1,2.3) {$\vdots$};
\node[draw,circle] (tl) at (-1,1) {};
\node[draw,rectangle] (tt3) at (-1,0) {$T_{t+3}$};
\node[] (tt2) at (-1,-1) {$\vdots$};
\node[draw,rectangle] (tt1) at (-1,-2) {$T_1$};

\begin{scope}[yshift=-6cm]
\node[draw,circle] (t11) at (-4,4) {$T_1$};
\node[draw,rectangle] (tt13) at (-4,5.8) {$T_{t+1}$};
\node (tt12) at (-4,5) {$\vdots$};
\end{scope}


\end{scope}

\begin{scope}[xshift=1cm,yshift=-2cm]
\node[fit=(i1) (tim) (t), draw, thick, rectangle, rounded corners] () {};
\node () at (10,0) {$\mathcal T$} ;
\end{scope}

\draw (t) -- (t1) ;
\draw (t) -- (t2) ;
\draw (t) -- (tl) ;
\draw (t) -- (tt3) ;
\draw (t) -- (tt1) ;


\draw(tt3) -- (i21) ;
\draw(tt3) -- (i22) ;
\draw(tt3) -- (i2l) ;

\draw(tt3) -- (t11);

\draw(tt3) -- (tt13);


\end{tikzpicture}}

\caption{A sample of the construction of graph $G$. The edges incident to the rectangular boxes containing $T_i$s are only incident to the root of the tree. For the sake of readability, only one $T(j,\tau)$ is represented. Here, $v(j,\tau)$ represents an assignment of group $X_1$ mapped to the permutation $\sigma=(12)$.}
\label{fig:hardness}
\end{figure}
}

We now show that $\mathcal C$ is satisfiable if and only if $\Gamma(G) \geqslant s$. 
The proof goes as follows:\\
(1) $\Gamma(G) \geqslant s$ if and only if $r$, the root of $T$, can be colored with color $s$;\\ 
(2) by Lemma~\ref{lem:tree-coloring-partial} on $T$, this is equivalent to color a vertex in each $I_j$ with color $t+2$;\\
(3) by Lemma~\ref{lem:tree-coloring-partial} applied to the set of all trees $T(j,\tau)$, this is equivalent to a property $(\mathcal P)$ (that we will define later) on the coloring of the cliques $S_h$'s;\\
(4) $\mathcal C$ is satisfiable implies $(\mathcal P)$;\\
(5) $(\mathcal P)$ implies $\mathcal C$ is satisfiable.

First, we show the equivalence (1) that $\Gamma(G) \geqslant s$ if and only if $r$ can be colored $s$. Assume that $\Gamma(G) \geqslant s$ (the other implication is trivial). By Observation~\ref{obs:deg}, the only vertices (besides $r$) whose degree are (or at least may be) sufficient to be colored $s$ are $r(s-1)$ (but we already noticed in Lemma~\ref{lem:tree-coloring} that $r(s-1)$ can be colored $s$ if and only if this is the case for $r$) and the vertices of the cliques $S_h$'s.
The vertices in $N(S_h)$ have degree at most $t+1$, hence their color can be at most $t+2$.
Thus, the number of distinct colors that a vertex of $S_h$ can see in its neighborhood is at most $t+2+(t-1)=2t-1$. Hence, its color cannot exceed $2t$, which is strictly smaller than $s$. Hence, $r$ (or $r(s-1)$) has to be the vertex colored $s$.

To see that~(2)holds, observe that, by Lemma~\ref{lem:tree-coloring-partial} applied to the induced tree $T$ and the set of parents $F=\{f_1,\ldots,f_m\}$ of removed subtrees, $\Gamma(G) \geqslant s$ if and only if there is a Grundy coloring of $G-T$ coloring at least one vertex of $I_j$ with color $t+2$, without coloring any vertex of any $I_j$ with color $t+3$.
This latter condition can be omitted, since, in order to color a vertex with color $t+2$, first coloring other vertices with color $t+3$ or more is not helpful.
We can now remove $T$ from the graph $G$ and equivalently ask if one can color with $t+2$ at least one vertex in each set $I_j$ ($j \in [m]$) in this new graph $G'$.

For each $j \in [m]$, for every vertex $v(j,\tau) \in I_j$, we apply Lemma~\ref{lem:tree-coloring-partial} with the induced tree $T(j,\tau)$ and the set of parents $\{v(j,\tau)(2),\ldots,v(j,\tau)(t+1)\}$: $v(j,\tau)$ can be colored with color $t+2$ if and only if $s_h^{\sigma(p)}$ can be colored with $p$, for each $p \in [t]$, without coloring first any vertex of $T(j,\tau)$ (where $\sigma=\zeta(\tau)$ and $\tau$ is an assignment of the group $X_h$).
As $S_h$ is a $t$-clique, receiving each color from $1$ to $t$ cannot benefit from coloring vertices of $N(S_h)$ first.
Thus, we will assume that all the $S_h$'s are colored first.

We call $(\mathcal P)$ the property:\\
$\forall j \in [m]$, $\exists v(j,\tau) \in I_j$, such that $\forall p \in [t]$, $s_h^{\zeta(\tau)(p)}$ has color $p$.

So far, we have shown that $\G(G)\geqslant s$ if and only if $(\mathcal P)$ holds.
We now show that $(\mathcal P)$ holds if and only if $\mathcal C$ is satisfiable.

Assume $\mathcal C$ is satisfiable.
Let $\psi$ be a satisfying global assignment.
Let $\tau_h$ be the projection of $\psi$ to $X_h$ for each $h \in [q]$, and $\sigma_h=\zeta(\tau_h)$.
We color the cliques $S_h$'s such that $s_h^{\sigma_h(p)}$ is colored $p$, for each $p \in [t]$; that is, we first color $s_h^{\sigma_h(1)}$, then $s_h^{\sigma_h(2)}$, and so on, up to $s_h^{\sigma_h(t)}$.
Now, for each clause $C_j$, there is a literal of $C_j$ which is set to true by $\psi$. 
Say, this literal is on a variable of $X_h$ for some $h \in [q]$.
Then, the group assignment $\tau_h$ satisfies $C_j$.
Therefore, for each $j \in [m]$, vertex $v(j,\tau_h) \in I_j$ exists and $\forall p \in [t]$, $s_h^{\zeta(\tau_h)(p)}$ has color $p$. 

Assume now that the $S_h$'s has been colored first, and such that $(\mathcal P)$ holds.
For each $h \in [q]$, let $\sigma_h$ be the permutation of $\mathfrak{S}_t$ such that $\sigma_h(p)$ is defined as the index in $S_h$ of the vertex colored $p$.
This is well-defined since the vertices of the clique $S_h$ have each color of $[t]$ exactly once.
Now, let $\tau_h$ be the assignment of the group $X_h$ such that $\sigma_h=\zeta(\tau_h)$.
Group assignment $\tau_h$ is unique since $\zeta$ is one-to-one, and exists for $(\mathcal P)$ to hold.
Let $\psi$ be the global assignment whose projection to each $X_h$ is $\tau_h$.
By $(\mathcal P)$, for each $j \in [m]$, there is vertex $v(j,\tau) \in I_j$ such that $\forall p \in [t]$, $s_h^{\zeta(\tau)(p)}$ has color $p$, for some $h \in [q]$.
This vertex has in fact to be $v(j,\tau_h)$ since clique $S_h$ has been colored such that $s_h^{\sigma_h(p)}$ has color $p$.
So, the group assignment $\tau_h$ satisfies $C_j$.
Therefore, $\psi$ is a satisfying global assignment, hence $\mathcal C$ is satisfiable.

\medskip

Suppose there is an algorithm solving \PBG on graphs with $N$ vertices and feedback vertex set $w$ in time $2^{o(w \log w)}N^c$ for some constant $c$. Recall that in $G$, we have $N=O(mq2^{2t+\frac{n}{q}})$ and $w\leq qt$.
Assuming the ETH, there is a constant $s_3 > 0$ such that \Sat (even \textsc{3-SAT}) is not solvable in time $O(2^{s_3 n})$.
Setting $q=\lceil \frac{2c}{s_3} \rceil$, one can solve \Sat in time $O(2^{qt \log (qt)}(mq)^c2^{2t}2^{\frac{cn}{q}})=O(2^{o(3n(\frac{\log n - \log \log n + \log \log q + \log 3}{\log n - \log q}))}(mq)^c2^{\frac{6n}{q \log n/q}}2^{\frac{s_3 n}{2}})=O((mq)^c2^{o(n)}2^{o(n)}2^{\frac{s_3 n}{2}})$ that is $O(2^{s_3 n})$, contradicting the ETH.
\end{proof}

Note that in reduction of the proof of Theorem~\ref{thm:tw-LB-ETH}, we had $2^{o(s \log s)}=2^{o(n+m)}$, so we even proved that \PBG and \wGC cannot be solved in time $O^*(2^{o(k \log k)}2^{o(w \log w)})$ unless the ETH fails (where $k$ is the number of colors).

\subsection{\cGC is NP-hard for $k=7$ colors}

Minimal connected Grundy $k$-witnesses, contrary to minimal Grundy $k$-witnesses (Observation~\ref{obs:basic-props2}), have arbitrarily large order: for instance, the cycle $C_n$ of order~$n$ ($n > 4$, $n$ odd) has a Grundy $3$-witness of order~$4$, but its unique \emph{connected} Grundy $3$-witness is of order~$n$: the whole cycle.  

Observe that $\cG(G) \leqslant 2$ if and only if $G$ is bipartite. 
Hence, \cGC is polynomial-time solvable for any $k \leqslant 3$. 
However, we will now show that the problem is already $\np$-hard when $k=7$, contrary to \PBG and \wGC which are polynomial-time solvable whenever $k$ is a constant (Corollary~\ref{cor:gr-xp} and Theorem~\ref{thm:weakFPT}). 
Thus, in the terminology of parameterized complexity, \cGC is $\paranp$-hard. 
 


\begin{theorem}\label{thm:connected-not-xp}
\cGC is $\np$-hard even for $k=7$.
\end{theorem}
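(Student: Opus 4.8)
The plan is to reduce from a suitable NP-hard problem, and the most natural candidate is \Sat{} (or a restricted variant such as \textsc{3-SAT} or \textsc{Monotone 3-NAE-SAT}, whichever makes the gadgets cleanest), following the same overall philosophy as the reductions in the proof of Theorems~\ref{thm:weakFPT-lowerbound} and~\ref{thm:tw-LB-ETH}: build a graph $G$ together with a distinguished "target" vertex $r$, and arrange that $\cG(G)\geq 7$ if and only if, in some connected Grundy coloring, $r$ can receive color~$7$, which in turn will be possible exactly when the input formula is satisfiable. First I would fix the value $k=7$ (the smallest constant for which the construction can be made to work given the lower-order obstructions coming from Observation~\ref{obs:deg}) and design a "spine" or "backbone" gadget whose purpose is to force the first-fit algorithm to use up colors $1,\dots,6$ in a controlled way before $r$ is reached, while respecting the connectivity constraint on the vertex ordering. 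Unlike in the Grundy or weak Grundy settings, I cannot simply prepend all the "color-$1$" vertices, then all the "color-$2$" vertices, etc.; the prefix of the ordering must stay connected at every step, so the backbone must be laid out as a path-like structure that the ordering can traverse contiguously.

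The core of the reduction is a variable-selection gadget and a clause-checking gadget, connected through the backbone so that the only way to propagate a high color along the spine toward $r$ is to make, at each variable, a binary choice (which of two branches to enter first), and to have every clause gadget "see" enough colors in its neighborhood — which happens precisely when at least one of its literals was set true. Connectivity is the lever that forces a genuine choice: because the colored region must always be connected, once the ordering commits to extending the spine through a particular gadget it cannot "jump back" and fix an inconsistent choice elsewhere, so the assignment read off from the ordering is globally consistent. I would make each gadget a small fixed graph (independent of $n$ and $m$) so that the bound $\cG(G)\le\Delta(G)+1$ and the degree bound of Observation~\ref{obs:deg} rule out any vertex other than the intended one(s) getting color~$7$; the $7$ is chosen large enough to leave room for the backbone bookkeeping but small enough that the construction is still a constant.

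The two directions then run as follows. Given a satisfying assignment, I would exhibit an explicit connected vertex ordering: traverse the backbone, at each variable gadget entering the branch dictated by the truth value, at each clause gadget entering through a satisfied literal, and finish at $r$; a routine check confirms first-fit assigns colors $1,\dots,6$ along the way and $7$ to $r$, and that every prefix is connected. Conversely, given a connected Grundy $7$-coloring, I would argue that $r$ (or its unique degree-$6$ twin, as in Lemma~\ref{lem:tree-coloring}) must be the vertex colored~$7$; trace back through the backbone to recover, at each variable gadget, which branch was colored first, hence a truth assignment, and show that the color-$7$ requirement at $r$ forces every clause gadget to have been entered through a satisfied literal, so the assignment satisfies the formula. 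The main obstacle I anticipate is precisely the connectivity constraint: in the (weak) Grundy reductions one freely reorders color classes, but here every gadget must be designed so that (a) it admits a connected first-fit ordering realizing the intended colors, (b) no "shortcut" ordering lets a clause gadget borrow colors it should not have access to, and (c) the feasible connected orderings cannot smuggle in an inconsistent assignment; getting a single fixed gadget that simultaneously satisfies all three, with only $7$ colors available, is the delicate part of the argument.
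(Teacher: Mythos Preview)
Your strategy is essentially the paper's: reduce from a \Sat variant (the paper uses \textsc{3-SAT 3-OCC}, chosen so that literal vertices have bounded degree), build a constant-size backbone, variable gadgets, and a clause path, and let connectivity force a consistent traversal; the bulk of the paper's argument is exactly the sequence of claims pinning down backbone colors and ruling out shortcut orderings that you anticipate as the delicate part.

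Two implementation points differ from your sketch and may save you trouble. First, color~$7$ is \emph{not} achieved at the backbone ``root'': instead the backbone forces a designated cut vertex $a_{27}$ to receive color~$6$, and $a_{27}$ is adjacent to two vertices of a $K_6$, one of which then gets color~$7$. This sidesteps the difficulty of forcing six distinct colors onto the neighbors of a single tree-like vertex under the connectivity constraint. Second, the variable gadgets are not a ``which-branch-first'' choice along the spine but a fan of $n$ triangles $\{v,v_i,\overline{v_i}\}$ all sharing a hub $v$ (forced to color~$2$), so each pair $\{v_i,\overline{v_i}\}$ receives colors $\{1,3\}$ and the truth value is encoded by which side gets the~$3$. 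The clauses form a path whose vertices $c_j$ are additionally wired to three fixed backbone vertices that are forced to colors $1,2,3$; hence every $c_j$ must receive color $\geq 4$, which requires a neighbor colored~$3$ among its literal vertices---precisely a satisfied literal.
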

\begin{proof}
We give a reduction from \textsc{3-SAT 3-OCC}, an $\np$-complete restriction of \textsc{3-SAT} where each variable appears in at most three clauses~\cite{Papadimitriou1994}, to \cGC with $k=7$.
We first give the intuition of the reduction. 
The construction consists of a tree-like graph of constant order (resembling binomial tree $T_6$) whose root is adjacent to two vertices of a $K_6$ (this constitutes $W$) and contains three special vertices $a_4$, $a_{21}$, and $a_{24}$ (which will have to be colored with colors~$1$, $3$, and~$2$ respectively), a connected graph $P_1$ which encodes the variables and a path $P_2$ which encodes the clauses. 
One in every three vertices of $P_2$ is adjacent to $a_4$, $a_{21}$ and $a_{24}$. 
To achieve color~$7$, we will need to color those vertices with color strictly greater than~$3$. 
This will be possible if and only if the assignment corresponding to the coloring of $P_1$ satisfies all the clauses.  

We now formally describe the construction.
Let $\phi=(X=\{x_1,\ldots,x_n\},\mathcal C=\{C_1,\ldots,C_m\})$ be an instance of \textsc{3-SAT 3-OCC} where no variable appears always as the same literal.
$P_1=(\{i_1,i_2,v\}\cup\{v_i,\overline{v_i} $ $|$ $ i \in [n]\},\{\{i_1,i_2\},\{i_2,v\}\} \cup \{\{v,v_i\}\cup \{v,\overline{v_i}\}\cup \{v_i,\overline{v_i}\}$ $|$ $ i \in [n]\})$ consists of $n$ triangles sharing the vertex $v$.
$P_2=(\{p_j $ $|$ $j \in [3m-1]\},\{\{p_j,p_{j+1}\} $ $|$ $ j \in [3m-2])$ consists of a path of length $3m-1$.
For each $j \in [m]$ and $i \in [n]$, $c_j\stackrel{def}{=}p_{3j-1}$ is adjacent to $v_i$ if $x_i$ appears positively in $C_j$, and is adjacent to $\overline{v_i}$ if $x_i$ appears negatively in $C_j$.
For each $j \in [m]$, $c_j$ is adjacent to $a_4$, $a_{21}$, and $a_{24}$.   

\begin{figure}[ht]
\centering
\scalebox{0.8}{\begin{tikzpicture}
\node[draw,circle,double,inner sep = -0.28cm] (b1) at (0,2) {$a_4$} ;
\node[draw,circle,inner sep = -0.28cm] (i1) at (2,2.33) {$i_1$} ;
\node[draw,circle,inner sep = -0.28cm] (i2) at (4,2.66) {$i_2$} ;
\node[draw,circle,inner sep = -0.22cm] (v) at (6,3) {$v$} ;

\node[draw,circle,inner sep = -0.28cm] (v1) at (2.6,2) {$v_1$} ;
\node[draw,circle,inner sep = -0.31cm] (nv1) at (3.4,2) {$\overline{v_1}$} ;

\node[draw,circle,inner sep = -0.28cm] (v2) at (4.6,2) {$v_2$} ;
\node[draw,circle,inner sep = -0.31cm] (nv2) at (5.4,2) {$\overline{v_2}$} ;

\node[draw,circle,inner sep = -0.28cm] (v3) at (6.6,2) {$v_3$} ;
\node[draw,circle,inner sep = -0.31cm] (nv3) at (7.4,2) {$\overline{v_3}$} ;

\node[draw,circle,inner sep = -0.28cm] (v4) at (8.6,2) {$v_4$} ;
\node[draw,circle,inner sep = -0.31cm] (nv4) at (9.4,2) {$\overline{v_4}$} ;

\node[draw,circle,inner sep = -0.28cm] (b2) at (12,2) {$a_6$} ;

\draw (b1) -- (i1) -- (i2) -- (v) -- (b2) ;
\draw (v1) -- (nv1) -- (v) -- (v1) ; 
\draw (v2) -- (nv2) -- (v) -- (v2) ; 
\draw (v3) -- (nv3) -- (v) -- (v3) ; 
\draw (v4) -- (nv4) -- (v) -- (v4) ; 

\begin{scope}[yshift=-0.2cm]
\node[draw,circle,inner sep = -0.28cm] (c1) at (0,0) {$a_9$} ;
\node[draw,circle] (c2) at (1,0) {} ;
\node[draw,circle,inner sep = -0.28cm] (c3) at (2,0) {$c_1$} ;
\node[draw,circle] (c4) at (3,0) {} ;
\node[draw,circle] (c5) at (4,0) {} ;
\node[draw,circle,inner sep = -0.28cm] (c6) at (5,0) {$c_2$} ;
\node[draw,circle] (c7) at (6,0) {} ;
\node[draw,circle] (c8) at (7,0) {} ;
\node[draw,circle,inner sep = -0.28cm] (c9) at (8,0) {$c_3$} ;
\node[draw,circle] (c10) at (9,0) {} ;
\node[draw,circle] (c11) at (10,0) {} ;
\node[draw,circle,inner sep = -0.28cm] (c12) at (11,0) {$c_4$} ;
\node[draw,circle,inner sep = -0.33cm] (c13) at (12,0) {$a_{11}$} ;
\end{scope}

\draw (c3) -- (b1) -- (c6) ;
\draw (c9) -- (b1) -- (c12) ;

\draw (c1) -- (c2) -- (c3) -- (c4) -- (c5) -- (c6) -- (c7) -- (c8) -- (c9) -- (c10) -- (c11) -- (c12) -- (c13) ;

\draw (v1) -- (c3) ;
\draw (nv2) -- (c3) ;
\draw (v3) -- (c3) ;

\draw (v1) -- (c6) ;
\draw (v2) -- (c6) ;
\draw (nv4) -- (c6) ;

\draw (nv1) -- (c9) ;
\draw (nv3) -- (c9) ;
\draw (v4) -- (c9) ;

\draw (v2) -- (c12) ;
\draw (nv3) -- (c12) ;
\draw (v4) -- (c12) ;

\draw[densely dashed] (v)+(0,-0.7) ellipse (4.7cm and 1cm);
\path (v)+(4.8,-0.2) node {$P_1$};
\draw[densely dashed] (c7) ellipse (5.5cm and 0.7cm);
\path (c7)+(5,0.6) node {$P_2$};
\end{tikzpicture}}
\caption{$P_1$ and $P_2$ for the instance $\{x_1 \lor \neg x_2 \lor x_3\}, \{x_1 \lor x_2 \lor \neg x_4\}, \{\neg x_1 \lor x_3 \lor x_4\}, \{x_2 \lor \neg x_3 \lor x_4\}$.}
\label{fig:connected-variables-clauses}
\end{figure}
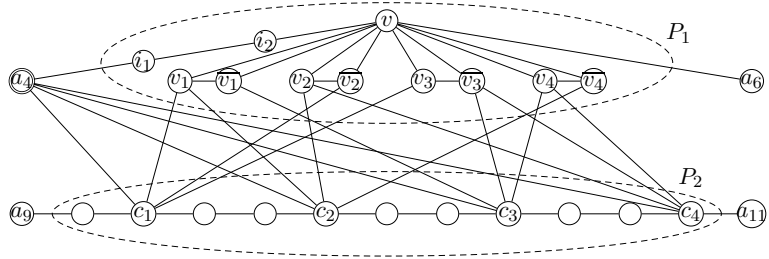

Intuitively, setting a literal to true consists of coloring the corresponding vertices with~$3$.
Therefore, a clause $C_j$ is satisfied if $c_j$ has a $3$ among its neighbors.
To actually satisfy a clause, one has to color $c_j$ with $4$ or higher.
Thus, $c_j$ must also see a $2$ in its neighborhood. 
We will show that the unique way of doing so is to color $p_{3j-2}$ with $2$, so all the clauses have to be checked along the path $P_2$.

{
We give, in Figure~\ref{fig:connected-coloring-variables-clauses}, a coloring of $P_1$ corresponding to a truth assignment of the instance SAT formula.
One can check that when going along $P_2$ all the $c_j$'s are colored with color~$4$.

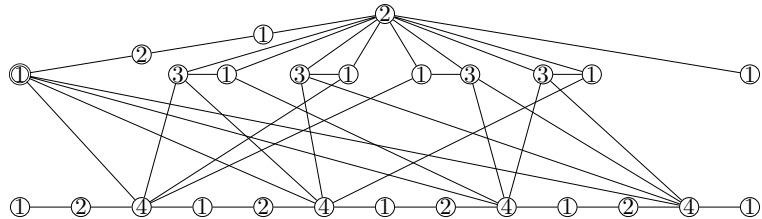
\begin{figure}[ht]
\centering
\scalebox{0.8}{\begin{tikzpicture}
\node[draw,circle,double,inner sep = -0.22cm] (b1) at (0,2) {$1$} ;
\node[draw,circle,inner sep = -0.22cm] (i1) at (2,2.33) {$2$} ;
\node[draw,circle,inner sep = -0.22cm] (i2) at (4,2.66) {$1$} ;
\node[draw,circle,inner sep = -0.22cm] (v) at (6,3) {$2$} ;

\node[draw,circle,inner sep = -0.22cm] (v1) at (2.6,2) {$3$} ;
\node[draw,circle,inner sep = -0.22cm] (nv1) at (3.4,2) {$1$} ;

\node[draw,circle,inner sep = -0.22cm] (v2) at (4.6,2) {$3$} ;
\node[draw,circle,inner sep = -0.22cm] (nv2) at (5.4,2) {$1$} ;

\node[draw,circle,inner sep = -0.22cm] (v3) at (6.6,2) {$1$} ;
\node[draw,circle,inner sep = -0.22cm] (nv3) at (7.4,2) {$3$} ;

\node[draw,circle,inner sep = -0.22cm] (v4) at (8.6,2) {$3$} ;
\node[draw,circle,inner sep = -0.22cm] (nv4) at (9.4,2) {$1$} ;

\node[draw,circle,inner sep = -0.22cm] (b2) at (12,2) {$1$} ;

\draw (b1) -- (i1) -- (i2) -- (v) -- (b2) ;
\draw (v1) -- (nv1) -- (v) -- (v1) ; 
\draw (v2) -- (nv2) -- (v) -- (v2) ; 
\draw (v3) -- (nv3) -- (v) -- (v3) ; 
\draw (v4) -- (nv4) -- (v) -- (v4) ; 

\begin{scope}[yshift=-0.2cm]
\node[draw,circle,inner sep = -0.22cm] (c1) at (0,0) {$1$} ;
\node[draw,circle,inner sep = -0.22cm] (c2) at (1,0) {$2$} ;
\node[draw,circle,inner sep = -0.22cm] (c3) at (2,0) {$4$} ;
\node[draw,circle,inner sep = -0.22cm] (c4) at (3,0) {$1$} ;
\node[draw,circle,inner sep = -0.22cm] (c5) at (4,0) {$2$} ;
\node[draw,circle,inner sep = -0.22cm] (c6) at (5,0) {$4$} ;
\node[draw,circle,inner sep = -0.22cm] (c7) at (6,0) {$1$} ;
\node[draw,circle,inner sep = -0.22cm] (c8) at (7,0) {$2$} ;
\node[draw,circle,inner sep = -0.22cm] (c9) at (8,0) {$4$} ;
\node[draw,circle,inner sep = -0.22cm] (c10) at (9,0) {$1$} ;
\node[draw,circle,inner sep = -0.22cm] (c11) at (10,0) {$2$} ;
\node[draw,circle,inner sep = -0.22cm] (c12) at (11,0) {$4$} ;
\node[draw,circle,inner sep = -0.22cm] (c13) at (12,0) {$1$} ;
\end{scope}

\draw (c3) -- (b1) -- (c6) ;
\draw (c9) -- (b1) -- (c12) ;

\draw (c1) -- (c2) -- (c3) -- (c4) -- (c5) -- (c6) -- (c7) -- (c8) -- (c9) -- (c10) -- (c11) -- (c12) -- (c13) ;

\draw (v1) -- (c3) ;
\draw (nv2) -- (c3) ;
\draw (v3) -- (c3) ;

\draw (v1) -- (c6) ;
\draw (v2) -- (c6) ;
\draw (nv4) -- (c6) ;

\draw (nv1) -- (c9) ;
\draw (nv3) -- (c9) ;
\draw (v4) -- (c9) ;

\draw (v2) -- (c12) ;
\draw (nv3) -- (c12) ;
\draw (v4) -- (c12) ;

\end{tikzpicture}}
\caption{A connected Grundy coloring such that all the $c_j$'s are colored with color at least~$4$.}
\label{fig:connected-coloring-variables-clauses}
\end{figure}
}

The constant gadget $W$ is depicted in Figure~\ref{fig:connected-constant}.
The waves between $a_4$ and $a_6$ and between $a_9$ and $a_{11}$ correspond,  respectively, to the gadgets encoding the variables ($P_1$) and the clauses ($P_2$) described above and drawn in Figure~\ref{fig:connected-variables-clauses}.
A connected Grundy coloring achieving color~$7$ is given in Figure~\ref{fig:connected-coloring-constant} provided that going from $a_9$ to $a_{11}$ can be done without coloring any vertex $c_j$ with color~$2$ or less.

\begin{figure}[ht]
\centering
\scalebox{0.8}{\begin{tikzpicture}
\node[draw,circle,inner sep = -0.31cm] (a1) at (0,0) {$a_1$} ;
\node[draw,circle,inner sep = -0.31cm] (a2) at (1,0) {$a_2$} ;
\node[draw,circle,inner sep = -0.31cm] (a3) at (1,1) {$a_3$} ;
\node[draw,circle,double,inner sep = -0.31cm] (a4) at (2,0) {$a_4$} ;
\node[draw,circle,inner sep = -0.31cm] (a5) at (2,1) {$a_5$} ;
\node[draw,circle,inner sep = -0.31cm] (a6) at (3,1) {$a_6$} ;
\node[draw,circle,inner sep = -0.31cm] (a7) at (2,2) {$a_7$} ;
\node[draw,circle,inner sep = -0.31cm] (a8) at (3,0) {$a_8$} ;
\node[draw,circle,inner sep = -0.31cm] (a9) at (4,0) {$a_9$} ;
\node[draw,circle,inner sep = -0.34cm] (a10) at (4,1) {$a_{10}$} ;
\node[draw,circle,inner sep = -0.34cm] (a11) at (5,1) {$a_{11}$} ;
\node[draw,circle,inner sep = -0.34cm] (a12) at (3,2) {$a_{12}$} ;
\node[draw,circle,inner sep = -0.34cm] (a13) at (4,2) {$a_{13}$} ;
\node[draw,circle,inner sep = -0.34cm] (a14) at (5,2) {$a_{14}$} ;
\node[draw,circle,inner sep = -0.34cm] (a15) at (4,3) {$a_{15}$} ;
\node[draw,circle,inner sep = -0.34cm] (a16) at (6,1) {$a_{16}$} ;
\node[draw,circle,inner sep = -0.34cm] (a17) at (7,1) {$a_{17}$} ;
\node[draw,circle,inner sep = -0.34cm] (a18) at (8,0) {$a_{18}$} ;
\node[draw,circle,inner sep = -0.34cm] (a19) at (8,1) {$a_{19}$} ;
\node[draw,circle,inner sep = -0.34cm] (a20) at (8,2) {$a_{20}$} ;
\node[draw,circle,double,inner sep = -0.34cm] (a21) at (7,2) {$a_{21}$} ;
\node[draw,circle,inner sep = -0.34cm] (a22) at (7,3) {$a_{22}$} ;
\node[draw,circle,inner sep = -0.34cm] (a23) at (6,3) {$a_{23}$} ;
\node[draw,circle,double,inner sep = -0.34cm] (a24) at (9,1) {$a_{24}$} ;
\node[draw,circle,inner sep = -0.34cm] (a25) at (9,3) {$a_{25}$} ;
\node[draw,circle,inner sep = -0.34cm] (a26) at (8,3) {$a_{26}$} ;
\node[draw,circle,inner sep = -0.34cm] (a27) at (6,4) {$a_{27}$} ;

\node[draw,circle,inner sep = -0.34cm] (a28) at (5.3,5) {$a_{28}$} ;
\node[draw,circle,inner sep = -0.34cm] (a29) at (4.6,6) {$a_{29}$} ;
\node[draw,circle,inner sep = -0.34cm] (a30) at (5.3,7) {$a_{30}$} ;
\node[draw,circle,inner sep = -0.34cm] (a31) at (6.7,7) {$a_{31}$} ;
\node[draw,circle,inner sep = -0.34cm] (a32) at (7.4,6) {$a_{32}$} ;
\node[draw,circle,inner sep = -0.34cm] (a33) at (6.7,5) {$a_{33}$} ;

\draw (a28) -- (a29) -- (a30) -- (a31) -- (a32) -- (a33) -- (a28) -- (a30) -- (a32) -- (a28) -- (a31) -- (a33) -- (a29) -- (a31) ;
\draw (a29) -- (a32) ;
\draw (a30) -- (a33) ;

\draw (a28) -- (a27) -- (a33) ;

\draw (a27) -- (a15) -- (a7) -- (a3) -- (a2) -- (a1) -- (a3) -- (a4) -- (a5) -- (a7) -- (a6) -- (a8) -- (a9) -- (a10) -- (a12) -- (a13) -- (a14) -- (a16) -- (a17) -- (a18) -- (a19) -- (a21) -- (a23) ;
\draw (a11) -- (a12) -- (a15) -- (a13) ;
\draw (a14) -- (a15) ;
\draw (a27) -- (a22) -- (a20) ;
\draw (a16) -- (a23) -- (a17) ;
\draw (a23) -- (a27) -- (a26) -- (a25) -- (a27) ;
\draw (a19) -- (a20) -- (a21) ;
\draw (a18) -- (a24) -- (a25) ;
\draw[snake it] (a4) -- (a6) ;
\draw[snake it] (a9) -- (a11) ;

\draw[rounded corners] (a4) -- (2.6,-0.5) --(9,-0.5) -- (9.4,1) -- (a25) ;
\end{tikzpicture}}
\caption{The constant gadget. The doubly circled vertices are adjacent to all the $c_j$'s ($j \in [m]$). }
\label{fig:connected-constant}
\end{figure}
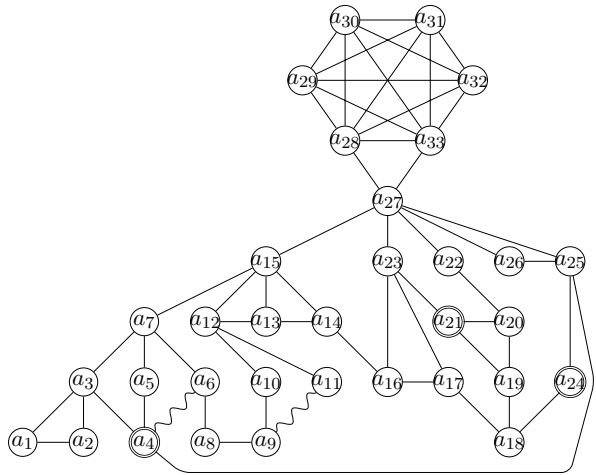

\begin{figure}[ht]
\centering

\scalebox{0.8}{\begin{tikzpicture}
\node[draw,circle,inner sep = -0.28cm] (a1) at (0,0) {$1$} ;
\node[draw,circle,inner sep = -0.28cm] (a2) at (1,0) {$2$} ;
\node[draw,circle,inner sep = -0.28cm] (a3) at (1,1) {$3$} ;
\node[draw,circle,double,inner sep = -0.28cm] (a4) at (2,0) {$1$} ;
\node[draw,circle,inner sep = -0.28cm] (a5) at (2,1) {$2$} ;
\node[draw,circle,inner sep = -0.28cm] (a6) at (3,1) {$1$} ;
\node[draw,circle,inner sep = -0.28cm] (a7) at (2,2) {$4$} ;
\node[draw,circle,inner sep = -0.28cm] (a8) at (3,0) {$2$} ;
\node[draw,circle,inner sep = -0.28cm] (a9) at (4,0) {$1$} ;
\node[draw,circle,inner sep = -0.28cm] (a10) at (4,1) {$2$} ;
\node[draw,circle,inner sep = -0.28cm] (a11) at (5,1) {$1$} ;
\node[draw,circle,inner sep = -0.28cm] (a12) at (3,2) {$3$} ;
\node[draw,circle,inner sep = -0.28cm] (a13) at (4,2) {$1$} ;
\node[draw,circle,inner sep = -0.28cm] (a14) at (5,2) {$2$} ;
\node[draw,circle,inner sep = -0.28cm] (a15) at (4,3) {$5$} ;
\node[draw,circle,inner sep = -0.28cm] (a16) at (6,1) {$1$} ;
\node[draw,circle,inner sep = -0.28cm] (a17) at (7,1) {$2$} ;
\node[draw,circle,inner sep = -0.28cm] (a18) at (8,0) {$1$} ;
\node[draw,circle,inner sep = -0.28cm] (a19) at (8,1) {$2$} ;
\node[draw,circle,inner sep = -0.28cm] (a20) at (8,2) {$1$} ;
\node[draw,circle,double,inner sep = -0.28cm] (a21) at (7,2) {$3$} ;
\node[draw,circle,inner sep = -0.28cm] (a22) at (7,3) {$2$} ;
\node[draw,circle,inner sep = -0.28cm] (a23) at (6,3) {$4$} ;
\node[draw,circle,double,inner sep = -0.28cm] (a24) at (9,1) {$2$} ;
\node[draw,circle,inner sep = -0.28cm] (a25) at (9,3) {$3$} ;
\node[draw,circle,inner sep = -0.28cm] (a26) at (8,3) {$1$} ;
\node[draw,circle,inner sep = -0.28cm] (a27) at (6,4) {$6$} ;

\node[draw,circle,inner sep = -0.28cm] (a28) at (5.3,5) {$1$} ;
\node[draw,circle,inner sep = -0.28cm] (a29) at (4.6,6) {$2$} ;
\node[draw,circle,inner sep = -0.28cm] (a30) at (5.3,7) {$3$} ;
\node[draw,circle,inner sep = -0.28cm] (a31) at (6.7,7) {$4$} ;
\node[draw,circle,inner sep = -0.28cm] (a32) at (7.4,6) {$5$} ;
\node[draw,circle,inner sep = -0.28cm] (a33) at (6.7,5) {$7$} ;

\draw (a28) -- (a29) -- (a30) -- (a31) -- (a32) -- (a33) -- (a28) -- (a30) -- (a32) -- (a28) -- (a31) -- (a33) -- (a29) -- (a31) ;
\draw (a29) -- (a32) ;
\draw (a30) -- (a33) ;

\draw (a28) -- (a27) -- (a33) ;

\draw (a27) -- (a15) -- (a7) -- (a3) -- (a2) -- (a1) -- (a3) -- (a4) -- (a5) -- (a7) -- (a6) -- (a8) -- (a9) -- (a10) -- (a12) -- (a13) -- (a14) -- (a16) -- (a17) -- (a18) -- (a19) -- (a21) -- (a23) ;
\draw (a11) -- (a12) -- (a15) -- (a13) ;
\draw (a14) -- (a15) ;
\draw (a27) -- (a22) -- (a20) ;
\draw (a16) -- (a23) -- (a17) ;
\draw (a23) -- (a27) -- (a26) -- (a25) -- (a27) ;
\draw (a19) -- (a20) -- (a21) ;
\draw (a18) -- (a24) -- (a25) ;
\draw[snake it] (a4) -- (a6) ;
\draw[snake it] (a9) -- (a11) ;

\draw[rounded corners] (a4) -- (2.6,-0.5) --(9,-0.5) -- (9.4,1) -- (a25) ;
\end{tikzpicture}}
\caption{A connected Grundy coloring of the constant gadget achieving color~$7$. The order is given by the sequence $(a_i)_{1 \leqslant i \leqslant 33}$.}
\label{fig:connected-coloring-constant}
\end{figure}
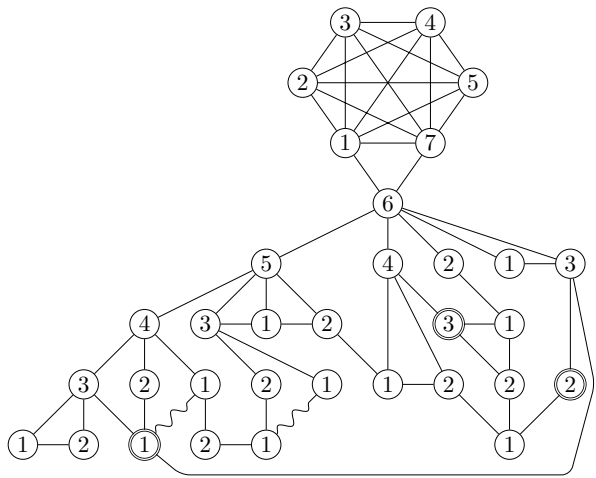

In the following claims, we use extensively Observation~\ref{obs:deg} which states that a vertex with degree~$d$ gets color at most~$d+1$.
We observe that coloring a vertex $z$ of degree~$d$ with color~$d+1$ is only useful if the ultimate goal is to achieve color~$d+1$.
Indeed, for $z$ to be colored with color $d+1$, all its neighbors have first to be colored (by each color from $1$ to $d$), which means that $z$ cannot be used thereafter.
Moreover, if one wants to color a neighbor $y$ of a vertex $x$ in order to color $x$ with a higher color, $y$ cannot receive a color greater than its degree $d(y)$.
Hence, the only vertices that could achieve color~$k$ are vertices of degree at least $k-1$ having at least one neighbor of degree at least $k-1$.

We call \emph{doubly circled vertices} the special vertices $a_4$, $a_{21}$ and $a_{24}$ (they are doubly circled in the figures).

\begin{foucClaim}\label{claim:a27}
To achieve color $7$, $a_{27}$ needs to be colored with color~$6$ (while for all $i \in [28,33]$, $a_i$ is still uncolored).
\end{foucClaim}

\appendixproof{Claim~\ref{claim:a27}}
{
\begin{proof}
One can achieve color~$7$ only in a vertex of degree at least~$6$ which has a neighbor of degree at least~$6$.
There are $m+7$ vertices of degree at least $6$: $a_{28}$ and $a_{33}$ (of degree 6), $a_{27}$ (of degree~$7$), all the $c_j$'s (of degree $8$), $v$ (of degree $2n+2$), $a_{24}$ (of degree $m+2$), $a_{21}$ (of degree $m+3$), and $a_4$ (of degree $m+4$).

As each vertex $c_j$ is adjacent to $a_4$, $a_{21}$ and $a_{24}$, we need to investigate the possibility of coloring with color~$7$, a vertex $c_j$, $a_4$, $a_{21}$, or $a_{24}$.
A vertex $c_j$ has two neighbors of degree~$2$ ($p_{3j-2}$ and $p_{3j}$; or $p_{3m-2}$ and $a_{11}$ in the special case of $c_m$), three neighbors of degree at most~$4$ (the three vertices corresponding to the literals of $C_j$) since a literal has at most two occurrences, and three vertices of degree more than $m+2$ ($a_4$, $a_{21}$, and $a_{24}$).
So, if no doubly circled vertex is colored yet, a vertex $c_j$ can be colored with a color at most~$5$. 
And if some doubly circled vertices are already colored but with always the same color, a vertex $c_j$ can be colored with a color at most~$6$ (when the shared color of the doubly circled vertices is~$5$). 

Let show that the three doubly circled vertices $a_4$, $a_{21}$, and $a_{24}$ cannot take two different colors both greater or equal to~$5$.
Indeed, suppose that two of those three vertices are colored with colors $p$ and $q$ such that $p<q$ and $p,q \geqslant 5$.
The doubly circled vertex colored with color~$q$ must have a vertex colored~$p$ in its neighborhood, but that color $p$ cannot come from a $c_j$ (since the vertex colored $p$ is adjacent to the $c_j$'s).
Thus, this color $p$ must come from another neighbor.
But, among all the neighbors of the doubly circled vertices which are not a vertex $c_j$, no vertex is of degree at least~$5$, a contradiction. 

From the last two paragraphs, we conclude that none of the vertices $a_4$, $a_{21}$, $a_{24}$, and the $c_j$'s can receive color~$7$.

The only other pairs of adjacent vertices both of degree at least~$6$ are the pairs of the triangle formed by $a_{27}$, $a_{28}$ and $a_{33}$.
We observe that $a_{27}$ is a cut-vertex whose removal disconnects the clique $K_6$ from the rest of the graph.
Hence, in a connected Grundy coloring, $a_{27}$ cannot get a color higher than~$6$ since its degree in one part of this cut is~$2$ and in the other part its degree is~$5$. 
Vertex $a_{33}$ (or by symmetry $a_{28}$) can be colored with color~$7$, but then $a_{27}$ has to be colored with color~$6$ otherwise it will lack a vertex colored~$6$ in its neighborhood.
The conclusion is that the only way to achieve color~$7$ is to color $a_{27}$ with color~$6$.
\end{proof}
}

\begin{foucClaim}\label{claim:first-line}
Vertices $a_{26}$, $a_{22}$, $a_{25}$, $a_{23}$, $a_{15}$ must receive color~$1$, $2$, $3$, $4$, $5$ respectively.
\end{foucClaim}
\appendixproof{Claim~\ref{claim:first-line}}
{
\begin{proof}
By Claim~\ref{claim:a27}, $a_{27}$ must be colored with color~$6$ before the clique $K_6$ is colored.
Thus, the five neighbors of $a_{27}$ which are not in the clique $K_6$ must get all the colors from~$1$ to~$5$.
Among those neighbors, the only vertex with degree~$5$ is $a_{15}$, so this vertex must get color~$5$.
Vertices $a_{23}$ and $a_{25}$ both have degree $4$ but for connectivity reasons $a_{26}$ cannot be colored before $a_{25}$, so $a_{25}$ cannot get a color higher than~$3$. 
Thus, $a_{23}$ must get color~$4$.
Vertex $a_{22}$ can bring a $1$ or a $2$ to $a_{27}$ while the pair $(a_{25},a_{26})$ can only bring the combinations $(1,2)$, $(2,1)$ or $(3,1)$.
Thus, the unique way to bring $1$, $2$ and $3$ to $a_{27}$ is that $a_{25}$ is colored~$3$, $a_{26}$ is colored~$1$, and $a_{22}$ is colored~$2$.
\end{proof}
}

\begin{foucClaim}\label{claim:a7}
Vertex $a_7$ must receive color~$4$.
\end{foucClaim}
\appendixproof{Claim~\ref{claim:a7}}
{
\begin{proof}
By Claim~\ref{claim:first-line},  $a_{15}$ has to receive color~$5$, so one of its four neighbors (apart from $a_{27}$) must receive color~$4$.
Only $a_7$ and $a_{12}$ have degree~$4$.
But $a_{12}$ cannot be colored~$4$ since its three neighbors $a_{10}$, $a_{11}$, and $a_{13}$ (apart from $a_{15}$) have only one neighbor which is neither $a_{12}$ nor $a_{15}$, so none of these vertices can bring color~$3$ to $a_{12}$.
\end{proof}
}

\begin{foucClaim}\label{claim:a3}
Vertex $a_3$ must receive color~$3$.
\end{foucClaim}
\appendixproof{Claim~\ref{claim:a3}}
{
\begin{proof}
By Claim~\ref{claim:a7},  $a_7$ must be colored~$4$.
Thus, one of its three neighbors $a_3$, $a_5$, and $a_6$ (apart from $a_{15}$) must receive color~$3$.
Vertices $a_3$ and $a_6$ have two neighbors apart from $a_7$.
But if $a_6$ is colored with color~$3$, then $a_4$ must be colored~$3$ to let colors~$1$ and~$2$ available for $a_3$ and $a_5$.
In that case, $a_3$ and $a_5$ would both receive color~$1$.
Another attempt is to color $a_1$ (or $a_2$) with~$1$, $a_3$ with $2$ but then $a_4$ has to be colored~$1$ and $a_5$ can no longer be colored~$1$.
Hence, only $a_3$ can be colored with~$3$.
\end{proof}
}

Claim~\ref{claim:a3} has further consequences: we must start the connected Grundy coloring by giving colors~$1$ and~$2$ to $a_1$ and $a_2$.
The only follow-up, for connectivity reasons, is then to color $a_3$ with color~$3$ and $a_4$ with color~$1$.
Thus, vertices $a_5$ and $a_6$ has to be colored with colors~$2$ and~$1$ respectively (so that $a_7$ can be colored~$4$).
As, by Claim~\ref{claim:first-line}, $a_{25}$ must receive color $3$, $a_{24}$ must receive color~$2$ (since $a_4$ has already color~$1$), so $a_{18}$ must be colored~$1$.

\begin{foucClaim}\label{claim:a21}
Vertex $a_{21}$ must receive color~$3$.
\end{foucClaim}
\appendixproof{Claim~\ref{claim:a21}}
{
\begin{proof}
By Claim~\ref{claim:first-line}, $a_{23}$ must get color~$4$, so its three neighbors apart from $a_{27}$ must receive colors~$1$, $2$ and~$3$. As $a_{20}$ must be colored~$1$ (in order to color $a_{22}$ with color~$2$), $a_{21}$ will be colored~$2$ or~$3$. 
Suppose $a_{21}$ is colored~$2$.
Then, $\{a_{16},a_{17}\}$ must be colored~$1$ and~$3$.
Vertex $a_{17}$ cannot be colored~$1$ since $a_{18}$ must get color~$1$, so $a_{16}$ must get color~$1$ and $a_{17}$, color~$3$.
In that case, $a_{17}$ lacks a vertex colored~$2$ in its neighborhood, and therefore cannot be colored~$3$.
So, $a_{21}$ has to be colored~$3$ and $a_{19}$ has to be colored~$2$ (since $a_{20}$ has to get color~$1$).
\end{proof}

A further consequence of Claim~\ref{claim:a21} is that $a_{16}$ must be colored~$2$ and $a_{17}$ must be colored~$1$ (the reverse being impossible, since $a_{18}$ has to be colored~$1$).
More importantly, we have now established that all the colored $c_j$'s (for each $j \in [m]$) have to be colored with color~$4$ or higher.
Indeed, we recall that the three doubly circled vertices (adjacent to all the $c_j$'s) $a_4$, $a_{21}$, and $a_{24}$ must respectively get color $1$, $3$, and~$2$. 
In particular, after having colored $a_1$ up to $a_4$, we cannot short-cut to $P_2$ since it will color a $c_j$ with~$2$, so we have to color $i_1$ with~$2$, $i_2$ with~$1$, and $v$ with~$2$.
As $v$ must be colored with color~$2$, none of the vertices encoding the literals can have color~$2$, so, again, we cannot short-cut from $P_1$ to $P_2$ otherwise, we would color a $c_j$ with~$2$.
Then, we can partly (or entirely) color $P_1$ but we have to color $a_6$ with~$1$, $a_8$ with $2$, and $a_9$ with~$1$. 
As $a_9$ is forced to get color $1$, $a_{10}$ has to give a $2$ to $a_{12}$ and $a_{11}$ is therefore forced to give color~$1$ to $a_{12}$.
}

\begin{foucClaim}\label{claim:a11}
The unique way of coloring $a_{11}$ with color~$1$ without coloring any vertex  $c_j$ with color~$1$, $2$, or~$3$ is to color all the $c_j$'s for each $j \in [m]$.
\end{foucClaim}
\appendixproof{Claim~\ref{claim:a11}}
{
\begin{proof}
We recall that the first four vertices to be colored are $a_1$, $a_2$, $a_3$, 
When going along the path from $a_9$ to $a_{11}$, the only vertex colored~$2$ which can be in the neighborhood of $c_j$ is $p_{3j-2}$. 
Indeed, we recall that the vertices encoding literals cannot be colored~$2$ since they are all adjacent to $v$ which is colored~$2$. 
By induction, as the only way to color vertex $p_{3j-2}$ with color~$2$ before $c_j$ is colored, is to color $c_{j-1}$, we have to color all the vertices in the path $P_2$. 
\end{proof}
}

We remark that opposite literals are adjacent, so for each $i \in [n]$, only one of $v_i$ and $\overline{v_i}$ can be colored with color~$3$.
We interpret coloring $v_i$ with~$3$ as setting $x_i$ to true and coloring $\overline{v_i}$ with~$3$ as setting $x_i$ to false.

\begin{foucClaim}\label{claim:cj}
To color each $c_j$ ($j \in [m]$) of the path $P_2$ with a color at least~$4$, the SAT formula must be satisfiable.  
\end{foucClaim}
\appendixproof{Claim~\ref{claim:cj}}
{
\begin{proof}
Each $c_j$ must have a vertex colored~$3$ in its neighborhood, but this vertex cannot be $a_{21}$ since this vertex cannot be colored yet. 
We recall that $a_{21}$ will be colored after $a_{11}$ is colored.
Thus, the vertex colored~$3$ can only belong to a set $\{v_i,\overline{v_i}\}$ encoding a literal $l_i$ such that $l_i$ is in $C_j$.
Indeed, the neighbors $p_{3j-2}$ and $p_{3j}$ are of degree~$2$ and $a_4$ is already colored~$1$.
Hence, there must be an assignment of the variables such that all the clauses of $\mathcal C$ are satisfied. 
As one cannot color both $v_i$ and $\overline{v_i}$ with color~$3$, the coloring of $P_1$ does constitute a feasible assignment.
\end{proof}
}

So, to achieve color~$7$ in a connected Grundy coloring, the \Sat formula must be satisfiable. The reverse direction consists of completing the coloring by giving $a_{13}$ color~$1$ and $a_{14}$ color~$2$, as shown in Figures~\ref{fig:connected-coloring-variables-clauses} and \ref{fig:connected-coloring-constant}.
\end{proof}

\section{Concluding remarks and questions}\label{sec:conclu}

To conclude this article, we suggest some questions which might be useful as a guide for further studies.

We have given two $O^*(c^n)$ exact algorithms for \PBG and \wGC with $c$ a constant, but we do not know whether such an algorithm exists for \cGC.

There is a gap between the $O^*(2^{O(wk)})$ algorithm of~\cite{TP97} and the lower bound of Theorem~\ref{thm:tw-LB-ETH}. Is \PBG in $\fpt$ when parameterized by the treewidth $w$? A simpler question is whether there is a better $O^*(f(k,w))$ algorithm (as noted in Observation~\ref{obs:tw-fpt}, if $f(k,w)=k^{O(w)}$ we directly obtain an $\fpt$ algorithm for parameter $w$). It could also be simpler to first determine whether \PBG is in $\fpt$ when parameterized by the feedback vertex set number (it is easy to see that it is in $\fpt$ when parameterized by the vertex cover number).

\PBG (parameterized by the number of colors) is in $\xp$, and we showed it to be in $\fpt$ on many important graph classes. Yet, the central question whether it is generally in $\fpt$ or $\wone$-hard remains unsolved. A perhaps more accessible research direction is to settle this question on bipartite graphs. 

\fullversion{It would also be interesting to determine the (classic) complexity of \PBG on interval graphs and chordal bipartite graphs (the latter question being asked in~\cite{thesis}).}\confversion{It would also be interesting to determine the (classic) complexity of \PBG on interval graphs.} Also, we saw that the algorithm of~\cite{TP97} implies a quasi-polynomial algorithm for planar (even apex-minor-free) graphs, making it unlikely to be $\np$-complete on this class. Is there a polynomial-time algorithm for such graphs?

We also recall that the exact polynomial-time approximation complexity of \PBG and \wGC is unknown; it is known that they admit no PTAS~\cite{GV97,Kortsarz}, but no $o(n)$-factor polynomial-time approximation algorithm is known. Recently, it was proved that  for any $r>1$, \PBG can be $r$-approximated in time $O^*(c^{n\log r/r})$ for some constant $c$, where $n$ is the graph's order~\cite{BLP15}. The approximation complexity of \cGC has not yet been studied.

Regarding \cGC, we showed that it remains $\np$-complete even for $k=7$. As \cGC is polynomial-time solvable for $k\leqslant 3$, its complexity status for $k=4,5,6$ remains open. It would also be interesting to study \cGC on restricted graph classes.

\fullversion{\paragraph{Acknowledgments.} We thank Sundar Vishwanathan for sending us the paper~\cite{GV97}.}

\bibliographystyle{plain}
\bibliography{biblio}

\begin{thebibliography}{10}

\bibitem{Alon1995}
Noga Alon, Raphael Yuster, and Uri Zwick.
\newblock Color-coding.
\newblock {\em Journal of the ACM}, 42(4):844--856, 1995.

\bibitem{AL12}
J{\'u}lio Araujo and Cl{\'a}udia Linhares~Sales.
\newblock On the {G}rundy number of graphs with few {P}$_{\mbox{4}}$'s.
\newblock {\em Discrete Applied Mathematics}, 160(18):2514 -- 2522, 2012.

\bibitem{ConnectedGrundy}
Fabrício Benevides, Victor Campos, Mitre Dourado, Simon Griffiths, Robert
  Morris, Leonardo Sampaio, and Ana Silva.
\newblock Connected greedy colourings.
\newblock In {\em LATIN 2014}, volume 8392 of {\em LNCS}, pages 433--441.
  Springer, 2014.

\bibitem{Berge1973}
Claude Berge.
\newblock {\em Graphs and Hypergraphs}.
\newblock North Holland, 1973.

\bibitem{BjorklundHKK07}
Andreas Bj\"{o}rklund, Thore Husfeldt, Petteri Kaski, and Mikko Koivisto.
\newblock Fourier meets {M}\"{o}bius: fast subset convolution.
\newblock In {\em Proc. 39th Annual ACM Symposium on Theory of Computing}, STOC
  '07, pages 67--74. ACM, 2007.

\bibitem{Bodlaender1993}
Hans~L. Bodlaender.
\newblock A tourist guide through treewidth.
\newblock {\em Acta Cybernetica}, 11(1-2):1--21, 1993.

\bibitem{BLP15}
Edouard Bonnet, Michael Lampis, and Vangelis~Th. Paschos.
\newblock Time-approximation trade-offs for inapproximable problems.
\newblock {\em CoRR}, abs/1502.05828, 2015.

\bibitem{Chang}
Gerard~Jennhwa Chang and Hsiang-Chun Hsu.
\newblock First-fit chromatic numbers of $d$-degenerate graphs.
\newblock {\em Discrete Mathematics}, 312(12–13):2088 -- 2090, 2012.

\bibitem{Christen}
Claude~A. Christen and Stanley~M. Selkow.
\newblock Some perfect coloring properties of graphs.
\newblock {\em Journal of Combinatorial Theory, Series B}, 27(1):49 -- 59,
  1979.

\bibitem{CyganPP13}
Marek Cygan, Marcin Pilipczuk, and Michal Pilipczuk.
\newblock Known algorithms for {EDGE} {CLIQUE} {COVER} are probably optimal.
\newblock In {\em Proceedings of the Twenty-Fourth Annual {ACM-SIAM} Symposium
  on Discrete Algorithms, {SODA} 2013, New Orleans, Louisiana, USA, January
  6-8, 2013}, pages 1044--1053, 2013.

\bibitem{DemaineSODA}
Erik~D. Demaine and MohammadTaghi Hajiaghayi.
\newblock Equivalence of local treewidth and linear local treewidth and its
  algorithmic applications.
\newblock In {\em Proc. 15th Annual ACM-SIAM Symposium on Discrete Algorithms},
  SODA '04, pages 840--849. SIAM, 2004.

\bibitem{DF99}
Rodney~G. Downey and Michael~R. Fellows.
\newblock {\em Fundamentals of Parameterized Complexity}.
\newblock Springer, 2013.

\bibitem{Eppstein}
David Eppstein.
\newblock Diameter and treewidth in minor-closed graph families.
\newblock {\em Algorithmica}, 27(3):275--291, 2000.

\bibitem{Erdos1987}
Paul Erd\H{o}s, W.~R. Hare, Stephen~T. Hedetniemi, and Renu Laskar.
\newblock On the equality of the {G}rundy and ochromatic numbers of a graph.
\newblock {\em Journal of Graph Theory}, 11(2):157--159, 1987.

\bibitem{FG01}
J{\"o}rg Flum and Martin Grohe.
\newblock Fixed-parameter tractability, definability, and model-checking.
\newblock {\em SIAM Journal on Computing}, 31(1):113--145, 2001.

\bibitem{FominGPS08}
Fedor~V. Fomin, Fabrizio Grandoni, Artem~V. Pyatkin, and Alexey~A. Stepanov.
\newblock Combinatorial bounds via measure and conquer: Bounding minimal
  dominating sets and applications.
\newblock {\em {ACM} Transactions on Algorithms}, 5(1), 2008.

\bibitem{GV97}
Navin Goyal and Sundar Vishwanathan.
\newblock {NP}-completeness of undirected {G}rundy numbering and related
  problems.
\newblock Manuscript, 1997.

\bibitem{Grundy39}
Patrick~Michael Grundy.
\newblock Mathematics and games.
\newblock {\em Eureka}, 2:6--8, 1939.

\bibitem{GL88}
Andr{\'a}s Gy\'arf\'as and Jen{\"o} Lehel.
\newblock On-line and first fit colorings of graphs.
\newblock {\em Journal of Graph Theory}, 12(2):217--227, 1988.

\bibitem{HMY12}
Fr{\'e}d{\'e}ric Havet, Ana~Karolinna Maia, and Min-Li Yu.
\newblock {Complexity of greedy edge-colouring}.
\newblock Research report RR-8171, INRIA, December 2012.

\bibitem{HSj}
Fr{\'e}d{\'e}ric Havet and Leonardo Sampaio.
\newblock On the {G}rundy and {\it b}-chromatic numbers of a graph.
\newblock {\em Algorithmica}, 65(4):885--899, 2013.

\bibitem{HHB82}
Sandra~M. Hedetniemi, Stephen~T. Hedetniemi, and Terry Beyer.
\newblock A linear algorithm for the {G}rundy (coloring) number of a tree.
\newblock {\em Congressus Numerantium}, 36:351--363, 1982.

\bibitem{ImpagliazzoETH}
Russell Impagliazzo, Ramamohan Paturi, and Francis Zane.
\newblock Which problems have strongly exponential complexity?
\newblock {\em Journal of Computer and System Sciences}, 63(4):512--530,
  December 2001.

\bibitem{JansenLL13}
Klaus Jansen, Felix Land, and Kati Land.
\newblock Bounding the running time of algorithms for scheduling and packing
  problems.
\newblock In {\em Algorithms and Data Structures - 13th International
  Symposium, {WADS} 2013, London, ON, Canada, August 12-14, 2013. Proceedings},
  pages 439--450, 2013.

\bibitem{Jansen2013}
Klaus Jansen, Felix Land, and Kati Land.
\newblock Bounding the running time of algorithms for scheduling and packing
  problems.
\newblock Technical Report 1302, Institute of Computer Science, University of
  Kiel, 2013.

\bibitem{Kierstead2011}
Hal~A. Kierstead and Karin~R. Saoub.
\newblock First-fit coloring of bounded tolerance graphs.
\newblock {\em Discrete Applied Mathematics}, 159(7):605--611, 2011.

\bibitem{Kortsarz}
Guy Kortsarz.
\newblock A lower bound for approximating {G}rundy numbering.
\newblock {\em Discrete Mathematics {\&} Theoretical Computer Science}, 9(1),
  2007.

\bibitem{Lawler1976}
Eugene~L. Lawler.
\newblock A note on the complexity of the chromatic number problem.
\newblock {\em Information Processing Letters}, 5(3):66 -- 67, 1976.

\bibitem{LokshtanovSODA}
Daniel Lokshtanov, D\'{a}niel Marx, and Saket Saurabh.
\newblock Known algorithms on graphs of bounded treewidth are probably optimal.
\newblock In {\em Proc. 22nd Annual ACM-SIAM Symposium on Discrete Algorithms},
  SODA '11, pages 777--789. SIAM, 2011.

\bibitem{Mader67}
Wolfgang Mader.
\newblock Homomorphieeigenschaften und mittlere kantendichte von graphen.
\newblock {\em Mathematische Annalen}, 174(4):265--268, 1967.

\bibitem{MoonMoser}
J.~Moon and L.~Moser.
\newblock {On cliques in graphs}.
\newblock {\em Israel Journal of Mathematics}, 3(1):23--28, March 1965.

\bibitem{FirstFitInterval}
N.~S. Narayanaswamy and R.~Subhash Babu.
\newblock A note on first-fit coloring of interval graphs.
\newblock {\em Order}, 25(1):49--53, 2008.

\bibitem{Nie06}
Rolf Niedermeier.
\newblock {\em Invitation to Fixed-Parameter Algorithms}.
\newblock Oxford University Press, February 2006.

\bibitem{Papadimitriou1994}
Christos~H. Papadimitriou.
\newblock {\em Computational complexity}.
\newblock Addison-Wesley, 1994.

\bibitem{thesis}
Leonardo Sampaio.
\newblock {\em Algorithmic aspects of graph colouring heuristics}.
\newblock PhD thesis, University of Nice-Sophia Antipolis, November 2012.

\bibitem{Simmons}
Gustavus~J. Simmons.
\newblock The ochromatic number of graphs.
\newblock {\em Graph Theory Newsletters}, 11(6):2--3, 1982.

\bibitem{TP97}
Jan~Arne Telle and Andrzej Proskurowski.
\newblock Algorithms for vertex partitioning problems on partial {\it k}-trees.
\newblock {\em SIAM Journal on Discrete Mathematics}, 10(4):529--550, 1997.

\bibitem{Zaker-cobip}
Manouchehr Zaker.
\newblock The {G}rundy chromatic number of the complement of bipartite graphs.
\newblock {\em Australasian Journal of Combinatorics}, 31:325--329, 2005.

\bibitem{Zaker}
Manouchehr Zaker.
\newblock Results on the {G}rundy chromatic number of graphs.
\newblock {\em Discrete Mathematics}, 306(23):3166--3173, 2006.

\end{thebibliography}


\end{document}